\renewcommand{\emph}[1]{{\sl #1}}
\newtheorem{theorem}{Theorem}[section]
\newtheorem{proposition}[theorem]{Proposition}
\newtheorem{lemma}[theorem]{Lemma}
\newtheorem{definition}[theorem]{Definition}
\newtheorem{remark}[theorem]{Remark}
\numberwithin{equation}{section}
\DeclareFontFamily{U}{BOONDOX-calo}{\skewchar\font=45 }
\DeclareFontShape{U}{BOONDOX-calo}{m}{n}{
<-> s*[1.05] BOONDOX-r-calo}{}
\DeclareFontShape{U}{BOONDOX-calo}{b}{n}{
<-> s*[1.05] BOONDOX-b-calo}{}
\DeclareMathAlphabet{\mathcalboondox}{U}{BOONDOX-calo}{m}{n}
\SetMathAlphabet{\mathcalboondox}{bold}{U}{BOONDOX-calo}{b}{n}
\DeclareMathAlphabet{\mathbcalboondox}{U}{BOONDOX-calo}{b}{n} 
\newcommand \epss {\eps_\star}
\renewcommand {\Bbb}{\mathbb{B}}
\newcommand {\Cbb}{\mathbb C}
\newcommand {\Gbb}{\mathbb G}
\newcommand {\Mbb}{\mathbb M}
\newcommand \wR {\tensor[^{(w)}]{}{}  R}
\newcommand {\us}{\slashed{u}}
\newcommand \Rwave {\tensor[^{(w)} ]{R}{^\star}}
\newcommand \gd g  
\newcommand \gconf g                 
\newcommand \kd {{k}}                     
\newcommand \nablad    \nabla 
\newcommand \givenCurv {\widetilde S}
\newcommand \givennabla {\widetilde \nabla} 
\newcommand \givenBox {\widetilde \Box}
\newcommand \givenVol {\widetilde{\textrm{Vol}}}
\newcommand \givenT {\widetilde{T}}
\newcommand \givenG {\widetilde{\text{G}}}
\newcommand \givenRic {\widetilde{\textrm{Ric}}}
\newcommand \givenScal {\widetilde{\text{R}}}
\newcommand \givenR {\widetilde{\text{R}}}
\newcommand \FCenergy {\textbf{ FC}}
\newcommand \FMenergy {\textbf{ FM}}
\newcommand \FSenergy {\textbf{ FS}}
\newcommand \NMenergy {\textbf{ NM}}
\newcommand \NCenergy {\textbf{ NC}}
\newcommand \NSenergy {\textbf{ NS}}
\newcommand \fTime T        
\newcommand \Minsk \eta
\newcommand \rhoH    {r^{\mathcal H}}    
\newcommand \rhoE     {r^{\mathcal E}}   
\newcommand \Fenergy      {\mathbf F}              
\newcommand \Eenergy      {\mathbf E}           
\newcommand \EMenergy{\mathbf {EM}} 
\newcommand \ECenergy{\mathbf {EC}} 
\newcommand \ESenergy{\mathbf {ES}} 
\newcommand \expeta \sigma 
\newcommand \E {\mathcal{E}} 
\newcommand \ME {\mathcal{EM}} 
\newcommand \EM {\mathcal{EM}} 
\renewcommand \H {\mathcal H} 
\newcommand \M {\mathcal M} 
\newcommand \N {\mathcal N} 
\newcommand \Mscr       {\mathscr M} 
\newcommand \MH 		{\Mcal^{\mathcal{H}}} 
\newcommand \MME 		{\Mcal^{\mathcal{EM}}}  
\newcommand \MM 		{\Mcal^{\mathcal{M}}} 
\newcommand \MMEfar 	{\Mcal^{\textbf{far}}} 
\newcommand \MMEnear  	{\Mcal^{\textbf{near}}} 
\newcommand \Mnear {\Mcal^\textbf{near}}
\newcommand \Mfar{\Mcal^\textbf{far}}
\newcommand{\RR}{\mathbb{R}}
\newcommand{\del}{\partial}
\newcommand{\Mcal}{\mathcal{M}}
\newcommand{\Tr}{\text{Tr}}
\newcommand{\Gammaj}{\widetilde{\Gamma}}
\newcommand{\hfrak}{\mathfrak{h}}
\newcommand{\ffrak}{\mathfrak{f}}
\newcommand \Ufrak {\mathfrak U}
\newcommand{\Ebf}{{\bf E}}
\newcommand{\nablab}{\overline{\nabla}}
\newcommand \sourterm   H 
\newcommand \coeffi \kappa 
\newcommand \bse {\begin{subequations}}
\newcommand \ese {\end{subequations}}
\newcommand \giveng {{\widetilde{g}}}
\newcommand \Sch {S}
\newcommand \Boxt {\widetilde \Box}
\newcommand \Lscr{\mathscr{L}}
\newcommand \Sbb {\mathbb S}
\newcommand \la \langle
\newcommand \ra \rangle
\newcommand \bei  {\begin{itemize}}
\newcommand \eei {\end{itemize}}
\newcommand \BoxChapeau {\widehat \Box}
\newcommand \BoxChapeaud {\BoxChapeau} 
\newcommand \Lcal {\mathcal L}
\newcommand \Hb{\overline H}
\newcommand \Hcal {\mathcal H}
\newcommand \delu {\underline{\del}}
\newcommand \Hu {\underline{H}}
\newcommand {\gu}{\underline{g}}
\newcommand {\eps} \epsilon
\let\oldmarginpar\marginpar
\renewcommand\marginpar[1]{\- \oldmarginpar[\raggedleft\footnotesize #1]%
{\raggedright\footnotesize #1}}
\newcommand \phizerod {{\phi_0}}
\newcommand \phioned {{\phi_1}}
\newcommand \TJ T
\newcommand \Gammad  \Gamma    
\newcommand \Ncal {\mathcal N}
\newcommand \Mtran {\Mcal^\text{tran}}
\newcommand \Mext {\Mcal^\text{ext}}
\newcommand \crochet \vartheta  
\newcommand \delH {\del^{\mathcal H}} 
\newcommand \delsH {\slashed \del^{\mathcal H}}
\newcommand \PsiH {{\Psi^{\mathcal H}}{}}
\newcommand {\usH} {\slashed{u}^{\mathcal{H}}}
\newcommand \delN {\del^{\mathcal{N}}}
\newcommand \delsN {\slashed \del^{\mathcal{N}}}
\newcommand \PsiN {{\Psi^{\mathcal N}{}}}  
\newcommand \PhiN {{\Phi^{\mathcal N}}{}}
\newcommand \HN {H^{\mathcal{N}}{}}
\newcommand \hN {h^{\mathcal{N}}}
\newcommand \delsEH {\slashed {\del}^{\mathcal{EH}}}
\newcommand \delsME {\slashed {\del}^{\mathcal{EM}}} 
\newcommand \Fbb {\mathbb F}
\newcommand \Pbb {\mathbb P}
\newcommand \Qbb {\mathbb Q}
\newcommand \SbbME{\mathbb S^{\mathcal{EM}}}
\newcommand {\Lbb} {\mathbb{L}}
\newcommand \Ibb	{\mathbb I}  	
\newcommand \init {\textbf{init}}
\newcommand \dive {\text{div} \hskip.05cm}
\newcommand \delEH {\del^{\mathcal{E \hskip-.06cm H}}} 
\newcommand \uts   {{\slashed u}^\Ncal}
\newcommand \delus  \delsH
\newcommand \delts  \delsN
\renewcommand \gu {g^\Hcal{}}
\renewcommand \Hu {H^\Hcal{}}
\newcommand \lbf {\mathbf l}
\newcommand \ord {\textbf{ord}}
\renewcommand \deg {\textbf{deg}}
\newcommand \Mink {\textbf{Mink}}
\newcommand \rank {\textbf{rank}}
\newcommand \gMink {g_\textbf{Mink}}
\renewcommand \Sch {\textbf{Sch}}
\newcommand \near {\textbf{near}}
\newcommand \rr {{rr}}
\newcommand \glue {{}}                 
\newcommand \easy {\textbf{easy}}
\newcommand \hard {\textbf{hard}}
\newcommand \LOmega {Y_\text{rot}} 
\newcommand \exposant \rho
\newcommand \vecnnu \nu
\newcommand \notrelapse L
\newcommand \sour {\textbf{sour}}
\newcommand \eff {\textbf{eff}}
\newcommand \be {\begin{equation}}
\newcommand \ee {\end{equation}} 
\newcommand \bel {\begin{equation}\label} 
\begin{document}

\title{The Euclidean-hyperboloidal foliation method. 
\\
Application to f(R) modified gravity\footnotetext{$^1$Laboratoire Jacques-Louis Lions, Sorbonne Universit\'e 
and Centre National de la Recherche Scientifique, 4~Place Jussieu, 75252 Paris, France. 
Email: {\sl contact@philippelefloch.org}.
\newline 
$^2$School of Mathematics and Statistics, Xi'an Jiaotong University, Xi'an, 710049 Shaanxi, People's Republic of China. 
Email: {\sl yuemath@mail.xjtu.edu.cn} 
\hfill 
\newline
To appear in: {\bf General Relativity and Gravitation} for the Issue \it ``Hyperboloidal foliations in the era of gravitational wave astronomy: from mathematical relativity to astrophysics''. 
}
} 

\author{\large Philippe G. LeFloch$^1$ and Yue Ma$^2$} 

\date{} 

\maketitle 

\begin{abstract}    
This paper is a part of a series devoted to the Euclidean-hyperboloidal foliation method introduced by the authors for investigating the global existence problem associated with nonlinear systems of coupled wave-Klein-Gordon equations with small data. This method was developed especially for investigating the initial value problem for the Einstein-massive field system in wave gauge. Here, we study the (fourth-order) field equations of $f(R)$ modified gravity and investigate the global dynamical behavior of the gravitational field in the near-Minkowski regime. We establish the existence of a globally hyperbolic Cauchy development approaching Minkowski spacetime (in spacelike, null, and timelike directions), when the initial data set is sufficiently close to an asymptotically Euclidean and spacelike hypersurface in Minkowski spacetime. We cast the (fourth-order) $f(R)$-field equations in the form of a second-order wave-Klein-Gordon system, which has an analogous structure to the Einstein-massive field system but, in addition, involves a (possibly small) effective mass parameter. We establish the nonlinear stability of the Minkowski spacetime in the context of $f(R)$ gravity, when the integrand $f(R)$ in the action functional can be taken to be arbitrarily close to the integrand $R$ of the standard Hilbert-Einstein action. 
\end{abstract}


{\small

\setcounter{secnumdepth}{2}
\setcounter{tocdepth}{2} 

\tableofcontents
}
  

\section{Introduction} 
\label{section=1}

\paragraph{Global evolution for a model of modified gravity.}

The field equations of $f(R)$ gravity, an extension of Einstein's gravity theory, have received only limited attention by the mathematical community, as far as the global existence and stability theory of spacetimes is concerned. In the present paper, our aim is to initiate the study of the global evolution problem and put this theory on solid mathematical grounds. Building upon recent advances on the analysis of nonlinear wave-Klein-Gordon systems, we apply here the {\sl Euclidean-hyperboloidal foliation method} introduced by the authors in recent years (see references below). More generally, this method is relevant in order to deal with coupled systems of nonlinear wave and Klein-Gordon equations. Our main task in the present paper is to formulate the $f(R)$-gravity field equations in the form of a second-order system of coupled partial differential equations, investigate the differential structure of this system, and control its nonlinearities in suitably weighted Sobolev norms. 

In seeking global existence and stability results, a primary challenge lies in the coupling between wave equations and Klein-Gordon equations, for which quite different arguments of proof were proposed in the literature. The partial differential equations under consideration are nonlinear in nature, and global existence results can only be achieved if phenomena such as formation of singularities, and gravitational collapse or coordinate singularities in the context of gravity, are avoided. A major observation is that, in the evolution in the vicinity of the Minkowski spacetime, sufficiently strong {\sl dispersion effects} take place, so that global-in-time solutions exist. The Euclidean-hyperboloidal foliation method provides one with the setting in which to address these issues and, in turn, prove  (almost) {\sl sharp decay estimates} in timelike, null, and spacelike directions. 
 
While Einstein's theory goes back to 1915, the $f(R)$-theory was formulated much later in 1970 by Buchdahl \cite{Buchdahl} and attracted a lot of attention by physicists; cf.~\cite{Bhattacharyya-2022,BENO,CNO,CNOT,Garcia,MagnanoSoko,Nojiri,SalgadoMartinez,Sotiriou-2010,Woodard-2007} and the earlier work by Brans and Dicke~\cite{BransDicke} and Starobinsky~\cite{Starobinsky,Starobinsky-1980}. In this theory, the integrand of the Hilbert-Einstein functional, i.e.~the spacetime scalar curvature $R$  is replaced by a nonlinear function $f(R)$ (cf.~\eqref{equa:action1}, below).  We especially refer the reader to Nojiri et al.~\cite{Nojiri} and Sotiriou and Faraoni~\cite{Sotiriou-2010} for a review from a physics perspective, as well as LeFloch and Ma's monograph~\cite{PLF-YM-memoir} for the local-in-time theory for the Cauchy problem. 


\paragraph{Global stability of Minkowski spacetime.}

We are thus interested in the global future evolution from a spacelike hypersurface which is assumed to be a small perturbation of an asymptotically flat and spacelike slice of Minkowski spacetime. Suitable data are imposed on such a hypersurface and we seek to prove that these small initial disturbances decay in time,  
so that the curved spacetime geometry eventually converges to the (flat) Minkowski geometry.
 This problem for the vacuum Einstein equations was first solved by Christodoulou and Klainerman \cite{CK}. Lindblad and Rodnianski later provided an alternative proof using wave coordinates \cite{LR1,LR2}. Bieri addressed solutions with reduced decay at spacelike infinity \cite{Bieri,BieriZipser}, and, most recently, further important advances are due to Hintz and Vasy \cite{HintzVasy1,HintzVasy2} as well as Shen~\cite{Shen}. The techniques of proof in these papers apply to \emph{vacuum} spacetimes as well \emph{massless} matter fields.

Interestingly, the global evolution of \emph{self-gravitating massive fields} was tackled by mathematicians only in recent years, and there is now a growing body of literature on this subject. Regarding Einstein's field equations, the nonlinear stability of Minkowski spacetime in presence of a Klein-Gordon scalar field was achieved first for perturbations that coincide with the Schwarzschild spacetime outside a (large) light cone determined from the initial hypersurface and ``propagating'' within the spacetime. To this end, LeFloch and Ma~\cite{PLF-YM-book,PLF-YM-two} introduced a new methodology (the hyperboloidal foliation method) based on wave gauge, while in simultaneous and independent work Wang~\cite{Wang} introduced a fully geometric approach. It is only more recently that the resolution of the nonlinear stability problem for unrestricted perturbations 
was achieved in two independent works by Ionescu and Pausader~\cite{IP-two, IP3} and by LeFloch and Ma~\cite{PLF-YM-main}. 


\paragraph{Related works on hyperboloidal foliations.} 

The use of hyperbolic hypersurfaces for the analysis of partial differential equations goes back to Klainerman~\cite{Klainerman85}, H\"ormander~\cite{Hormander}, and Tataru~\cite{Tataru96}. In the context of general relativity, we refer to earlier work by Friedrich~\cite{Friedrich81, Friedrich83}. Concerning the evolution of various models of matter fields, we also refer to the works~\cite{HuneauStingo, IfrimStingo,KauffmanLindblad}. Kinetic equations were investigated by Fajman, Joudioux, and Smulevici \cite{FJS,FJS3} and Bigorgne et al. \cite{Bigorgne,Bigorgne2}. For the massive Maxwell Klein-Gordon system, see~\cite{KlainermanWangYang}. Importantly, this strategy is also a very active domain of research in numerical relativity, pursued by Moncrief and Rinne \cite{MoncriefRinne}, Zenginoglu~\cite{Zenginoglu,Zenginoglu-5}, 
Va\~n\'o-Vi\~nuales et al. \cite{Vano1,Vano2,Vano3}, Gasperini et al. \cite{Gasperin}, Gautam et al.~\cite{Gautam} and followers. 
 
We point out that our use of ``hyperboloidal hypersurfaces'' in \cite{PLF-YM-book}--\cite{PLF-YM-model} and in the present paper is very different from the one in Helmut Friedrich's hyperboloidal framework~\cite{Friedrich81,Friedrich83}, as we are  motivated by a different perspective. While the notion of ``hyperboloidal hypersurface'' in~\cite{Friedrich81,Friedrich83} is determined by an asymptotic condition (and, in the interior of the spacetime, hyperboloidal surfaces are nothing but ordinary spacelike surfaces), in the present paper the term "hyperboloidal" is used in the elementary sense that in the coordinates under consideration, the (hyperboloidal) ``interior domain'' $\{r<t-1\}$ is foliated by (a subset of) the standard hyperboloids $\{t^2-r^2=s^2\}$
of the Minkowski spacetime. Moreover, in an ``exterior domain'' these slices are merged together with a standard constant $t$-foliation of Minkowski spacetime. As far as the constructed spacetime is concerned, our construction produces “asymptotically Euclidean” hypersurfaces, understood in a standard sense that is in use in general relativity. We find it convenient to refer to the interior foliation as being "asymptotically hyperboloidal''. We emphasize that 
our slices in the interior domain are all asymptotic to the {\sl same cone.} Our foliation serves to analyze the decay in time-like and space-like directions, while Helmut Friedrich's hyperboloidal framework addresses the problem of the structure of the spacetimes along null infinity. 


\paragraph{Outline of this paper.}

This paper is organized as follows. In Section~\ref{section=3}, we present the field equations of $f(R)$ gravity which we express first in the so-called ``Jordan metric'' and then recast in the ``Einstein conformal metric'', as we call here it, which is defined by a rescaling based on the scalar curvature. In Section~\ref{section=4}, we formulate the initial value problem and then present our main result, stated in a geometric form. In Section~\ref{section=5}, we define the Euclidean-Hyperboloidal foliation of interest, together with a fundamental weighted energy inequality. In~Section~\ref{section=6}, we are in a position to formulate the $f(R)$-gravity model in conformal wave gauge in which the scalar curvature is then regarded as an independent field. Next, in Section~\ref{section=7} we reformulate our main stability result in conformal wave gauge, in a form that is directly amenable to our Euclidean-Hyperboloidal foliation method. In Section~\ref{section-Euclidean}, we present our analysis in the Euclidean-merging domain and finally, in~Section~\ref{section-hyperb}, our analysis in the hyperboloidal domain. Technical identities are collected in Appendix~\ref{appendix-A} (conformal identities) and Appendix~\ref{appendix-B} (structural properties).  


\section{Structure of the f(R) gravity equations} 
\label{section=3}

\subsection{Formulation in the Jordan metric}

\paragraph{Aim of this section.}

Our first task is to present the equations of interest, which are stated first with a metric denoted by $\giveng$ and referred to as the {\sl Jordan metric,} and next in a conformal metric denoted by $\gconf$ and referred to as the {\sl Einstein conformal metric.} In the latter, we introduce a suitable rescaling of the scalar curvature, which is then viewed as an independent unknown, which we propose to call the {\sl effective curvature field}. To facilitate comparison later on, we first recall the formulation of Einstein's gravity equations. From the metric $\giveng$, we define geometric objects such as the Ricci curvature by conveniently including a tilde symbol in our notation. The equations of interest will next be stated in term of the conformal metric $\gconf$.


\paragraph{Einstein's gravity equations.}

We are interested in a four-dimen\-sional spacetime $(\Mcal, \giveng)$, that is, a manifold  $\Mcal$ (which we are going to take as $\Mcal \simeq [0, +\infty) \times \RR^3$) endowed with a Lorentzian metric $\giveng$ with signature $(-, +, +, +)$. The  Levi-Civita connection of  $\giveng$ is denoted by $\givennabla = \nabla_\giveng$, from which we define the Ricci tensor and scalar curvature of the spacetime denoted by $\givenRic$ and $\givenScal$, respectively. We also write $\givenRic = (\givenR_{\alpha\beta})$ in components. Throughout, Greek indices describe $0, 1,2,3$ and we use the standard convention of implicit summation over repeated indices, as well as raising and lowering indices with respect to the metric $\giveng_{\alpha\beta}$ and its inverse denoted by $\giveng^{\alpha\beta}$. For instance, we write $X_\alpha = \giveng_{\alpha\beta} X^\beta$ for the duality between vectors and $1$-forms.  

The standard Hilbert-Einstein action is related to the integral of the scalar curvature, that is, 
\bel{equa:action1}
\int_\Mcal \Big( \givenScal + 16 \pi \, L[\Psi, \giveng] \Big) \, d\givenVol, 
\ee 
where $d\givenVol$ denotes the canonical volume form on $(\Mcal, \giveng)$ and the Lagrangian 
$L[\Psi, \giveng]$ describes the matter content and is specified later in this text.
Critical points of this action are well-known to satisfy {\sl Einstein's field equations} 
\bel{equa-1:einstein-massif}
\givenG_{\alpha\beta} = 8\pi \,\givenT_{\alpha\beta} \quad \text{ in } (\Mcal,\giveng), 
\ee
in which Einstein's curvature tensor (associated with $\giveng$) reads 
\bel{equa:Eins9}
\givenG_{\alpha\beta} := \givenR_{\alpha\beta} - {1 \over 2} \givenScal \, \giveng_{\alpha\beta}.
\ee 

Furthermore, the Lagrangian $L[\Psi, \giveng]$ in \eqref{equa:action1} models the matter content of the spacetime and may involve a collection $\Psi$ of unknown fields. This Lagrangian allows us to determine the right-hand side of \eqref{equa-1:einstein-massif}, namely the {\sl energy-momentum tensor} 
\be
\givenT_{\alpha\beta}
:= - 2 \, {\delta L \over \delta \giveng^{\alpha\beta}} [\Psi,\giveng]  + \giveng_{\alpha\beta} \, L[\Psi,\giveng].
\ee
In addition, the evolution equations for the matter fields are given by the twice-contracted Bianchi identities 
$\givennabla^\beta \givenG_{\alpha\beta} = \nabla^\beta \givenR_{\alpha\beta} - {1 \over 2} \givennabla_\alpha \givenR$,  
which are equivalent to saying 
\begin{equation}\label{equa-evolution-matter} 
\givennabla^\beta \givenT_{\alpha\beta}= 0  \quad \text{ in } (\Mcal,\giveng). 
\end{equation}


\paragraph{Modified gravity equations.}

The modified theory of gravity we are interested in is the generalization of Einstein's theory that is based on the action 
\bel{equa:action} 
\int_\Mcal \Big( f(\givenScal) + 16 \pi \, L[\Psi, \giveng] \Big) \, d\givenVol,
\ee
in which $f: \RR \to \RR$ is a given (sufficiently regular) function. In order for the modified theory to be a formal extension of Einstein's theory, we assume that $f(R) \simeq R$ in the zero curvature limit $R \to 0$ which will be the regime studied in this paper. For instance, the choice of the quadratic function $R+ {\coeffi \over 2} R^2$ was proposed by Starobinsky \cite{Starobinsky} for the description of the early universe and is the main example of interest. 

Critical points of the action \eqref{equa:action} are easily found to satisfy the {\sl field equations of $f(R)$ gravity}
\bel{equa--FieldEquations}
\givenCurv_{\alpha\beta} = 8 \pi \, \givenT_{\alpha\beta}[F,\giveng] \quad \text{ in } (\Mcal,\giveng), 
\ee
in which the {\sl $f(R)$-gravity tensor} is defined as 
\bel{equa--ModifiedCurv}
\givenCurv_{\alpha\beta} :=
f'(\givenR) \, \givenG_{\alpha\beta} -  {1 \over 2}  \Big( f(\givenR) - \givenR f'(\givenR) \Big) \giveng_{\alpha\beta}
+ \Big( \giveng_{\alpha\beta} \, \givenBox   - \givennabla_\alpha \givennabla_\beta \Big) \big( f'(\givenR) \big).
\ee
Here, $\givenBox := \givennabla_\alpha \givennabla^\alpha$ is the wave operator associated with the metric $\giveng$, while $\givenG$ denotes Einstein's curvature tensor \eqref{equa:Eins9}. Observe that, in contrast with $\givenG$ which involves up to second-order derivatives of the metric,  the tensor $\givenCurv$ involves up to {\sl fourth-order derivatives} of the spacetime metric.

Interestingly, by relying again on the twice-contracted Bianchi identities together with the differential structure of the tensor $\givenCurv$, the modified gravity tensor $\givenCurv$ is also found to be {\sl divergence-free,} that is,
\bel{equa:1112} 
\givennabla^\alpha \givenCurv_{\alpha\beta} = 0 \quad \text{ in } (\Mcal,\giveng). 
\ee
Consequently, the evolution equations for the matter field still take the form \eqref{equa-evolution-matter}. 


\paragraph{Evolution of the spacetime scalar curvature.}

In view of the expression \eqref{equa--ModifiedCurv} of the tensor $\givenCurv$, which depends upon the Hessian of the scalar curvature, it is natural to seek an evolution equation for $\givenScal$. Namely, taking the trace of the curvature tensor \eqref{equa--ModifiedCurv} we obtain 
\bel{equa--trace-gravity}
\Tr_{\giveng}(\givenCurv) = 3 \, \givenBox f'(\givenR) + \big(  - 2 f(\givenR) + \givenR f'(\givenR) \big). 
\ee
Consequently, after taking the trace of the field equations \eqref{equa--FieldEquations}  we find 
$\Tr_{\giveng}(\givenCurv) = 8\pi \, \Tr_\giveng(\givenT)$, therefore  
\bel{Eq1-15}
\aligned
3 \, \givenBox f'(\givenR) + \big(  - 2 f(\givenR) + \givenR f'(\givenR) \big)
= 8 \pi \, \giveng^{\alpha\beta} \givenT_{\alpha\beta}
\quad \text{ in } (\Mcal,\giveng). 
\endaligned
\ee
This is a second-order evolution equation of Klein-Gordon type satisfied by the spacetime curvature $\givenR$. It plays a central role in our analysis.


\subsection{Formulation in the Einstein conformal metric}

\paragraph{Conditions on the function $f(R)$.}

We now rewrite the equations of $f(R)$-gravity in a form that will be amenable to techniques of geometric analysis. We are going to introduce a new metric that is conformally equivalent to $\giveng$, and then  
formulate a second-order hyperbolic system
 instead of a fourth-order system in $\giveng$. Moreover, in order for our presentation to encompass the formal limit $f(R) \to R$ (and eventually establish uniform estimates in this limit), it is convenient to assume
\bel{Equa--condi-coeffi} 
f(0) = 0 
 \qquad 
 f'(0) = 1, 
 \qquad
f''(0) =  \coeffi,
\qquad 
\sup_r |f^{(j)}(r)| \leq C \coeffi^{j-1},\quad  j \geq 3, 
\ee 
in which $\coeffi>0$ that can approach zero and $C>0$ is a fixed constant.  The supremum is taken over a small neighborhood of the origin and for sufficiently large order $j \geq 2$. 
In particular, we use that $f'(r) \simeq 1$ and thus does not vanish for all sufficiently small $r$, which is a consequence of \eqref{Equa--condi-coeffi} for the regime of small curvature under consideration in the present paper. 

We emphasize that the positivity of $\coeffi >0$ is standard in the physics literature (as it can be guaranteed by the strong energy condition) and is also essential for the global stability of solutions, while it is irrelevant as far the local-in-time existence theory is concerned. The condition \eqref{Equa--condi-coeffi} is sufficient in order to proceed with the rewriting of the field equations, but later on we will reformulate it in a form that is more suitable for our global  stability theory. We rely on the change of variable $r \mapsto f'(r)$ and, more precisely, we rescale the scalar curvature variable by defining the function 
\be
\hfrak (s) := \Big({1 \over\coeffi} \log (f') \Big)^{-1}(s), \quad s \in \RR, 
\ee
in which the symbol $^{-1}$ stands here for the inverse of a function. It is convenient to introduce the {\sl potential} 
\be
U(r) : = \frac{r \, f'(r) - f(r)}{f''(0)(f'(r))^2}, \quad r \in \RR. 
\ee
Under the change of variable $s \mapsto \hfrak(s)$, the function $f$ and the potential $U$ can both be transformed and we write  
\be
\ffrak := f\circ\hfrak, 
\quad 
\Ufrak :=  U\circ\hfrak.
\ee

For instance, the Starobinsky model corresponds to the choice $f(r) = r + (\coeffi/2) r^2$ for some $\coeffi>0$, in which case $f'(r) = 1 + \coeffi \, r$ and $U(r) = \frac{r^2}{2(1+\coeffi r)^2}$. Consequently, in this case we have 
\begin{equation}\label{eq1-09-Mars-2023}
\hfrak(s) = {1 \over \coeffi} (e^{\coeffi s} - 1),
\quad
\Ufrak(s) = \frac{1}{2 \coeffi^2 e^{2 \coeffi s}}(e^{\coeffi s} - 1)^2
\quad 
\text{ (Starobinsky model).} 
\end{equation}
Clearly, when $s \to 0$ we have $\hfrak(s) \simeq s$ and $\Ufrak(s) \simeq  (1/2) s^2$. 


\paragraph{A new metric.}

We now introduce a new metric together with a new field $\phi$ whose normalization is chosen so that, in view of \eqref{Equa--condi-coeffi}, $\phi$ approaches the scalar curvature in the limit $\coeffi \to 0$. 

\begin{definition}
The {\bf Einstein conformal metric} $\gconf$  associated with a solution $(\Mcal, \giveng)$ to the  $f(R)$-field equations \eqref{equa--FieldEquations} 
is defined by  
\bel{equa:conformgdef}
\gconf_{\alpha\beta} =: e^{\coeffi \phi} \, \giveng_{\alpha\beta}, 
\qquad 
\phi :=\hfrak^{-1}(\givenR) =  {1 \over \coeffi} \ln(f'(\givenR)), 
\ee
in which the conformal factor $\phi: \Mcal \to \RR$ is expressed in terms of the scalar curvature $\givenScal$ and is referred to as the {\bf effective curvature field.} 
 \end{definition}

The motivations for the above terminology will become clear shortly, after we complete the transformation of the modified gravity equations. We emphasize that some straightforward calculations are postponed to Appendix~\ref{appendix-A}.  First of all, thanks to the notation \eqref{equa:conformgdef}, the gravity tensor $\givenCurv_{\alpha\beta}$ and its trace, as defined in \eqref{equa--ModifiedCurv} and \eqref{equa--trace-gravity}, now take the form (cf.~Appendix~\ref{appendix-A}) 
\begin{equation}\label{eq1-12-nov-2023}
\givenCurv_{\alpha\beta} - \frac{1}{2}\Tr_g(\givenCurv)g_{\alpha\beta} 
= e^{\coeffi \phi}R_{\alpha\beta} - {3  \over 2} \coeffi^2 e^{\coeffi\phi}\nabla_\alpha \phi \nabla_\beta \phi 
- {\coeffi \over 2} \, g_{\alpha\beta}e^{\coeffi\phi}\Ufrak(\phi),
\end{equation}
\begin{equation}\label{eq2-10-mars-2023}
\Tr_{\giveng}(\givenCurv) 
= 3 \coeffi \, e^{2 \coeffi \phi}\Box_g\phi - \ffrak(\phi) + \coeffi e^{2 \coeffi \phi}\Ufrak(\phi). 
\end{equation}
Hence, following by the physics literature (as overviewed in the introduction), it is natural to introduce the stress-energy tensor associated with the field $\phi$, that is, the {\bf effective tensor} 
\begin{equation}\label{eq3-26-mars-2023}
T^\eff[\phi,g]_{\alpha\beta} 
:= {3 \over 2} \coeffi^2 \nabla_{\alpha}\phi\nabla_\beta \phi - {3 \over 4} \coeffi^2 g(\nabla\phi,\nabla\phi)g_{\alpha\beta} - \frac{\coeffi}{2} \Ufrak(\phi) \, g_{\alpha\beta}. 
\end{equation}
Consequently, the relation \eqref{eq1-12-nov-2023} reads 
\begin{equation}\label{eq2-12-nov-2023}
G_{\alpha\beta} = e^{-\kappa\phi}\givenCurv_{\alpha\beta} + T^\eff[\phi,g]_{\alpha\beta}
\end{equation} 
and the field equation of the modified gravity is thus found to be equivalent to 
\begin{equation}\label{eq4-26-mars-2023}
G_{\alpha\beta} = e^{- \coeffi\phi} 8 \pi \, \givenT_{\alpha\beta}[\Psi,\giveng]+ T^\eff[\phi,g]_{\alpha\beta}
\qquad \text{ in } (\Mcal,g).
\end{equation}
Here, $G_{\alpha\beta}$ is the Einstein gravity tensor associated with the new metric $g$. In other words, by introducing the additional field $\phi$ (which is determined by the Jordan metric $\giveng$) we have arrived at a system that is formally identical to the Einstein-massive field system, in which the scalar curvature is regarded as an ``effective'' curvature field, which looks formally like a real massive scalar field.  


\paragraph{Evolution of the matter field.}

As stated in~\eqref{eq2-26-mars-2023} in Lemma~\ref{appendix-propositionA1}, with the property \eqref{equa:1112}, we have 
\be
\nabla^{\alpha}\givenCurv_{\alpha\beta} 
=  \coeffi \, \givenCurv_{\mu\beta} \, \nabla^{\mu}\phi - {\coeffi \over 2} \Tr_g(\givenCurv)\nabla_\beta \phi.
\ee
Together with the coupling $\givenCurv_{\alpha\beta} = 8\pi \givenT_{\alpha\beta}$, we obtain the evolution of the matter field:
\begin{equation}\label{eq-matter-general}
\nabla^{\alpha}\givenT_{\alpha\beta} = \coeffi \givenT_{\mu\beta}\nabla^{\mu}\phi - \frac{\kappa}{2}\Tr_g(\givenT)\nabla_{\beta}\phi.
\end{equation}


\paragraph{Evolution of the effective curvature field.}

By taking the divergence of \eqref{eq2-12-nov-2023} with respect to the conformal metric $g$ and using \eqref{eq2-26-mars-2023} and the observation that $G_{\alpha\beta}$ is divergence-free with respect to the metric $g$, we find 
\begin{equation}
\nabla^{\alpha} \big( T^\eff[\phi,g]_{\alpha\beta}\big) = {\kappa \over 2}e^{-\coeffi\phi} \Tr_g(\givenCurv)\nabla_\beta \phi. 
\end{equation}
Recalling \eqref{eq3-26-mars-2023}, for the wave operator $\Box_g $ associated with the metric $g$ we deduce that, for the above vector-valued equation to hold, it is sufficient to solve the scalar equation 
\begin{equation}\label{eq5-26-mars-2023}
3 \coeffi \, \Box_g \phi -  \Ufrak'(\phi) 
= e^{-\coeffi\phi}\Tr_g(\givenCurv). 
\end{equation}
In combination with the relation $\Tr_g(\givenCurv) = 8\pi \, \Tr_g(\givenT)$, this gives us the evolution law for the effective field $\phi$ and find 
\begin{equation}\label{equa--effective-phi-equation}
3 \coeffi \, \Box_g \phi -  \Ufrak'(\phi) 
= 8 \pi e^{-\coeffi\phi}\, \Tr_g(\givenT).
\end{equation}
Recalling \eqref{eq2-10-mars-2023}, it may seem to be strange that we have got two evolution equations on the same scalar field. However, in Appendix~\ref{appendix-A}, we show the equivalence between them.

If we suppress the relation $f'(\givenR) = e^{\coeffi\phi}$ and regard $\phi$ as an {\sl independent unknown variable,} then \eqref{eq4-26-mars-2023} and \eqref{equa--effective-phi-equation} together with \eqref{equa-evolution-matter} form a Einstein-scalar-matter coupling system. Interestingly, if \eqref{eq4-26-mars-2023} holds, then $f'(\givenR) = e^{\coeffi\phi}$ holds automatically. This is known in the physics literature and was rigorously proved in \cite{PLF-YM-one} in a slightly different context. Here, we revisit this property, as follows. We point out that \eqref{eq3c-12-nov-2013} is not directly used in the proof of the following property. 

\begin{proposition}[Derivation of the curvature constraint]
\label{prop1-12-nov-2023}
Suppose that $(\Mcal,g,\phi,\givenT)$ is a sufficiently regular solution to the system
\begin{subequations}\label{eq3-12-nov-2013}
\begin{equation}\label{eq3a-12-nov-2013}
G_{\alpha\beta} = e^{- \coeffi\phi} 8 \pi \, \givenT_{\alpha\beta}[\Psi,\giveng]+ T^\eff[\phi,g]_{\alpha\beta}
\qquad \text{ in } (\Mcal,g), 
\end{equation}
\begin{equation}\label{eq3b-12-nov-2013}
3 \coeffi \, \Box_g \phi -  \Ufrak'(\phi) 
= 8\pi e^{-\coeffi\phi}\,\Tr_g(\givenT)
\qquad \text{ in } (\Mcal,g), 
\end{equation}
\begin{equation}\label{eq3c-12-nov-2013}
\nabla^{\alpha}\givenT_{\alpha\beta} = \coeffi \givenT_{\mu\beta}\nabla^{\mu}\phi - \frac{\kappa}{2}\Tr_g(\givenT)\nabla_{\beta}\phi
\qquad \text{ in } (\Mcal,g), 
\end{equation}
\end{subequations}
then the scalar curvature $\givenR$ of the metric $\giveng_{\alpha\beta}:= e^{-\coeffi\phi}g_{\alpha\beta}$ is computed 
from the field $\phi$ explicitly by  
\begin{equation}\label{eq4-12-nov-2013}
f'(\givenR) = e^{\coeffi\phi}. 
\end{equation} 
\end{proposition}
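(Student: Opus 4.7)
The plan is to compute the scalar curvature $\givenR$ of the metric $\giveng := e^{-\coeffi\phi}g$ directly from the system \eqref{eq3-12-nov-2013}, express it as a closed formula in $\phi$ alone, and then recognize that formula as $\hfrak(\phi)$, which is equivalent to \eqref{eq4-12-nov-2013} by the very definition of $\hfrak$.

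First I take the $g$-trace of the Einstein-type equation \eqref{eq3a-12-nov-2013}. In dimension four one has $g^{\alpha\beta}G_{\alpha\beta} = -R$, and a direct count of traces in \eqref{eq3-26-mars-2023} gives
$$g^{\alpha\beta}T^\eff[\phi,g]_{\alpha\beta} = -\tfrac{3}{2}\coeffi^2\, g(\nabla\phi,\nabla\phi) - 2\coeffi\,\Ufrak(\phi),$$
which expresses $R$ in terms of $g(\nabla\phi,\nabla\phi)$, $\Ufrak(\phi)$ and $e^{-\coeffi\phi}\Tr_g(\givenT)$. Separately, the standard conformal scalar-curvature formula (a special case of the identities to be collected in Appendix~\ref{appendix-A}) applied to $\giveng = e^{-\coeffi\phi}g$ yields
$$\givenR \;=\; e^{\coeffi\phi}\bigl(R + 3\coeffi\,\Box_g\phi - \tfrac{3}{2}\coeffi^2\, g(\nabla\phi,\nabla\phi)\bigr).$$
Substituting the trace expression for $R$ into this formula and then invoking \eqref{eq3b-12-nov-2013} to replace $3\coeffi\,\Box_g\phi$ by $\Ufrak'(\phi) + 8\pi e^{-\coeffi\phi}\Tr_g(\givenT)$, the $g(\nabla\phi,\nabla\phi)$ and the $\Tr_g(\givenT)$ contributions cancel in pairs, leaving the clean relation
$$\givenR \;=\; e^{\coeffi\phi}\bigl(2\coeffi\,\Ufrak(\phi) + \Ufrak'(\phi)\bigr).$$

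It then remains to verify algebraically that this right-hand side equals $\hfrak(\phi)$. Setting $r := \hfrak(\phi)$, the definition of $\hfrak$ as the inverse of $r\mapsto \coeffi^{-1}\log f'(r)$ gives automatically $f'(r) = e^{\coeffi\phi}$ and, by the inverse-function rule, $\hfrak'(\phi) = \coeffi f'(r)/f''(r)$. A direct differentiation of $U(r) = (rf'(r) - f(r))/(\coeffi(f'(r))^2)$ produces $U'(r) = f''(r)(2f(r) - rf'(r))/(\coeffi(f'(r))^3)$, so that the factor $f''(r)$ cancels in $\Ufrak'(\phi) = U'(r)\hfrak'(\phi)$ and one obtains $\Ufrak'(\phi) = (2f(r) - rf'(r))/(f'(r))^2$. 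Combined with $\Ufrak(\phi) = U(r)$ and the factor $e^{\coeffi\phi} = f'(r)$, the $f(r)$-terms cancel exactly and the expression collapses to $r = \hfrak(\phi)$. Inverting $\hfrak$ then yields $\phi = \coeffi^{-1}\log f'(\givenR)$, i.e.\ $f'(\givenR) = e^{\coeffi\phi}$. I expect the main delicate point to be bookkeeping in this final algebraic collapse, where the specific placement of $\coeffi$ and $(f'(r))^2$ in $U$ is exactly what makes the $f(r)$-terms disappear; the trace and conformal steps are routine, and, as the remark preceding the statement anticipates, the matter divergence identity \eqref{eq3c-12-nov-2013} is never used, because $\givenT$ enters only through its trace and is eliminated together with $\Box_g\phi$.
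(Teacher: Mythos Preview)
Your proof is correct and follows essentially the same strategy as the paper's: trace the Einstein-type equation, use the conformal curvature relation to bring in $\givenR$, eliminate $\Box_g\phi$ and $\Tr_g(\givenT)$ via \eqref{eq3b-12-nov-2013}, and then check the resulting algebraic identity $e^{-\coeffi\phi}\givenR = 2\coeffi\,\Ufrak(\phi)+\Ufrak'(\phi)$ against the definition of $\hfrak$. The only cosmetic difference is that the paper first passes to the Ricci form and applies the tensorial conformal formula \eqref{eq4-12-nov-2023} before tracing with respect to $\giveng$, whereas you trace first in $g$ and then apply only the scalar-curvature conformal formula; your ordering is slightly more direct and avoids the intermediate $\giveng$-computations.
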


\begin{proof} 
\bse
We observe first that \eqref{eq3a-12-nov-2013} is equivalent to saying 
\begin{equation}\label{eq1-03-dec-2023}
R_{\alpha\beta} - \frac{3}{2}\coeffi^2\nabla_{\alpha}\phi\nabla_{\beta}\phi - \frac{\coeffi}{2}\Ufrak(\phi)g_{\alpha\beta}
= 8\pi e^{-\coeffi\phi}\Big(\givenT_{\alpha\beta} - \frac{1}{2} \Tr_g(\givenT)g_{\alpha\beta}\Big)
\end{equation}
and, in view of the conformal transformation \eqref{eq4-12-nov-2023}, we have 
$$
\aligned
&\givenR_{\alpha\beta} - \kappa\givennabla_{\alpha}\givennabla_{\beta}\phi - \frac{\coeffi}{2}g_{\alpha\beta}\Box_{\giveng}\phi
- \coeffi^2\givennabla_{\alpha}\phi\givennabla_{\beta}\phi - \frac{\coeffi^2}{2}\giveng(\givennabla \phi,\givennabla \phi)\giveng_{\alpha\beta}
- \frac{\coeffi}{2}\Ufrak(\phi)\giveng_{\alpha\beta}
\\
& = 
8\pi e^{-\coeffi\phi}\big(\givenT_{\alpha\beta} - \frac{1}{2}\Tr_g(\givenT)g_{\alpha\beta}\big).
\endaligned
$$
Taking the trace of the above equation with respect to the metric $\giveng$, we obtain
\begin{equation}\label{eq5-12-nov-2023}
-8\pi\Tr_g(\givenT) = \givenR - 3\coeffi \, \Box_{\giveng}\phi - 3\coeffi^2\giveng(\givennabla\phi,\givennabla\phi) - 2\coeffi e^{\coeffi\phi} \Ufrak(\phi).
\end{equation}
On the other hand, \eqref{eq3b-12-nov-2013} together with \eqref{eq1-28-oct-2023} leads us to
\bel{equa-Tra} 
8\pi e^{-\coeffi\phi} \Tr_g(\givenT) = 2\coeffi e^{-\coeffi\phi}\Box_{\giveng}\phi + 3\coeffi^2 e^{-\coeffi\phi}\giveng(\givennabla\phi,\givennabla\phi) - \Ufrak'(\phi).
\ee
Comparing \eqref{eq5-12-nov-2023} with \eqref{equa-Tra}, we deduce the following purely algebraic relation between $\givenR$ and $\phi$: 
\begin{equation}\label{eq7-12-nov-2023}
e^{-\coeffi\phi}\givenR = -2\coeffi \Ufrak(\phi) + \Ufrak'(\phi).
\end{equation}
Next, by writing  $r = \hfrak(\phi)$, we have 
$$
e^{\coeffi\phi} = f'(r), 
\qquad   
\hfrak'(\phi) = \frac{dr}{d\phi} = \frac{\coeffi f'(r)}{f''(r)}
$$
and, consequently,
\be
\Ufrak'(\phi) = \frac{2f(r) - rf'(r)}{f''(r)}.
\ee
Finally, in view of the relation \eqref{eq7-12-nov-2023}, we deduce that the condition $\givenR = r$ implies $f'(\givenR) = e^{\coeffi\phi}$, as claimed. 
\ese
\end{proof}


\paragraph{Summary.}

We summarize our standpoint as follows. 

\begin{definition} The {\bf Einstein conformal system} of $f(R)$-modified gravity for self-gravitating matter consists of 

\begin{itemize} 

\item the field equations \eqref{eq3a-12-nov-2013} satisfied by the Einstein conformal metric $g$, 

\item the Klein-Gordon equation \eqref{eq3b-12-nov-2013} for the effective curvature field $\phi$, 

\item and the equations \eqref{eq3c-12-nov-2013} for the matter.

\end{itemize}
\end{definition}

Our definition is motivated by the fact (Proposition~\ref{prop1-12-nov-2023}) that the relation between $\phi$ to the spacetime scalar curvature $\givenR$ of the given metric 
$\giveng_{\alpha\beta} = e^{- \coeffi \phi} \gconf_{\alpha\beta}$, namely
\bel{equa:rhocondi} 
\phi = \ln \bigl( (f'(\givenR))^{1/\coeffi} \bigr), 
\ee
is a consequence of our conditions. All of the forthcoming notions and results in this paper will be stated in terms of the conformal metric $g$ by regarding the effective field $\phi$ as an independent unknown. Our results could be easily translated in terms of the given metric $\giveng$. 
 

\subsection{Further structure}

\paragraph{Evolution of a Klein-Gordon matter field.}
 
Specifically, in this paper we include the coupling with a scalar field $\psi: \Mcal \to \RR$ governed by the energy-momentum tensor 
\begin{equation}\label{equa:Talphabeta}
\givenT = \nabla\psi\otimes\nabla\psi - \frac{1}{2}(\giveng(\nabla\psi,\nabla\psi) + V(\psi))g. 
\end{equation}
In general, the potential $V = V(\psi)$ is a prescribed function depending on the nature of the matter under consideration. From \eqref{equa-evolution-matter} together with \eqref{equa:Talphabeta}, it then follows that the field $\psi$ satisfies the nonlinear Klein-Gordon equation (or wave equation if $V$ vanishes)
\bel{eq-KGG}
\givenBox \psi - V'(\psi) = 0 \quad \text{ in } (\Mcal,\giveng), 
\ee
which determines the evolution of the matter (under suitable initial data). For this model \eqref{equa:Talphabeta}, using the conformal metric $g$ we obtain  
\begin{equation}\label{equa-matter-conform}
\Box_g \psi - e^{-\coeffi\phi}V'(\psi) = \kappa \, g(\nabla \phi,\nabla\psi), 
\end{equation}
which, regarded as equation for the evolution of the matter field $\psi$, is a semi-linear Klein-Gordon equation.  


\paragraph{Remarks on the general relativity limit.}

While, in the present paper, we will not rigorously address the convergence problem $f(R) \to R$, it is interesting here to formally analyze this limit. In the modified gravity theory, Einstein equations are straightforwardly recovered by choosing the function $f(R)$ to be the linear function $R$. In the limit $\coeffi \to 0$ in \eqref{Equa--condi-coeffi}, we find $f(R) \to R$ and the field equations \eqref{equa--FieldEquations}-\eqref{equa--ModifiedCurv} formally converge to the Einstein equations \eqref{equa-1:einstein-massif}. Importantly, the limit $f(R) \to R$ is {\sl singular} in nature, since the classical gravity equations \eqref{equa-1:einstein-massif} involve up to {\sl second-order derivatives} of the metric $g$, while the modified gravity equations \eqref{equa--FieldEquations}-\eqref{equa--ModifiedCurv} contain {\sl fourth-order derivatives}. 

In terms of the Einstein conformal metric the scalar curvature field $\phi$ satisfies \eqref{equa--effective-phi-equation}, which is a Klein-Gordon equation and can be put in the form 
\be
\Ufrak'(\phi) + 8 \pi \, \Tr_\giveng(\givenT) =  3 \coeffi \, \Box_g \phi \to 0 \, \text{ in the limit } \kappa \to 0. 
\ee
In this formal limit, the Klein-Gordon field $\phi$ converges to the solution of an {\sl algebraic equation,} namely 
\be
\Ufrak'(\phi) + 8 \pi \, \Tr_\giveng(\givenT) = 0 \, \text{ when } \kappa = 0. 
\ee 
We may refer to this general relativity limit problem as an {\sl infinite mass problem} for the Klein-Gordon operator, and such an analysis requires an asymptotic analysis of the solutions to the differential operator $\Box_g \phi - \Ufrak'(\phi)/3 \coeffi$ (in the limit $\coeffi \to 0$). In the present paper we will derive estimates that are {\sl uniform} in $\kappa$, yet do not allow one to take this limit. Proving these convergence properties for solutions to the Cauchy problem in the Minkowski regime is beyond the scope of this paper and will be addressed in future work. 
 

\section{Global evolution in the theory of f(R) gravity}
\label{section=4}

\subsection{Geometric formulation of the initial value problem}

\paragraph{The f(R)-constraint equations.} 

From this point onwards, we exclusively focus on the Einstein conformal system of $f(R)$ gravity. Reformulating our findings in the Jordan metric would be a straightforward task, which we omit. As the evolution of the effective field $\phi$ is driven by the second-order evolution equation \eqref{equa--effective-phi-equation}, we need to specify, in addition to prescribing the intrinsic and extrinsic geometry and matter content, two initial data for $\phi$ in order to formulate the initial value problem. This means that we need to provide the spacetime scalar curvature and its Lie derivative in time. Furthermore, for definiteness, at this stage we consider the coupling with a scalar field $\psi$ governed by a potential $V=V(\psi)$. For further information on basic concepts, we refer the reader to~\cite{YCB-book}.

First of all, we recall the following Gauss-Godazzi equations, which hold for any hypersurface in spacetime: 
\be
\aligned
&R_{\gconf_0} + (\Tr_{g_0} (k_0))^2 - |k|^2_{g_0} = - 2 G_0^0,
\\
&\nablab_bK^b_a   - \nablab_a K_b^b = - N^{-1}R_{0a}, 
\endaligned 
\ee
written in the so-called Cauchy adapted frame.  Here,$e_a = \del_a$ with $\{x^a\}$ a local chart of the initial slice $\Mcal_0$, and 
\be
e_0 = \del_t - \beta^a\del_a,\quad \text{ such that } e_0\perp \Mcal_0, 
\ee
and $N$ is referred to as the lapse function, which, by definition, equals $g_{00} = -N^2$.

Then, we write $G_0^0 = G_{0\alpha}g^{\alpha0} = G_{00}g^{00} = -N^{-2}G_{00}$ and we recall \eqref{eq3a-12-nov-2013} in order to obtain
\be
G_0^0 = -8\pi e^{-\coeffi\phi}N^{-2} \givenT[\psi,\giveng]_{\alpha\beta} - N^{-2}T^{\eff}_{00}, 
\ee
which leads us to
\be
\aligned
G_0^0 & =  - 4\pi e^{-\coeffi\phi}\big(|N^{-1}\nabla_0\psi|^2 + g_0(\nabla \psi,\nabla\psi) + 2V(\psi)\big)
\\ 
&- \frac{3}{4}\coeffi^2 \big(|N^{-1}\nabla_0\phi|^2 + g_0(\nabla\phi,\nabla\phi) + \frac{2}{3\coeffi}\Ufrak(\phi)\big).
\endaligned
\ee
For the components $R_{0a}$ we observe that
\be
R_{\alpha\beta} = G_{\alpha\beta} - \frac{1}{2}Gg_{\alpha\beta}, 
\ee
which leads us to the relation $R_{0a} = G_{0a}$. Then, we have 
\be
R_{0a} = 8\pi e^{-\coeffi\phi} \nabla_0\psi\nabla_a\psi + \frac{3}{2}\coeffi^2\nabla_0\phi\nabla_a\phi. 
\ee
We arrive at the formulation of the {\bf constraint equations of $f(R)$-gravity}
\begin{equation}\label{eq1-14-nov-2023}
\aligned
R_{\gconf_0} + (\Tr_{g_0} (k_0))^2 - |k|^2_{g_0} & = 8\pi e^{-\coeffi\phi}\big(|\psi_1|^2 + g_0(\nabla\psi,\nabla\psi) + 2V(\psi)\big) 
\\
& \quad + \frac{3}{2}\coeffi^2\big(|\phi_1|^2 + g_0(\nabla\phi,\nabla\phi) + \frac{2}{3\coeffi}\Ufrak(\phi)\big),
\\
\nablab_bK^b_a   - \nablab_a K_b^b & =  -8\pi e^{-\coeffi\phi}\psi_1\nabla_a\psi - \frac{3}{2}\coeffi^2\phi_1\nabla_a\phi.
\endaligned
\end{equation}
For vacuum spacetimes, these equations become
\begin{equation}\label{eq2-14-nov-2023}
\aligned
R_{\gconf_0} + (\Tr_{g_0} (k_0))^2 - |k|^2_{g_0} 
& = \frac{3}{2}\coeffi^2\big(|\phi_1|^2 + g_0(\nabla\phi,\nabla\phi) + \frac{2}{3\coeffi}\Ufrak(\phi)\big),
\\
\nablab_bK^b_a   - \nablab_a K_b^b 
& =  - \frac{3}{2}\coeffi^2\phi_1\nabla_a\phi.
\endaligned
\end{equation}

\begin{definition}
\label{def36-22}
An {\bf initial data set for $f(R)$ gravity} consists of data $(\Mcal_0, \gconf_0, {\kd_0}, \phi_0, \phi_1, \psi_0, \psi_1)$ satisfying the following conditions. 

\begin{itemize}

\item[$\bullet$] $\Mcal_0$ is a $3-$dimensional manifold, endowed with a Riemannian metric $\gconf_0$ and a symmetric 2-covariant tensor field ${\kd_0}$.  

\item The scalar fields $\phi_0, \phi_1$ and $\psi_0, \psi_1$ prescribed on $\Mcal_0$ and associated with the effective  curvature field and the matter field, respectively.  

\item[$\bullet$] The  following {\rm Hamiltonian constraint} and {\rm momentum constraints} hold on $\Mcal_0$: 
\begin{equation}\label{consttransHan}
\aligned
R_{\gconf_0} + (\Tr_{g_0} (k_0))^2 - |k|^2_{g_0}
& = \mu^{\eff}_0 + \mu_0, 
\\
\nablab_b{k_0}^b_a   - \nablab_a {k_0}_b^b
& = J^{\eff}_a + J_a.
\endaligned
\end{equation}
Here, $R_{g_0}$ denotes the scalar curvature of $g_0$, while $\Tr_{g_0}$ and $\nablab$ denote the trace operator and connection operator associated with $g_0$, respectively. 

\item In \eqref{consttransHan}, the energy-momentum vectors $(\mu_0^f , J_0^f)$ and $(\mu_0^f , J_0^f)$ are defined on $\Mcal_0$ as 
$$
\aligned
\mu_0^{\eff} & := \frac{3}{2}\coeffi^2\big(|\phi_1|^2 + g_0(\nablab\phi,\nablab\phi) + \frac{2}{3\coeffi}\Ufrak(\phi)\big),
\\
J_a^{\eff} & := - \frac{3}{2}\coeffi^2\phi_1\nablab_a\phi, 
\endaligned
$$
and
$$
\aligned
\mu_0 & :=  8\pi e^{-\coeffi\phi}\big(|\psi_1|^2 + g_0(\nablab\psi,\nablab\psi) + 2V(\psi)\big) , 
\\
J_a & := -8\pi e^{-\coeffi\phi}\psi_1\nablab_a\psi.  
\endaligned
$$
\end{itemize}
\end{definition}


\paragraph{The $f(R)$-Cauchy developments.}

\begin{definition}
\label{def cauchy conf}
Given an initial data set $(\Mcal_0, \gconf_0, {\kd_0}, \phizerod, \phioned, \psi_0, \psi_1)$ as in Definition~\ref{def36-22}, the {\rm initial value problem associated with the conformal $f(R)$-gravity equations} consists of finding a Lorentzian manifold $(\Mcal, \gconf)$ (endowed with a time-orientation) and scalar fields $\phi, \psi$ defined in $\Mcal$ such that the following conditions hold. 
\begin{itemize}

\item The field equations of modified gravity \eqref{eq3-12-nov-2013} are satisfied.

\item There exists an embedding $i: (\Mcal_0, \gconf_0) \to (\Mcal, \gconf)$ with pull-back metric $i^\star \gconf = \gconf_0$ and second fundamental form $\kd_0$, hence one can embed the three-dimensional manifold $\Mcal_0$ as a hypersurface in the four-dimensional manifold $\Mcal$. 

\item The fields $\phi_0$ and $\psi_0$ coincide with the restrictions $\phi|_{\Mcal_0}$ and $\psi|_{\Mcal_0}$, respectively, while $\phi_1$ and $\psi_1$ coincide with the Lie derivatives $\mathcal{L}_{n} \phi|_{\Mcal_0}$ and $\mathcal{L}_{n} \psi|_{\Mcal_0}$, where $n$ denotes the future-oriented unit normal to $\Mcal_0$. 

\end{itemize}
\end{definition}

A solution satisfying the conditions in Definition~\ref{def cauchy conf} is referred to as a {\sl modified gravity Cauchy development} of the initial data set $(\Mcal_0, \gconf_0, {\kd_0}, \phi_0, \phi_1, \psi_0, \psi_1)$. The notion of {\sl maximally hyperbolic development} then follows by straightforwardly extending the definition in Choquet-Bruhat \cite{YCB-book}. In comparison with the classical gravity theory, the modified gravity theory has {\sl two extra degrees of freedom} specified from the two additional data $(\phi_0, \phi_1)$. Similarly as in classical gravity, these fields cannot be arbitrarily prescribed and suitable constraint equations must be assumed, as stated by \eqref{consttransHan}.

The following observations are in order. 

\begin{itemize} 

\item From Definition~\ref{def cauchy conf}, in the special case of vanishing data $\phi_0 = \phi_1= \psi_0= \psi_1 \equiv 0$ we recover the classical formulation of Einstein's vacuum equations.  

\item On the other hand, by taking a vanishing matter field $\psi\equiv 0$ but non-vanishing data $(\phi_0, \phi_1)$, the spacetimes in Definition~\ref{def cauchy conf} generally need not satisfy Einstein's vacuum equations. 

\end{itemize} 


\subsection{The reference metric} 

We follow the strategy presented in \cite{LR1, PLF-YM-main}, where the unknown metric is decomposed as 
\begin{equation}\label{eq2-23-dec-2023}
g_{\alpha\beta} = g_{\Mink,\alpha\beta} + h^{\star}_{\alpha\beta} + u_{\alpha\beta} = g^{\star}_{\alpha\beta} + u_{\alpha\beta},
\end{equation}
where $u_{\alpha\beta}$ is supposed to be small in an energy space (which will be called {\sl the perturbation}) while $h^{\star}$ is small in pointwise sense and will be taken as an Ansatz, and will be called {\sl the reference metric}. A typical choice taken in \cite{LR1} is
$$ 
h^{\star}_{\alpha\beta} = \chi(r/t)\chi(r)\frac{\eps}{r} \, g_{\Mink,\alpha\beta}
$$ 
where $\chi(s)$ is smooth cut-off function with constant value $1$ when $s\geq 3/4$ and $0$ when $s\leq 1/2$. This is essentially the Schwarzschild metric and the above formulation means that we are considering a finite (weighted) $L^2$ perturbation around the Schwarzschild spacetime.

For convenience of discussion, we recall some notation in our previous work \cite{PLF-YM-main}. We are working in (a subset of) $\RR^{1+3}$ with the standard Minkowski metric of signature $(-,+,+,+)$.  The Killing vector fields of the Minkowski space-time:
\bel{equa-ddd} 
\del_{\alpha} = \frac{\del}{\del x^{\alpha}},\quad L_a := t\del_a + x^a\del_t,\quad  
\Omega_{ab} = x_a\del_b - x_b\del_a.
\ee
A high-order differential operator composed by the above Killing vectors are called admissible. We denote by $\ord(Z)$ its order (in the sense of differential operator) and $\rank(Z)$ the number of boosts and rotations contained in it. Further, an operator $Z=\del^I L^J \Omega^K$ is called an {\bf ordered admissible operator.} 
To such an operator, we associate its {\bf order, degree,} and {\bf rank} by  
\begin{equation}
\ord(Z) = |I|+|J| + |K|, 
\quad 
\deg(Z) = |I|, 
\quad 
\rank(Z) = |J| + |K|,
\quad 
\text{ when } Z = \del^I L^J \Omega^K.
\end{equation}
An operator $\Gamma = \del^IL^J\Omega^KS^l$ is called an {\bf ordered conformal operator}, where $S = x^{\alpha}\del_{\alpha}$ is a {\sl conformal Killing vector} of the Minkowski spacetime. Its {\bf order}, {\bf degree}, and {\bf rank} are defined similarly:
$$
\ord(\Gamma) = |I|+|J| + |K| + l, 
\qquad 
\deg(\Gamma) = |I|, 
\qquad 
\rank(\Gamma) = |J| + |K| + l.
$$
If $u$ is a function defined in (an open subset of) $\RR^{1+3}$, we write  
$$
|u|_{p,k} = \sup_{\ord(Z)\leq p\atop \rank(Z)\leq k} |Zu|,\qquad |u|_N:=\sup_{\ord(Z)\leq N}|Zu|
$$
We also use $\del u, Lu$ or $\Omega u$ for any $\del_{\alpha}u$, $L_au$ or $\Omega_{ab}u$, and the notation $|\del u|,|Lu|_{p,k}, \ldots$ is then obvious.

In order to state minimal restrictions on our data, we may also localize the norms with respect to the \textsl{outgoing light cone} 
\begin{equation} \label{equa-lightcone}  
\qquad 
\Lscr:= \big\{ r = t-1 \big\} \subset \RR_+^{3+1}.
\end{equation} 
Its constant-$t$ slices are denoted by $\Lscr_t$.
In addition, a parameter $\ell \in (0,1/2]$ being fixed once for all, we introduce the {\sl near-light cone} domain
\begin{equation} \label{eq1-06-07-2021}
\Mscr^\near_{\ell}  := \Big\{  t \geq 2, \quad t-1 \leq r \leq \frac{t}{1-\ell} \Big\}, 
\end{equation} 
where, in agreement with our notation below, we restrict attention  to $t \geq 2$.

In the present work we continue to follow the notation in our companion paper \cite{PLF-YM-main}. The following notation makes sense for any metric $g^\star = g_\Mink + h^\star$ defined in $\RR^{1+3}$. We also introduce the {\sl reduced Ricci curvature} associated with $h^\star$ (which coincides with the Ricci curvature in the wave gauge of interest)  
\bel{equa-Riccistar} 
\aligned
\wR^\star_{\alpha\beta} 
& := R^\star_{\alpha\beta}
- \frac{1}{2} \big(\del_\alpha\Gamma^\star_\beta + \del_\beta\Gamma^\star_\alpha\big) 
- \frac{1}{2}
\Big( g^{\star\delta \delta'} \del_{\delta} g^\star_{\alpha \beta} \Gamma^\star_{\delta'} - \Gamma^\star_\alpha \Gamma^\star_\beta\Big)
\\
& =: - \gd^{\star \alpha'\beta'} \del_{\alpha'} \del_{\beta'} g^\star_{\alpha\beta}
+ \frac{1}{2} \Fbb_{\alpha\beta}(g^\star, g^\star;\del h^{\star}, \del h^{\star}), 
\endaligned
\end{equation}
in which $\Gamma^{\star \gamma} := g^{\star\alpha\beta} \Gamma_{\alpha\beta}^{\star\gamma}$ are the contracted Christoffel symbols in the global coordinate chart under consideration. 
 
 In particular, the following {\sl asymptotically Minkowski} conditions are assumed  
on $g^\star = \gMink + h^\star$:
\begin{equation}\label{equa-31-12-20}    
\aligned
&|h^{\star} |_{N+2} + \la r+t\ra|\del h^{\star} |_{N+1} + \la r+t\ra^2|\del\del h^{\star} |_{N} \lesssim \epss \la r+t\ra^{-\lambda},\quad &&\text{assymptotically flat}
\\
&\big|h_{00}^{\star}\big|_N + \big|h^{\star}_{0a}\big|_N + \big|(x^ax^b/r^2)h^{\star}_{ab} \big|_{N} 
\lesssim \epss\la r+t\ra^{-1+\theta}
&&\text{radial and tame decay,}
\\
&|\wR^{\star} |_{N} + \la r-t \ra|\del\wR^{\star} |_{N-1}
\lesssim
\begin{cases}
\epss^2 \, \la r+t\ra^{-2-2\lambda} \quad &\text{ in } \MME_{[s_0, + \infty)},
\\ 
\epss \la r+t\ra^{-2-\lambda} \quad &\text{ in } \MH_{[s_0, + \infty)},
\end{cases}
 && \text{almost Ricci flat,}
\endaligned
\end{equation}
with $3/4<\lambda<1, 0<\theta<1-\lambda$. Here, $s_0 \geq 2$ and we solve in the domain $ \M_{[s_0, + \infty)}$; we refer to Sectionl~\ref{label-spacetimefolmiation} for the presentation of the foliation that we use in our proof. 
This is covered by the {\sl Class A} in \cite{PLF-YM-main}. 
We also impose the light-bending condition (as we call it) 
\begin{equation}\label{eq1-23-dec-2023}
4\epss \leq \inf_{\Mscr^\near_{\ell}}  r \,  g^\star(\lbf, \lbf), 
\end{equation}
where $\lbf := \del_t - (x^a/r)\del_a$. This condition means that the reference metric has at its light cone contained in the standard cone for the Minkowski metric. Physically, this means that the photons are attracted by a positive mass towards the center.  

As a conclusion, we introduce the following terminology. 

\begin{definition}
A $(\lambda,\theta, \epss,N,\ell)-$ light-bending reference spacetime is a metric defined in $\RR_+^{1+3} = \{(t,x)| t\geq 2 \}$ satisfying \eqref{equa-31-12-20} and \eqref{eq1-23-dec-2023}.
\end{definition}


\paragraph{An example: merging the Minkowski and the Schwarzschild solutions.}

We consider the setup introduced for dealing with massive fields in \cite{PLF-YM-lambda1}. In wave coordinates the Schwarz\-schild metric, denoted by $g_{\Sch}$, reads 
$$ 
\aligned
g_{\Sch,00} & =   - \frac{r-m}{r+m},
\qquad g_{\Sch,0a} = 0,
\qquad 
g_{\Sch, ab} = \frac{r+m}{r-m} \omega_a \omega_b + \frac{(r+m)^2}{r^2}(\delta_{ab} - \omega_a \omega_b), 
\endaligned
$$
in which we have set $\omega_a := x_a/r$. We merge the Minkowski solution with the Schwarzschild solution, as follows. Let $\chi^\star(r)$ be (regular) cut-off function vanishing for all $r\leq 1/2$ and which is identically $1$ for all $r\geq 3/4$. Given a mass coefficient $m>0$, the reference metric of interest here is (by restricting attention to $t \geq 2$ for convenience in the discussion) 
\bel{equa-defineMS-new} 
g_\glue^\star = \gMink + \chi^\star (r) \, \chi^\star(r/(t-1)) (g_\Sch - \gMink), 
\qquad  
t \geq 2,
\ee 
which coincides with $\gMink$ in the cone $\big\{ r/(t-1)< 1/2 \big\}$ and with $g_\Sch$ in the exterior $\big\{ r/(t-1)\geq 3/4 \big\}$ (containing the light cone).  Observe that $g_{\Sch}$ is Ricci flat, then it is easy to check that \eqref{equa-31-12-20} holds with any $4/3<\lambda<1$ and $0<\theta<1-\lambda$. Thus $g^{\star}$ is chosen as a reference.


\subsection{Main result}

\paragraph{Admissible initial reference.}

As in \eqref{eq2-23-dec-2023}, the initial metric $g_0$, which is the restriction of $g$ on the initial slice $\Mcal_0$, can also be decomposed as 
\begin{equation}\label{equa-decomp-data-3} 
g_{0ab} = g^\star_{0ab} + u_{0ab} = \delta_{ab} + h^\star_{0ab} + u_{0ab}, 
\qquad 
k_{0 ab} = k^{\star}_{0ab} + l_{0ab},
\end{equation}
Inversely, if the pair $(g_0^{\star},k_0^{\star})$ defined on $\Mcal_0$ can be regarded as the restriction of a light-bending reference spacetime $g$ with parameters $(\lambda,\theta, \epss,N,\ell)$ on $\Mcal_0$ in the sens that
$$
\aligned
&g_0^{\star} = i_*(g) \quad &&\text{ is the first fundamental form of } i_*,
\\
&k_0^{\star}\quad &&\text{ is  the second fundamental form of } i_*,
\endaligned
$$
understood with respect to $g^\star$, 
with $i_*$ being the inclusion map $\Mcal_0 \to \RR^{1+3}$. Then $(g_0^{\star},k_0^{\star})$ is called a {\sl light-bending admissible initial reference} with parameter $(\lambda,\theta,\epss,N,\ell)$. 


\paragraph{Class of initial data sets of interest.}

In order to give a quantitative description 
of the perturbation, we introduce the following energy norms (with summation over all $|I|\leq N$): 
\begin{equation}\label{eq3-23-dec-2023}
\aligned
\NMenergy_{\expeta}^N(g_0,k_0) &:= \sum_I \|\la r\ra^{\expeta + |I|}\del_x^I \del_xu_0\|_{L^2(\RR^3)} + \|\la r\ra^{\expeta + |I|}\del_x^I l_0\|_{L^2(\RR^3)},
\\
\NCenergy_{\mu,\coeffi}^N(\phi_0,\phi_1) &:= 
\sum_I 
\coeffi\big\|\la r\ra^{\mu + N} \del^I \del \phi_0 \big\|_{L^2(\RR^3)}
+ \coeffi\big\|\la r\ra^{\mu+N} \del^I \phi_1 \big\|_{L^2(\RR^3)}
+ \coeffi^{1/2}\|\la r\ra^{\mu + N} \del^I \phi_0 \|_{L^2(\RR^3)},
\\
\NSenergy_{\mu}^N(\psi_0,\psi_1) &:= 
\sum_I 
\|\la r\ra^{\mu+N}\del^I\del\psi_0\|_{L^2(\RR^3)} + \|\la r\ra^{\mu+N} \del^I\psi_1\|_{L^2(\RR^3)} + \|\la r\ra^{\mu+N}\psi_0\|_{L^2(\RR^3)}. 
\endaligned
\end{equation}
The special form of $\NCenergy$ is determined by the energy associate to the evolution equation of $\phi$ (see \eqref{eq1-01-dec-2023}).
Here the coefficient $\coeffi$ in $\NCenergy$ is determined by the system \eqref{eq3-12-nov-2013}, more precisely, $f''(0)$. Then as in \cite{PLF-YM-main}, we introduce the admissible initial data set.

\begin{definition} \label{def1-03-01-2022}
An {\bf admissible initial data set} $\big( g_0,k_0,\phi_0,\phi_1,\psi_0,\psi_1\big)$ associated with the parameters $(\lambda,\theta, \epss, N, \ell, \expeta,\mu,\eps)$ 
consists of two symmetric two-tensors $g_0,k_0$ and two pairs of scalar fields $(\phi_0,\phi_1), (\psi_0,\psi_1)$ defined on $\RR^3$ and 
satisfying the following conditions. 

\begin{itemize}

\item  The f(R) constraint equations \eqref{consttransHan} are satisfied. 

\item There exists a decomposition into a sum 
\begin{equation} \label{eq1-06-01-2022}
g_{0ab} =  g^{\star}_{0ab} + u_{0ab},
\qquad\quad 
k_{0ab} = k^{\star}_{0ab} + l_{0ab}, 
\end{equation}
where  $(g_0^{\star},k_0^{\star})$  is a light-bending, $(\lambda,\theta, \epss,N,\ell)$-admissible initial reference.
 
\item Moreover, the perturbation $(u_0, l_0,\phi_0,\phi_1,\psi_0,\psi_1)$ has finite energy in the sense 
that
\begin{equation}
\NMenergy_{\expeta}^N(u_0, l_0) + \NCenergy_{\mu,\coeffi}^N(\phi_0,\phi_1) + \NSenergy_{\mu}^N(\psi_0,\psi_1)\leq \eps,
\end{equation}
\end{itemize} 
\end{definition}
Then similarly as we did in \cite{PLF-YM-main}, we state our main result for the $f(R)$ field equations, as follows. 
 
\begin{theorem}[Nonlinear stability of Minkowski space in modified gravity]
\label{theo:main2}
Consider the field equations of conformal $f(R)$ gravity \eqref{eq3-12-nov-2013} together with \eqref{equa:Talphabeta}, where $f=f(R)$ is a function satisfying the condition \eqref{Equa--condi-coeffi} for some $\coeffi \in (0,1)$. Fix some sufficiently large integer $N$ ($N=20$ being sufficient) and consider an admissible light-bending initial data $(g_0,k_0,\phi_0,\phi_1,\psi_0,\psi_1)$ with
parameters $(\lambda,\theta, \epss, N, \ell, \expeta,\mu,\eps)$ satisfying 
\begin{equation} \label{eq1-20-01-2022}
\expeta\in(1/2,1), 
\qquad 
\lambda \in (3/4,1), 
\qquad 
\mu\in(3/4,1),
\qquad 
\kappa\leq \mu\leq \lambda. 
\end{equation}  
Then there exists a small constant $c_0>0$ (determined by the Einstein system) such that for all 
\begin{equation}
\label{equa-all-conditions-exponents}
\aligned 
&  
N \, \theta \leq c_0 \,\min(\expeta-1/2, \mu-3/4), 
\qquad
\qquad
\eps< c_0 \epss <c_0, 
\endaligned
\end{equation}
the maximal globally hyperbolic Cauchy development of $(g_0,k_0,\phi_0,\phi_1,\psi_0,\psi_1)$ 
associated with the field equations of conformal $f(R)$ gravity \eqref{eq3-12-nov-2013} together with \eqref{equa:Talphabeta}
is future causally geodesically complete, and asymptotically approaches Minkowski spacetime in all (timelike, null, spacetime) directions. Moreover,  this solution satisfies the light-bending property for all times and 
(in a suitable sense) 
remains close to the reference spacetime $(\RR_+^{3+1}, g^\star)$.  
\end{theorem}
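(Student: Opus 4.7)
The plan is to reduce the geometric stability statement to a global existence problem for a coupled wave--Klein-Gordon system in the unknowns $(u_{\alpha\beta}, \phi, \psi)$ determined by the decomposition $g_{\alpha\beta} = g^\star_{\alpha\beta} + u_{\alpha\beta}$, and then to close a bootstrap built on the Euclidean-hyperboloidal foliation method of \cite{PLF-YM-main}. First I would impose a \emph{conformal wave gauge} adapted to the reference metric $g^\star$, i.e.\ a prescription of $g^{\alpha\beta}\Gamma^\gamma_{\alpha\beta}$ in terms of $(g^\star,\phi)$, chosen so that the field equation \eqref{eq3a-12-nov-2013} becomes a quasilinear wave equation of the schematic form $\widetilde{\Box}_g u_{\alpha\beta} = F_{\alpha\beta}(g;\partial g,\partial g) + T^{\eff}_{\alpha\beta}(\phi,\partial\phi) + T^{\psi}_{\alpha\beta}(\psi,\partial\psi) + R^\star_{\alpha\beta}$, while \eqref{eq3b-12-nov-2013} and \eqref{equa-matter-conform} become a small-mass Klein-Gordon equation for $\phi$ of the form $3\coeffi\,\Box_g\phi - \Ufrak'(\phi) = \dots$ and a semilinear wave equation for $\psi$. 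A preliminary step is constraint and gauge propagation: the Bianchi identity $\nabla^\alpha G_{\alpha\beta}=0$ together with \eqref{eq-matter-general} and \eqref{equa--effective-phi-equation} guarantees that, if the gauge conditions and the constraints \eqref{consttransHan} hold on $\Mcal_0$, they hold for all later $t$; Proposition~\ref{prop1-12-nov-2023} then allows one to recover, a posteriori, the Jordan-frame identity $f'(\givenR)=e^{\coeffi\phi}$ and so to translate the result back to the original $f(R)$ formulation.

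Next, I would set up the Euclidean-hyperboloidal foliation of Section~\ref{section=5}, with slices $\Mcal_s$ that are asymptotically Euclidean in $\MME_{[s_0,+\infty)}$, asymptotically hyperboloidal in $\MH_{[s_0,+\infty)}$, and merge smoothly across a neighbourhood of the light cone \eqref{equa-lightcone}. By the local well-posedness theory of \cite{PLF-YM-memoir} together with the smallness of $\eps$, the data on $\{t=0\}$ is transported to an initial slice $\Mcal_{s_0}$ with energies controlled by a universal multiple of the norms \eqref{eq3-23-dec-2023}; the particular $\coeffi$-weights in $\NCenergy_{\mu,\coeffi}^N$ are precisely those produced by the Klein-Gordon energy identity applied to \eqref{eq3b-12-nov-2013}, so the bound on $\Mcal_{s_0}$ is uniform in $\coeffi\in(0,1)$. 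The bootstrap assumption then reads
\begin{equation*}
E_N^u(s) + E_N^\psi(s) + E_N^{\phi,\coeffi}(s) \leq C_1\,\eps\,(1+s)^{\delta},\qquad s\in[s_0,s_1],
\end{equation*}
for a small exponent $\delta$ compatible with \eqref{equa-all-conditions-exponents}.

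From this bootstrap and the Sobolev-type inequalities attached to the foliation, one derives pointwise decay of $|u|_{p,k}, |\partial u|_{p,k}, |\phi|_{p,k}, |\psi|_{p,k}$ in timelike, null, and spacelike directions, exploiting the null structure of $F_{\alpha\beta}$ inherited from \eqref{equa-Riccistar} and the light-bending assumption \eqref{eq1-23-dec-2023} to handle the slowly decaying $h_{00}$ sector. Feeding these pointwise estimates into the commuted equations for $Zu, Z\phi, Z\psi$, with admissible ordered operators of order up to $N$, one improves $C_1$ to $C_1/2$ and thereby closes the bootstrap on $[s_0, +\infty)$. A standard continuation criterion identifies the resulting development with the maximal globally hyperbolic one; future causal geodesic completeness and Minkowski convergence in the three asymptotic regimes then follow by integrating the decay of the connection coefficients along causal geodesics, in the spirit of \cite{LR2,PLF-YM-main}.

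The main obstacle is the uniformity of every step with respect to the small parameter $\coeffi$. The Klein-Gordon equation \eqref{equa--effective-phi-equation} has effective mass-squared $\Ufrak''(0)/(3\coeffi)\sim 1/\coeffi$, which blows up as $\coeffi \to 0$, while the back-reaction tensor \eqref{eq3-26-mars-2023} carries only $\coeffi^2$ on its derivative terms and $\coeffi$ on its potential; the weights in $\NCenergy_{\mu,\coeffi}^N$ are calibrated precisely to balance these two effects. The critical cross-terms in the commuted equations, $Z\phi\cdot\partial u$ and $\partial\phi\cdot Zu$, must be shown to carry exactly enough powers of $\coeffi$ to be absorbed by the hyperboloidal energy without destroying integrability along the foliation; this is the $f(R)$-analogue of the hierarchy-of-energies difficulty of \cite{PLF-YM-main}, and it is the step where the precise structure of $T^{\eff}[\phi,g]$ and the uniform-in-$\coeffi$ conditions \eqref{Equa--condi-coeffi} on $f$ are used in a decisive way.
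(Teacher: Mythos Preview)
Your outline is correct and follows the paper's route: reduce to the wave--Klein-Gordon system of Proposition~\ref{propo41} in conformal wave gauge (the paper takes simply $\Gamma_\gamma\equiv 0$ for the conformal metric, not a $(g^\star,\phi)$-dependent prescription, and $\psi$ is also Klein-Gordon with $V'(0)\simeq 1$, not a pure wave field), restate the geometric theorem as the PDE statement Theorem~\ref{theo-main-result}, and then run the bootstrap on the Euclidean-hyperboloidal foliation separately in $\MME_{[s_0,s_1]}$ and $\MH_{[s_0,s_1]}$ as in Sections~\ref{section-Euclidean}--\ref{section-hyperb}.

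The one point you must sharpen is the bootstrap itself: a single inequality $E_N\lesssim \eps\, s^\delta$ will not close. The paper uses a \emph{hierarchy} in which the top-order Klein-Gordon energies $\FCenergy,\FSenergy$ are allowed the extra growth $s^{1+\delta}$ in $\MME$ and $s^{1/2+\delta}$ in $\MH$, with only $s^\delta$ growth five (respectively nine) derivatives below top; see \eqref{eqs1-14-01-2021} and \eqref{eqs1-16-dec-2023}. This loss at top order is precisely what absorbs the quasi-linear commutators $[Z,H^{\alpha\beta}\del_\alpha\del_\beta]\phi$ through Propositions~\ref{prop1-28-03-2021} and~\ref{prop1-12-02-2020-interior}, and it is there --- rather than in the back-reaction terms $\Gbb_\kappa,\Mbb_\kappa$, whose $L^2$ bounds \eqref{eq4-09-dec-2023}, \eqref{eq5-16-dec-2023} turn out to be integrable --- that the $\kappa$-bookkeeping you correctly identify actually bites: each factor $\del\phi$ must carry a $\kappa$ and each undifferentiated $\phi$ a $\kappa^{1/2}$, so that every term matches the weighting in $\ECenergy_{\kappa}$.
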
 
 

\section{Spacetime foliation and energy inequality} 
\label{section=5}

\subsection{Spacetime foliation}
\label{label-spacetimefolmiation}

\paragraph{Time function.} 

We present a spacetime foliation which is based on prescribing a global time function (specified next)  and allows us to distinguish between asymptotically hyperboloidal and asymptotically Euclidean domains of the spacetime. 
(We recall that the comparison with the Friedrich's terminology was explained in the introduction.) 
More precisely, we define several domains along each hypersurface of constant time denoted by $s$. At each $s$ we introduce the {\it hyperboloidal and Euclidean radii} 
\be
\rhoH(s) := {1 \over 2} (s^2 -1), 
\qquad \rhoE(s) := {1 \over 2} (s^2 +1), 
\ee
respectively. We assume that the time function satisfies the following properties (see below for the construction): 
\bel{equa-prop-time1} 
\fTime(s,r) = \begin{cases} 
(s^2+r^2)^{1/2},   \quad &  r \leq \rhoH(s) \qquad \mbox{ (hyperboloidal domain),} 
\\
r + 1 = (s^2+1)/2, \quad &  r = \rhoH(s),
\\
\fTime^\E(s), &  r \geq \rhoE(s) \qquad \mbox{ (Euclidean domain),}
\end{cases}
\ee
in which $\fTime^\E=\fTime^\E(s)$ is independent of $r$ and, for universal constants $K_1, K_2>0$,
\be
K_1 \, s^2 \leq \fTime(s,r) \leq K_2 \, s^2, 
\ee
together with 
\be
\aligned
& 0\leq \del_r \fTime(s,r)<1, \quad 
&& \mbox{ (slices of constant $s$ are spacelike),}
\\ 
& 0 < \del_r\fTime(s,r)<1, \quad 
&& \mbox{ when $0<r \leq \rhoH(s)$},  
\\
& |\del_r \, \del_r \fTime(s,r)| \lesssim 1. 
\endaligned
\ee
We will provide an explicit time function that satisfies all of these properties.
 
 
\paragraph{Euclidean-hyperboloidal foliation.}

A one-parameter family of spacelike, asymptotically Euclidean hypersurfaces is defined as 
\be
\Mcal_s := \big\{ (t,x^a)\in \Mcal \, \big/ \, t = \fTime(s,r) \big\}. 
\ee
In the future of the initial surface $\{t=1\}$, namely $\{t\geq 1\} = \Mcal^{\init}\cup \bigcup_{s\geq s_0}\Mcal_s$, we distinguish the {\sl initial domain} (as we call it) $\Mcal^{\init} = \{(t,x)\,/\, 1\leq t\leq T(s_0,r)\}$ within which standard local-in-time existence arguments apply. Each slice $\Mcal_s = \MH_s \cup \Mtran_s \cup \Mext_s$  
is decomposed into three domains (with overlapping boundaries): 
\bel{equ-foliation-def} 
\aligned
\MH_s & :=   \big \{ t = \fTime(s,|x|) \, \big/ \, 
|x| \leq \rhoH(s)
\big\}, 
&& \mbox{asymptotically hyperboloidal,}
\\
\MM_s & :=   \big\{t = \fTime(s,|x| )  \, \big/ \, 
\rhoH(s)\leq  |x| \leq \rhoE(s)
\big\}, 
&&  \mbox{merging (or transition),}
\\
\Mext_s & :=   \big\{ t=\fTime(s)  \, \big/ \, 
\rhoE(s) \leq |x|
\big\}, 
&& \mbox{asymptotically Euclidean}, 
\endaligned
\ee
with also $\MME_s := \Mext_s \cup \MM_s$. Some additional notation is needed: 
\be
\aligned
\Mcal_{[s_0,s_1]} & :=   { \big\{ \fTime(s_0,r)\leq t\leq \fTime(s_1,r) \big\} = \bigcup_{s_0\leq s\leq s_1} \Mcal_s}, 
\qquad 
\Mcal_{[s_0, + \infty)} & := \bigcup_{s\geq s_0} \Mcal_s, 
\endaligned
\ee
and, similarly, we set $\Mcal^\H_{[s_0, s_1]}$, $\Mcal^\H_{[s_0, + \infty)}$, etc. By construction,  there exists a function $c=c(s)\in (0,1)$ such that the radial variable $r$ in each of the three domains satisfies 
\be
r = |x| \in  \,
\begin{cases}
\hskip1.cm [0, t-1],
& \MH_{[s_0, + \infty)},
\\ 
[t-1,t -  c(s)], 
\qquad & \MM_{[s_0, + \infty)}, 
\\
[t - c(s), +\infty),
& \Mext_{[s_0, + \infty)}. 
\end{cases}
\ee

The future-oriented normal to the spacelike hypersurfaces $\Mcal_s$ (with respect to the Euclidean metric in $\RR^4$) reads (with $a=1,2,3$) 
\begin{subequations}
\label{normal-Ms-equa1-2-june}
\be \label{normal-Ms}
n_s =  \frac{\big(1, - (x^a/r)\del_r \fTime\big)}{\sqrt{1+|\del_r \fTime|^2}} 
= \big( (1+\xi^2(s,r))r^2 +s^2 \big)^{-1/2} \Big((s^2 + r^2)^{1/2},-x^a\xi(s,r) \Big), 
\ee
while the surface element (with respect to the Euclidean metric) is 
\be \label{equa1-2-june}
d\sigma_s = (1+|\del_r \fTime|^2) \, dx = \big( s^2 +r^2(1+\xi(s,r)^2)\big)^{1/2} (s^2 + r^2)^{-1/2} \, dx
\ee
\end{subequations}
and, in particular,
\be \label{eq1-08-05-2020}
n_sd\sigma_s = (1, - (x^a/r)\del_r \fTime) \, dx = (1, - \del_a \fTime) \, dx = \Big(1,\frac{-\xi(s,r)x^a}{(s^2+r^2)^{1/2}}\Big) \, dx.
\ee


\paragraph{Construction of the time function.}

We now specify a construction of the time function. Consider any cut-off function $\chi: \RR \to [0,1]$ satisfying 
\be
\chi(x) = 
\begin{cases}
0,   & x \leq 0, 
\\
1,   & x> 1, 
\end{cases}
\ee
and (for simplicity in some of our arguments) $\chi^{(m)}(x)  > 0$ for all $x \in (0,1/2)$ and each $m= 0,1,2,3$. By definition, the {\it foliation coefficient} is the function 
\bel{equation-xi-def}
\xi(s,r) := 1-\chi(r- \rhoH(s)) 
= \begin{cases}
1, \quad & r <\rhoH(s), 
\\
0, \quad & r > \rhoE(s), 
\end{cases}
\ee
and will be applied to``select'' the hyperboloidal domain.  Specifically, we define our time function $t = \fTime(s,r)$ by solving the ordinary differential equation 
\be
\del_r \fTime(s,r) = \frac{r \, \xi(s,r)}{(s^2+r^2)^{1/2}}, 
\qquad 
\fTime(s,0) = s. 
\ee
It can be checked that this time function enjoys \eqref{equa-prop-time1}, as required. 

Moreover, the Jacobian matrix associated with the foliation  reads 
$$
\left(
\begin{array}{cc}
\del_s \fTime &\del_sx
\\
\del_x \fTime &\del_x x 
\end{array}
\right)
=
\left(
\begin{array}{cc}
\del_s \fTime &0
\\
(x^a/r) \, \del_r \fTime & I
\end{array}
\right)
$$
and the
Jacobian is $J(s,x) = \del_s\fTime(s,x)$, leading to the corresponding volume element $dtdx = J \, dsdx$. It can be checked that   
\be
J \leq
\begin{cases}
{s \over \fTime}= s \, (s^2 +r^2)^{-1/2},   
\\
\xi s \, (s^2 +r^2)^{-1/2} + (1- \xi) \, 2s,
\\
2s,  
\end{cases}
\, 
J 
\geq
\begin{cases}
{s \over \fTime} = s \, (s^2 +r^2)^{-1/2}, 
 \quad      & \MH_s,
\\
\xi \, s \, (s^2 + r^2)^{-1/2} + (1- \xi) 3s/5, 
 \quad      &   \MM_s, 
\\
3s/5 
 \quad      &  \Mext_s. 
\end{cases}
\ee


\subsection{Frames of vector fields of interest}

\paragraph{The semi-hyperboloidal frame.}

In our method of analysis, we combine estimates involving different frames, as follows. The {\it semi-hyperboloidal frame} denoted by 
\bel{equa-shf} 
\delH_0 = \del_t, 
\qquad
\delH_a = \delsH_a  = \frac{x^a}{t} \del_t + \del_a
\ee
was introduced first in~\cite{PLF-YM-book}. It is defined globally in $\Mcal_s$, relevant within the hyperboloidal domain and is relevant in order to exhibit the (quasi-)null form structure of the nonlinearities and establish decay properties in timelike and null directions. Some of our arguments also involve radial integration based on $\delsH_r = (x^a /r)\delsH_a$.


\paragraph{The semi-null frame.} 

Another important frame is the {\it semi-null frame} given by 
\bel{equa-snf} 
\delN_0  = \del_t, 
\qquad  
\delN_a = \delsN_a = {x^a \over r} \del_t + \del_a 
\ee
is defined everywhere in $\Mcal_s$ except on the center line $r=0$ and is the appropriate frame within the Euclidean-merging domain in order to exhibit the structure of the (null, quasi-null) nonlinearities of the field equations, and establish decay properties in spatial/null directions.
 

\paragraph{The Euclidean--hyperboloidal frame.}

Furthermore, the frame 
\be
\delEH_0 = \del_t, 
\quad
\delEH_a  = \delsEH_a 
=   \del_a + (x^a/r)\del_r \fTime \, \del_t  
\ee
is called the {\it Euclidean--hyperboloidal frame} and consists of tangent vectors to the slices $\Mcal_s$ and interpolates between $\delEH_a = \delH_a$ in $\MH_s$, and $\delEH_a = \del_a$ in $\Mext_s$. Some of our arguments are also based on radial integration based on $\delsEH_r = (x^a/r)\delsEH_a$. 

 Changes of frame formulas between these frames are used throughout, such as
$\delN_\alpha = \PhiN{}_\alpha^\beta \, \del_\beta$ and $\del_\alpha = \PsiN_{\alpha}^\beta \delN_\beta$ with 
\be
\big(\PhiN_{\alpha}^\beta \big)
= 
\left(
\begin{array}{cccc}
1 &0 &0 &0
\\
x^1/r& 1 &0 &0
\\
x^2/r&0  &1 &0
\\
x^3/r& 0 &0 &1
\end{array}
\right), 
\qquad 
\big(\PsiN_{\alpha}^\beta \big)
= 
\left(
\begin{array}{cccc}
1 &0 &0 &0
\\
-x^1/r& 1 &0 &0
\\
-x^2/r&0  &1 &0
\\
-x^3/r& 0 &0 &1
\end{array}
\right).
\ee 


\subsection{Basic energy functional}

\paragraph{Energy weight coefficients.}

The fundamental energy functional (stated in \eqref{equa-eergy} and \eqref{equa-eergy-equiv}, below)
 involved another geometric weight, denoted by $\zeta = \zeta(t,x)$ and defined by 
\bel{equa-zeta}
\zeta(s,r)^2 = 1 - (\del_r \fTime(s,r))^2. 
\ee
This weight coincides with  $s/t = s /(s^2+r^2)^{1/2}$ in the hyperboloidal domain, while it reduces to $1$ in the Euclidean domain. In fact, it provides us with an interpolation between the energy density induced on hyperboloids and the one induced on Euclidean slices. Various estimates on $\zeta$ can be checked, for instance $\frac{|r-t \, |+1}{r} \lesssim \zeta^2 \leq\zeta\leq 1$ valid within the Euclidean-merging domain, as well as
$K_1 \, \zeta^2 s\leq J\leq  K_2 \, \zeta^2 s$ in the merging domain (for some universal constants $K_1,K_2$). 

Furthermore, we introduce a weight which measures the distance to the light cone, and is defined from the prescription of a smooth and non-decreasing function $\aleph$ satisfying 
\be
\aleph(y) = 
\begin{cases}
0, \quad 
& y \leq -2, 
\\
y+2, \quad &  y \geq {-1}, 
\end{cases}
\ee 
and specifically we set $\crochet := 1 + \aleph({r-t})$, which we refer to as the {\sl energy weight.} By definition, $\aleph'$ is non-negative.


\paragraph{Energy identity (flat case).}

We multiply the wave-Klein-Gordon equation  (with $c \geq 0$) 
\be
\Box u - c^2 u = F
\ee
by $- 2 \, \crochet^{2 \expeta}  \del_t u$ with $\Box  = \Box_{\gMink}  = - \del_t \, \del_t + \sum_{a=1,2,3} \del_a\del_a$. We treat simultaneously the wave and Klein-Gordon operators by assuming here that $c \geq 0$. We find the divergence identity 
$$
\aligned
 \del_t \big( V^0_{\expeta,c}[u] \big )+ \del_a \big( V^a_{\expeta,c}[u] \big)
& = 
2\expeta \crochet^{-1}  \aleph'({ r-t})(-1, x^a/r) \cdot V_{\expeta,c}[u] 
- 2 \, \crochet^{2 \expeta} \del_t u \, F, 
\\
V_{\expeta,c}[u]  
& = - \crochet^{2 \expeta} \big(-|\del_t u|^2 - \sum_a|\del_au|^2 - c^2u^2, 2 \del_t u\del_au\big). 
\endaligned
$$
We define our energy functional on each Euclidean--hyperboloidal slice $\Mcal_s$ as
\bel{equa-eergy}
\aligned
\Eenergy_{\expeta,c}(s,u) 
& = \int_{\Mcal_s}V_{\expeta,c}[u] \cdot n_sd\sigma_s 
\\
&
= \int_{\Mcal_s} 
\Big(|\del_tu|^2 + \sum_a|\del_au|^2 
+ 2 \del_a \fTime(s,r) 
\del_t u\del_a u + c^2u^2 \Big) \, 
\crochet^{2\expeta} \, dx 
\endaligned
\ee
or, equivalently,
\bel{equa-eergy-equiv}
\aligned
\Eenergy_{\expeta,c}(s,u) 
& =  \int_{\Mcal_s} \Big(\zeta^2|\del_t u|^2 + \sum_a |\delsEH_au|^2 + c^2u^2 \Big) \,
\crochet^{2\expeta} \, dx, 
\\
& = \int_{\Mcal_s}
\Big(\zeta^2\sum_a|\del_au|^2 
+ (\del_r \fTime(s,r))^2 
\sum_{a<b} |\Omega_{ab}u|^2 
+ |\delsEH_r u|^2 + c^2 \, u^2 \Big) \, \crochet^{2\expeta} \, dx. 
\endaligned
\ee
This functional involves the energy coefficient $\zeta$,  which is non-trivial in the merging and hyperboloidal domains, and depends upon our choice of coefficient $\xi = \xi(s,r)$. 


\section{Field equations in conformal wave gauge} 
\label{section=6}
 
\subsection{Field equations in conformal wave gauge} 

\paragraph{Gauge freedom.}
 
{\sl From now on, we work exclusively with the conformal metric $\gconf$.} Since the field equations under consideration are geometric in nature, it is essential to fix the degrees of gauge freedom before tackling any stability issue from the perspective of the initial value problem for partial differential equations. As already pointed out, our analysis relies on a single global coordinate chart $(x^\alpha)=(t, x^a)$ and, more specifically, on a choice of coordinate functions $x^\alpha$ satisfying the  homogeneous linear wave equation in the unknown metric. This means that we impose that the functions $x^\alpha: \Mcal \to \RR$ satisfy the so-called {\sl (conformal) wave gauge conditions} ($\alpha=0,1,2,3$)
\bel{equa:wcoo}
\Box_\gconf x^\alpha = 0
\ee
which, with $\Gamma_\gamma = \gconf_{\lambda\gamma}\gconf^{\alpha\beta}\Gamma_{\alpha\beta}^{\lambda}$, is equivalent to 
\begin{equation}\label{eq1-27-oct-2023}
\Gamma_{\gamma} \equiv 0. 
\end{equation}
In this gauge, we obtain a nonlinear system of second-order partial differential equations with second-order constraints. For the Einstein theory, the unknowns are the metric coefficients $g_{\alpha\beta}$ in the chosen coordinates, together with the scalar field $\psi$. It is well-known that the constraints are preserved during time evolution \cite{YCB-book}. On the other hand, for $f(R)$-theory, the unknowns include the conformal metric coefficients $\gconf_{\alpha\beta}$, the scalar curvature field $\phi$, and the matter field $\psi$. In \cite{PLF-YM-memoir}, it was established that the associated constraints are also preserved during time evolution. We first present the equations in a schematic form, while the algebraic structure of this system of nonlinear and coupled equations will be analyzed in greater detail later on.


\paragraph{Ricci curvature.}

\bse
Let us recall here that the Ricci curvature can be decomposed as 
\be
2 \, R_{\alpha\beta}
= - g^{\mu\nu} \del_\mu \del_\nu g_{\alpha\beta} 
+  \Fbb_{\alpha\beta}(g,g;\del g, \del g) 
+ \big(\del_\alpha\Gamma_\beta + \del_\beta\Gamma_\alpha\big) 
+  W_{\alpha\beta},
\ee
in which the spurious second-order terms
\be
W_{\alpha\beta} 
:= g^{\delta \delta'} \del_{\delta} g_{\alpha \beta} \Gamma_{\delta'} - \Gamma_\alpha \Gamma_\beta
\ee
are known to vanish identically when the wave gauge condition is imposed. The first-order derivatives 
\be
\Fbb_{\alpha\beta}(g,g;\del g, \del g) 
:= \Pbb_{\alpha\beta}(g,g;\del g, \del g)  + \Qbb_{\alpha\beta}(g,g;\del g, \del g) 
\ee
and
\be
\Pbb_{\alpha\beta} (g,g;\del g, \del g) 
:= - \frac{1}{2} g^{\mu\mu'} g^{\nu\nu'} \del_\alpha g_{\mu\nu} \del_\beta g_{\mu'\nu'} + \frac{1}{4} g^{\mu\mu'} g^{\nu\nu'} \del_\alpha g_{\mu\mu'} \del_\beta g_{\nu\nu'}
\ee
and
\be
\aligned
\Qbb_{\alpha\beta}(g,g;\del g, \del g) 
& := 
g^{\mu\mu'} g^{\nu\nu'} \del_\mu g_{\alpha\nu} \del_{\mu'} g_{\beta\nu'}
- g^{\mu\mu'} g^{\nu\nu'} \big(\del_\mu g_{\alpha\nu'} \del_\nu g_{\beta\mu'} - \del_\mu g_{\beta\mu'} \del_\nu g_{\alpha\nu'} \big)
\\
& \quad + g^{\mu\mu'} g^{\nu\nu'} \big(\del_\alpha g_{\mu\nu} \del_{\nu'} g_{\mu'\beta} - \del_\alpha g_{\mu'\beta} \del_{\nu'} g_{\mu\nu} \big)
\\
& \quad 
+ \frac{1}{2} g^{\mu\mu'} g^{\nu\nu'} \big(\del_\alpha g_{\mu\beta} \del_{\mu'} g_{\nu\nu'} - \del_\alpha g_{\nu\nu'} \del_{\mu'} g_{\mu\beta} \big)
\\
& \quad + g^{\mu\mu'} g^{\nu\nu'} \big(\del_\beta g_{\mu\nu} \del_{\nu'} g_{\mu'\alpha} - \del_\beta g_{\mu'\alpha} \del_{\nu'} g_{\mu\nu} \big)
\\
& \quad + \frac{1}{2} g^{\mu\mu'} g^{\nu\nu'} \big(\del_\beta g_{\mu\alpha} \del_{\mu'} g_{\nu\nu'} - \del_\beta g_{\nu\nu'} \del_{\mu'} g_{\mu\alpha} \big). 
\endaligned
\ee
These two quadratic forms are called {\bf quasi-null} and {\bf null} nonlinearities and act on the gradient of $g$. 
\ese

Furthermore, for the clarity in the presentation we recall that 
\be
\aligned
2 \, ^{(w)}R_{\alpha\beta}  
& =
2 \, R_{\alpha\beta} 
- \big(\del_\alpha\Gamma_\beta + \del_\beta\Gamma_\alpha\big) 
- W_{\alpha\beta}
\endaligned
\ee
represents the Ricci curvature after suitable reduction taken the wave gauge condition into account.  
Thanks to the wave gauge conditions in the form of \eqref{eq1-27-oct-2023} together with the following notation for the  {\sl reduced wave operator} 
\begin{equation}
\BoxChapeaud = \BoxChapeau_{\gconf} := \gconf^{\mu\nu} \del_{\mu} \del_{\nu},
\end{equation}
we see that the Ricci curvature reads 
\begin{equation}\label{eq2-27-oct-2023}
2 \, R_{\alpha\beta} =2 \, ^{(w)}R_{\alpha\beta} 
= - \BoxChapeau_{\gconf} g_{\alpha\beta} 
+  \Fbb_{\alpha\beta}(g,g;\del g, \del g).
\end{equation}


\paragraph{Nonlinear second-order system.}

Taking \eqref{eq2-27-oct-2023}, we obtain the f(R) field equation 
\begin{equation}
\BoxChapeaud g_{\alpha\beta} = \Fbb_{\alpha\beta}(g,g;\del g,\del g)
 - 3\kappa^2\nabla_{\alpha}\phi\nabla_{\beta}\phi - \coeffi\Ufrak(\phi)\gconf_{\alpha\beta} - 2 e^{-\coeffi \phi}\big(\givenCurv_{\alpha\beta} - \frac{1}{2}\Tr_{\gconf}(\givenCurv) \big)
\end{equation}
together with the evolution of the effective curvature field 
\begin{equation}\label{eq1-01-nov-2023}
3\kappa\BoxChapeaud  \phi - \Ufrak'(\phi) = e^{-\coeffi\phi}\Tr_{\gconf}(\givenCurv).
\end{equation}
Now if we take the (physical) matter field $T_{\alpha\beta}$ as defined in \eqref{equa:Talphabeta} and consider the Jordan coupling $\givenCurv_{\alpha\beta} = 8\pi\givenT_{\alpha\beta}$, we obtain the main system of interest.

\begin{proposition}[$f(R)$-gravity equations in conformal wave gauge] 
\label{propo41}
In the conformal wave gauge \eqref{eq1-27-oct-2023}, the reduced field equations of modified gravity 
\eqref{equa--FieldEquations} for a matter field $\phi$ satisfying \eqref{equa:Talphabeta} 
take the following form of a nonlinear system of coupled wave and Klein-Gordon equations for the conformal metric components $\gconf_{\alpha\beta}$, the scalar curvature field $\phi$, and the matter field $\psi$:
\begin{equation}\label{MainPDE}
\aligned
\BoxChapeaud_\gconf \gconf_{\alpha\beta}
& =\, \Fbb[g]_{\alpha\beta} + \Gbb_{\coeffi}[\phi,g]_{\alpha\beta} + \Mbb_{\coeffi}[\phi,\psi,g]_{\alpha\beta},
\\
3 \coeffi\, \BoxChapeaud_\gconf \phi - \Ufrak'(\phi) 
& = -8\pi e^{-\coeffi\phi} \gconf(\del\psi,\del\psi) - 32\pi e^{-2\coeffi\phi}V(\psi),
\\
\BoxChapeaud_\gconf \psi - e^{-\coeffi\phi}V'(\psi) 
& =  \coeffi \, \gconf(\del\psi, \del \phi),
\endaligned
\end{equation}
in which
\begin{equation}\label{equa:ABs}
\aligned
\Gbb_{\coeffi}[\phi,g]_{\alpha\beta}
& := - 3\kappa^2\del_{\alpha}\phi\del_{\beta}\phi - \coeffi\Ufrak(\phi)\gconf_{\alpha\beta} ,
\\
\Mbb_{\coeffi}[\phi,\psi,g]_{\alpha\beta}
& := - 16 \pi \,e^{-\coeffi\phi} \big(\del_\alpha \psi \del_\beta \psi + V(\psi) e^{-\coeffi\phi}  \, \gconf_{\alpha \beta} \big). 
\endaligned
\end{equation}
\end{proposition}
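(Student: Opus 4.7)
The plan is to derive the three equations of the proposition by combining the reduced Ricci identity with the algebraic identities already established in Section 3 for the $f(R)$-gravity tensor and its trace. The starting point, from equation~\eqref{eq2-27-oct-2023}, is the reduced Ricci decomposition
$2R_{\alpha\beta} = -\BoxChapeaud_{\gconf} \gconf_{\alpha\beta} + \Fbb_{\alpha\beta}(\gconf,\gconf;\partial \gconf,\partial \gconf)$,
which is valid as soon as the wave gauge condition $\Gamma^{\gamma} \equiv 0$ is imposed and the spurious terms $W_{\alpha\beta}$ are discarded. This step converts the second-order differential operator on the metric into the reduced wave operator $\BoxChapeaud$, up to the quasi-null/null quadratic forms collected in $\Fbb$.

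For the wave equation on $\gconf_{\alpha\beta}$, I would solve \eqref{eq1-12-nov-2023} algebraically for $R_{\alpha\beta}$ in terms of $e^{-\coeffi \phi}\bigl(\givenCurv_{\alpha\beta} - \tfrac{1}{2}\Tr_\gconf(\givenCurv)\gconf_{\alpha\beta}\bigr)$, the quadratic $\nabla_\alpha\phi\,\nabla_\beta\phi$, and the potential term in $\Ufrak(\phi)$. Inserting this into the reduced Ricci identity immediately yields the contribution of $\Gbb_{\coeffi}$ with the coefficients $-3\coeffi^2$ and $-\coeffi\,\Ufrak(\phi)$ displayed in \eqref{equa:ABs}. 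The matter contribution $\Mbb_{\coeffi}$ is then obtained by substituting the Jordan coupling $\givenCurv_{\alpha\beta} = 8\pi\,\givenT_{\alpha\beta}$ and computing the trace-reversed form of the stress-energy tensor~\eqref{equa:Talphabeta}. Here it is crucial to use the conformal rescaling $\giveng_{\alpha\beta} = e^{-\coeffi\phi}\gconf_{\alpha\beta}$ with inverse $\giveng^{\alpha\beta} = e^{+\coeffi\phi}\gconf^{\alpha\beta}$ to rewrite $\giveng(\nabla\psi,\nabla\psi) = e^{\coeffi\phi}\gconf(\partial\psi,\partial\psi)$, which produces the overall factor $e^{-\coeffi\phi}$ in front of $\partial_\alpha\psi\,\partial_\beta\psi$ and the combined factor $e^{-2\coeffi\phi}$ in front of $V(\psi)\gconf_{\alpha\beta}$.

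For the Klein-Gordon equation satisfied by the effective curvature field $\phi$, I would start from the evolution equation~\eqref{equa--effective-phi-equation} already derived in Section~3 and plug in the trace $\Tr_\gconf(\givenT)$ computed from \eqref{equa:Talphabeta} after the conformal rewriting of $\giveng$. Under the wave gauge condition, the covariant wave operator $\Box_\gconf$ coincides with the reduced wave operator $\BoxChapeaud_{\gconf}$, since their difference involves only $\gconf^{\mu\nu}\Gamma^\gamma_{\mu\nu}\partial_\gamma = \Gamma^\gamma\partial_\gamma$, which vanishes identically under \eqref{eq1-27-oct-2023}. The matter equation for $\psi$ is read off directly from~\eqref{equa-matter-conform}, again replacing $\Box_\gconf$ by $\BoxChapeaud_{\gconf}$.

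The only real difficulty is bookkeeping: one must keep consistent track of factors of $e^{\pm\coeffi\phi}$ arising from trace-reversals in the conformal frame, correctly interpret $\nabla_\alpha\phi = \partial_\alpha\phi$ in the chosen coordinate chart, and verify that the quadratic form $\Fbb$ is transported unchanged through the substitution. All remaining manipulations are purely algebraic rearrangements of the identities~\eqref{eq1-12-nov-2023}, \eqref{eq2-10-mars-2023}, \eqref{equa--effective-phi-equation}, \eqref{equa-matter-conform} and \eqref{equa:Talphabeta}; no analytic estimate enters at this stage.
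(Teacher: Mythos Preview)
Your proposal is correct and follows essentially the same route as the paper: the derivation preceding the proposition combines the reduced Ricci identity~\eqref{eq2-27-oct-2023} with the conformal identity~\eqref{eq1-12-nov-2023}, substitutes the Jordan coupling $\givenCurv_{\alpha\beta}=8\pi\givenT_{\alpha\beta}$ with $\givenT$ from~\eqref{equa:Talphabeta}, and reads off the $\phi$- and $\psi$-equations from~\eqref{equa--effective-phi-equation} and~\eqref{equa-matter-conform} after replacing $\Box_\gconf$ by $\BoxChapeaud_\gconf$ under the wave gauge. Your explicit mention that $\Box_\gconf-\BoxChapeaud_\gconf=-\Gamma^\gamma\partial_\gamma$ vanishes under~\eqref{eq1-27-oct-2023}, and your tracking of the conformal factors $e^{\pm\coeffi\phi}$ in the trace-reversal, are exactly the bookkeeping points the paper leaves implicit.
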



\paragraph{Bounds on the potential functions.}

Throughout the discussion, we impose the following conditions on the potentials $V$ and $\Ufrak$: for  any scalar fields $\phi, \psi$ satisfying the uniform bounds 
\be
|\phi|_{[p/2]}\lesssim1, \quad |\coeffi^{1/2}\psi|_{[p/2]}\lesssim 1,
\ee
one requires that 
\begin{subequations}\label{eq6-03-dec-2023}
\begin{equation}\label{eq3-03-dec-2023}
|V(\psi)|_{p,k}\lesssim \sum_{p_1+p_2=p\atop k_1+k_2=k}|\psi|_{p_1,k_1} |\psi|_{p_2,k_2}
\end{equation}
and
\begin{equation}\label{eq4-03-dec-2023}
|\Ufrak(\phi)|_{p,k}\lesssim \sum_{p_1+p_2=p\atop k_1+k_2=k}|\phi|_{p_1,k_1}|\phi|_{p_2,k_2}.
\end{equation}
\end{subequations}
The  condition \eqref{eq3-03-dec-2023} is guaranteed by
\be\label{eq3-24-dec-2023}
V(0) = 0, 
\qquad 
V'(0) \simeq 1, 
\qquad 
\sup_\psi |V^{(n)}(\psi)|\lesssim 1,\quad n=2,3, \ldots, 
\ee
while for the second condition we need
\begin{equation}\label{eq5-03-dec-2023}
\Ufrak(0) = 0, 
\qquad
\Ufrak'(0) \simeq 1, 
\qquad 
\sup_\phi 
|\Ufrak^{(n)}(\psi)|\lesssim 1 ,\quad n=2, 3, \ldots,  
\end{equation}
 the supremum being taken over a small neighborhood of the origin. 
Such bounds are easily checked from our assumption \eqref{Equa--condi-coeffi}on the modified gravity function $f$.
 

\subsection{Structure of the f(R) field equation} 

\paragraph{Decomposition of the metric.}

As we have mentioned in Section~\ref{section=4}, the metric is decomposed as the sum of a reference metric plus a perturbation, namely 
$$
g_{\alpha\beta} = \eta_{\alpha\beta} + h^{\star}_{\alpha\beta} + u_{\alpha\beta} = g^{\star}_{\alpha\beta} + u_{\alpha\beta}, 
$$
in which $g^{\star}$ is chosen here to be \eqref{equa-defineMS-new}. Following our general strategy in~\cite{PLF-YM-main}, the equation satisfied by the perturbation $u$ is written as
\begin{equation}
\BoxChapeaud_\gconf u_{\alpha\beta} = \Fbb^{\star}[u]
 + \Gbb_{\coeffi}[\phi,g]_{\alpha\beta} + \Mbb_{\coeffi}[\phi,\psi,g]_{\alpha\beta}
  + \Ibb^{\star}_{\alpha\beta}[u] - u^{\mu\nu}\del_{\mu}\del_{\nu}h^{\star}_{\alpha\beta} + 2^{(w)}R^{\star}_{\alpha\beta}, 
\end{equation}
with
\bse
\begin{equation}
2^{(w)}R_{\alpha\beta}=  -\Boxt_g u_{\alpha\beta} 
- u^{\mu\nu} \del_{\mu} \del_{\nu} g^\star_{\alpha\beta}  
+ 2 \, \Rwave_{\alpha\beta}
+ \Fbb^{\star}_{\alpha\beta}[u] + \Ibb^{\star}_{\alpha\beta}[u], 
\ee
\be
2^{(w)}R_{\alpha\beta} - 3\coeffi^2\nabla_{\alpha}\phi\nabla_{\beta}\phi - \Ufrak(\phi)g_{\alpha\beta}
= 16\pi e^{-\coeffi\phi}\Big(\givenT_{\alpha\beta} - \frac{1}{2}\Tr_g(\givenT)g_{\alpha\beta}\Big). 
\ee
Here, the nonlinearity $\Fbb^{\star}[u]$ can further be decomposed as
\be
\Fbb^{\star}[u] =  \Pbb^\star_{\alpha\beta}[u] + \Qbb^{\star}_{\alpha\beta}[u] =
\Pbb_{\alpha\beta} (g^{\star},g^{\star};\del u, \del u) + \Qbb_{\alpha\beta}(g^{\star},g^{\star};\del u, \del u),
\ee
and
\be
\Ibb^{\star}_{\alpha\beta}[u] = \Lbb^\star_{\alpha\beta}[u]   + \Bbb^\star_{\alpha\beta}[u] + \Cbb^\star_{\alpha\beta}[u], 
\ee
where linear, bilinear, cubic (and higher order) interactions terms are defined, respectively, as  
\begin{equation}\label{eq4-06-01-2020}
\aligned
\Lbb^\star_{\alpha\beta}[u]
& := \Fbb_{\alpha\beta}(g^\star, g^\star;\del g^\star, \del u) + \Fbb_{\alpha\beta}(g^\star, g^\star;\del u, \del g^\star) 
\\
& \quad 
+ \Fbb_{\alpha\beta}(u,g^\star;\del g^\star, \del g^\star) + \Fbb_{\alpha\beta}(g^\star,u;\del g^\star, \del g^\star),
\\
\Bbb^\star_{\alpha\beta} [u]
& := \Fbb_{\alpha\beta}(u,g^\star;\del u, \del g^\star) 
+ \Fbb_{\alpha\beta}(u,g^\star;\del g^\star, \del u)
+ \Fbb_{\alpha\beta}(g^\star,u;\del g^\star, \del u) 
\\
& \quad 
+ \Fbb_{\alpha\beta}(g^\star,u;\del u, \del g^\star) 
+ \Fbb_{\alpha\beta}(u,u;\del g^\star, \del g^\star),
\\
\Cbb^\star_{\alpha\beta}  [u]
& :=  \Fbb_{\alpha\beta}(u,g^\star;\del u, \del u) + \Fbb_{\alpha\beta}(g^\star,u;\del u, \del u) + \Fbb(u,u;\del g^\star, \del u) 
\\
& \quad + \Fbb_{\alpha\beta}(u,u;\del u, \del g^\star)
+ \Fbb_{\alpha\beta}(u,u;\del u, \del u).
\endaligned
\end{equation}
These latter terms are much easier to analyze in comparison with the quadratic terms $\Fbb^{\star}[u]$ above.
\ese
%


\paragraph{PDE formulation of the Cauchy problem.}

As we have explained in Section~\ref{section=4}, the unknown metric is decomposed as the sum of a reference plus a perturbation. We now formulate the evolution problem introduced in Definition~\ref{def cauchy conf}  
as a Cauchy problem associated with a system of PDEs with unknown $(u_{\alpha\beta},\phi,\psi)$:
\begin{equation}\label{eq1-24-dec-2023}
\aligned
\BoxChapeaud_\gconf u_{\alpha\beta} &= \Fbb^{\star}[u]
+ \Gbb_{\coeffi}[\phi,g]_{\alpha\beta} + \Mbb_{\coeffi}[\phi,\psi,g]_{\alpha\beta}
+ \Ibb^{\star}_{\alpha\beta}[u] - u^{\mu\nu}\del_{\mu}\del_{\nu}h^{\star}_{\alpha\beta} + 2^{(w)}R^{\star}_{\alpha\beta},
\\
3 \coeffi\, \BoxChapeaud_\gconf \phi - \Ufrak'(\phi) 
& = -8\pi e^{-\coeffi\phi} \gconf(\del\psi,\del\psi) - 32\pi e^{-2\coeffi\phi}V(\psi),
\\
\BoxChapeaud_\gconf \psi - e^{-\coeffi\phi}V'(\psi) 
& =  \coeffi \, \gconf(\del\psi, \del \phi),
\endaligned
\end{equation}
where $\wR^{\star}$ was introduced in \eqref{equa-Riccistar}, 
with the following initial data
\begin{equation}\label{eq2-24-dec-2023}
u_{\alpha\beta}(t_0,x),\quad \del_t u_{\alpha\beta}(t_0,x),\qquad \phi(t_0,x),\quad\del_t\phi(t_0,x),\qquad \psi(t_0,x),\quad\del_t\psi(t_0,x)
\end{equation}
being fixed.

Let us now state the pointwise estimates enjoyed by the nonlinearities. Also for the coupling source terms arising in the right-hand side of the Klein-Gordon equations satisfied by $\phi$ and $\psi$, the estimates are stated in Lemma~\ref{lem1-08-dec-2023}, below. 

\begin{lemma}[Null quadratic terms] 
\label{Lemme1-03-dec-2023}
In the Euclidean-merging domain $\MME = \{r\geq t-1\}$, null forms are controlled by good derivatives and a contribution depending upon the reference metric:
\be
|\Qbb^{\star}[u]|_{p,k} := \max_{\alpha,\beta}|\Qbb_{\alpha\beta}^\star[u] |_{p, k}
\lesssim  \sum_{p_1+p_2 = p\atop k_1+k_2=k} |\del u|_{p_1, k_1} |\delsN u|_{p_2, k_2} 
+ | h^\star |_p\sum_{p_1+p_2=p\atop k_1+k_2=k} |\del u|_{p_1, k_1} |\del u|_{p_2, k_2}.
\ee
In the hyperboloidal domain $\MH = \{r<t-1\}$,
\begin{equation}
|\Qbb_{\alpha\beta}^\star[u] |_p
\lesssim  \sum_{p_1+p_2 = p} |\del u|_{p_1} \big( |\delsH u|_{p_2} + (s/t)^2 |\del u|_{p_2} \big)
+ | h^\star |_p\sum_{p_1+p_2=p} |\del u|_{p_1} |\del u|_{p_2}.
\end{equation}
\end{lemma}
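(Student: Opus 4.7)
The plan is to decompose the metric into the Minkowski background plus a perturbation $g^\star = \eta + h^\star$ and exploit the classical null-form cancellation in the Minkowski piece, then transport the computation through the high-order norm $|\cdot|_{p,k}$ by Leibniz. Writing $g^{\star\,\mu\nu} = \eta^{\mu\nu} + H^{\star\,\mu\nu}$ with $|H^\star|_p \lesssim |h^\star|_p$ (to the order we need, this follows from a Neumann-type expansion valid under our smallness hypothesis on $h^\star$), every monomial appearing in $\Qbb_{\alpha\beta}(g^\star, g^\star; \del u, \del u)$ splits as
\[
\Qbb_{\alpha\beta}(g^\star, g^\star; \del u, \del u)
= \Qbb_{\alpha\beta}(\eta, \eta; \del u, \del u) + \Rbb_{\alpha\beta},
\]
where $\Rbb_{\alpha\beta}$ contains at least one factor of $H^\star$ (hence of $h^\star$) multiplying $\del u \cdot \del u$. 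The Leibniz rule immediately controls $|\Rbb_{\alpha\beta}|_{p,k}$ by the stated reference-metric contribution $|h^\star|_p \sum |\del u|_{p_1,k_1} |\del u|_{p_2,k_2}$, up to placing admissible derivatives on either the $h^\star$ factor or on the $\del u$ factors.

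The core step is therefore to prove the null cancellation for $\Qbb_{\alpha\beta}(\eta, \eta; \del u, \del u)$. Every summand in the definition of $\Qbb$ is a linear combination, with coefficients $\eta^{\mu\mu'}\eta^{\nu\nu'}$, of the basic antisymmetric quadratic forms
\[
\del_\mu w \, \del_\nu v - \del_\nu w \, \del_\mu v
\qquad \text{and} \qquad \eta^{\mu\nu} \del_\mu w \, \del_\nu v,
\]
which are the standard null forms. In the Euclidean-merging domain I would use the change-of-frame formulas $\del_a = \delsN_a - (x^a/r) \del_t$ and $\del_t = \delN_0$. A direct substitution, using the identity $\sum_a (x^a/r)^2 = 1$, shows that the ``bad'' contribution $\del_t u \, \del_t v$ cancels, leaving only terms each of which contains at least one good derivative $\delsN u$ or $\delsN v$. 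This yields the pointwise bound
\[
|\Qbb_{\alpha\beta}(\eta,\eta;\del u,\del u)|
\lesssim |\del u| \, |\delsN u|
\qquad \text{in } \MME.
\]
In the hyperboloidal domain I would instead change to the semi-hyperboloidal frame via $\del_a = \delsH_a - (x^a/t) \del_t$. Now $\sum_a (x^a/t)^2 = (r/t)^2$, so the $\del_t u \, \del_t v$ contribution does not fully cancel but leaves the residual coefficient $1 - (r/t)^2 = (s/t)^2$; the rest of the terms contain at least one factor $\delsH u$ or $\delsH v$. This yields the pointwise bound
\[
|\Qbb_{\alpha\beta}(\eta,\eta;\del u,\del u)|
\lesssim |\del u| \, |\delsH u| + (s/t)^2 \, |\del u|^2
\qquad \text{in } \MH.
\]

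To upgrade the pointwise estimates to the $|\cdot|_{p,k}$ norm, I apply an arbitrary ordered admissible operator $Z$ with $\ord(Z)\leq p$ and $\rank(Z)\leq k$ to the products $h^\star \cdot \del u \cdot \del u$ and to the null-form $\Qbb_{\alpha\beta}(\eta,\eta;\del u,\del u)$. The Leibniz rule distributes $Z$ across the two $\del u$ factors and produces a sum over bi-indices $p_1+p_2=p,\ k_1+k_2=k$, which already matches the combinatorial structure of the right-hand side. For the $h^\star$ contribution this is immediate. For the Minkowski null form, what is needed is that $Z$ applied to a frame component $\delsN u$ (or $\delsH u$) remains bounded by quantities of the form $|\delsN u|_{p,k}$ (resp.\ $|\delsH u|_{p,k}$) plus lower-order remainders of the same null type; this is standard for the admissible family, since boosts and rotations preserve the good directions up to harmless linear combinations, and translations commute with them. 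Summing the two contributions gives the stated estimates.

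The main obstacle is really the bookkeeping in the last step: one must verify that, after commuting $Z$ past the semi-null (or semi-hyperboloidal) projections, the lower-order commutator terms can themselves be expressed as bilinear sums involving good derivatives or the factor $(s/t)^2$, so that no bad $|\del u|\,|\del u|$ contribution (without an $h^\star$ weight in $\MME$, or without the $(s/t)^2$ gain in $\MH$) is generated. This is handled by the commutator identities between $\{\del_\alpha, L_a, \Omega_{ab}\}$ and the frame vectors $\delsN_a$, $\delsH_a$, which are linear with coefficients uniformly bounded in the respective domains, and which in particular send a good derivative to a combination of good derivatives plus zeroth-order terms.
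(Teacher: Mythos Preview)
Your approach is correct and is essentially the standard one: split $g^\star=\eta+h^\star$ so that the $h^\star$-contribution is trivially bounded, then use that every monomial in $\Qbb_{\alpha\beta}(\eta,\eta;\del u,\del u)$ is a classical null form (either $Q_0(w,v)=\eta^{\mu\nu}\del_\mu w\,\del_\nu v$ or $Q_{\mu\nu}(w,v)=\del_\mu w\,\del_\nu v-\del_\nu w\,\del_\mu v$), and exploit the change-of-frame identities $\del_a=\delsN_a-(x^a/r)\del_t$ in $\MME$ and $\del_a=\delsH_a-(x^a/t)\del_t$ in $\MH$ to produce one good derivative (with the residual $(s/t)^2$ in the hyperboloidal case). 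The paper does not give a self-contained proof of this lemma here; it is imported from the authors' companion work, where the argument is precisely the one you outline.
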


\begin{lemma}[Quasi-null interaction terms] 
\label{lem1-02-dec-2023}
\bse
In the Euclidean-merging domain $\MME$, under the smallness condition $| h^\star |_p + |u|_{[p/2]} \ll 1$, the quasi-null terms satisfy   
\be
\aligned
|\slashed{\Pbb}^{\star\N}[u] |_p
&
\lesssim \sum_{p_1+p_2=p} |\del u|_{p_1} |\delts u|_{p_2} + \sum_{p_1+p_2+p_3=p} | h^\star |_{p_3} |\del u|_{p_1} |\del u|_{p_2},
\\
|\Pbb_{00}^{\star \Ncal}[u] |_{p,k}
& \lesssim   \hskip-.3cm   \sum_{p_1+p_2=p\atop k_1+k_2=k}|\del \uts|_{p_1,k_1} |\del \uts|_{p_2,k_2} 
+ \hskip-.3cm  \sum_{p_1+p_2=p}\Big(|\delts u|_{p_1} |\del u|_{p_2}
+ \SbbME_{p_1}[u] \, |\del u|_{p_2}  \Big) 
\\
& \quad + \hskip-.3cm  \sum_{p_1+p_2+p_3=p}\hskip-.3cm   | h^\star |_{p_3} |\del u|_{p_1} |\del u|_{p_2}, 
\endaligned
\ee
where
\be
\aligned
\SbbME_p[u]
&:=  
r^{-1} |u|_p   +|\delsN h^\star |_p + r^{-1} | h^\star |_p  
+ \sum_{p_1+p_2=p}|\del  u|_{p_1} |u|_{p_2} 
\\
& \quad
+ \sum_{p_1+p_2=p}\big(| h^\star |_{p_1} |\del u|_{p_2} + |u|_{p_1} |\del h^\star |_{p_2} + |h^{\star}|_{p_1}|\del h^{\star}|_{p_2}\big).
\endaligned
\ee
\ese
\bse
In the hyperboloidal domain $\Mcal^\Hcal$ and under the smallness condition $| h^\star |_p + |u|_{[p/2]} \ll 1$,  the quasi-null terms satisfy 
\be
\aligned
|\slashed{\Pbb}^{\star\H} [u] |_p
& \lesssim \sum_{p_1+p_2=p} |\del u|_{p_1} \big(  |\delsH u|_{p_2} + (s/t)^2 |\del u|_{p_2} \big) 
+ \sum_{p_1+p_2+p_3=p} | h^\star |_{p_3} |\del u|_{p_1} |\del u|_{p_2},
\\
|\Pbb_{00}^{\star\H} [u] |_{p,k}
& \lesssim     \hskip-.3cm  \sum_{p_1+p_2=p}   \hskip-.3cm  \Big(|\del \usH |_{p_1} |\del \usH |_{p_2} 
+ \big( |\delsH u|_{p_1} + (s/t)^2 |\del u|_{p_1} \big)  |\del u|_{p_2}
\Big) 
\\
& \quad 
+   \hskip-.3cm 
\sum_{p_1+p_2=p}  \hskip-.3cm  | \Sbb_p^\H[u] |_{p_1} |\del u|_{p_2}  
+  \hskip-.3cm  \sum_{p_1+p_2+p_3=p}   \hskip-.3cm  | h^\star |_{p_3} |\del u|_{p_1} |\del u|_{p_2}, 
\endaligned
\ee
where
\be
\aligned
\Sbb_p^\H[u] 
& := t^{-1} |u|_{p_1} + \big(|\del h^\star |_{p_1} + t^{-1} | h^\star |_{p_1} \big)
+ \sum_{p_1+p_2=p_1} \big(|\del  u|_{p_1} |u|_{p_2} + |u|_{p_1} |u|_{p_2} \big)
\\
& \quad +   \sum_{p_1+p_2=p_1} \big(| h^\star |_{p_1} |\del u|_{p_2} + |u|_{p_1} |\del h^\star |_{p_2} + | h^\star |_{p_1} |\del h^\star |_{p_2} \big)
\\
& \quad
+  \sum_{p_1+p_2=p_1} 
\big(| h^\star |_{p_1} | u|_{p_2} + |u|_{p_1} | h^\star |_{p_2} + | h^\star |_{p_1} | h^\star |_{p_2} \big). 
\endaligned
\ee
\ese
\end{lemma}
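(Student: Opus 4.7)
My plan is to expand the quasi-null quadratic form $\Pbb^\star_{\alpha\beta}[u]=\Pbb_{\alpha\beta}(g^\star,g^\star;\del u,\del u)$ in the appropriate frame (semi-null $\delN_\alpha$ in $\MME$, semi-hyperboloidal $\delH_\alpha$ in $\MH$) and to exploit the tangential/transverse splitting of derivatives together with the splitting $g^{\star} = \gMink + h^\star$ of the reference metric. First I would rewrite every Cartesian derivative as $\del_\alpha u_{\mu\nu} = \PsiN_\alpha^\beta\,\delN_\beta u_{\mu\nu}$ so that only the timelike component $\delN_0=\del_t$ is ``bad'' while the three $\delsN_a$ are tangential, and I would similarly decompose the inverse metric in the semi-null frame as $g^{\star\mu\mu'} = g^{\Mink,\Ncal\,\mu\mu'} + h^{\star\Ncal\mu\mu'}$, where the Minkowski part carries only one bad $\underline{L}\underline{L}$ entry.

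For the slashed component $\slashed{\Pbb}^{\star\N}[u]$, I then observe that after projecting away the $00$-direction, the coefficients $g^{\Mink,\Ncal}$ that multiply the worst $\del_t u\cdot\del_t u$ products vanish identically (this is the null-form cancellation in the tangential block); every surviving Minkowski-coefficient contribution therefore carries at least one tangential factor, giving the principal piece $|\del u|_{p_1,k_1}\,|\delts u|_{p_2,k_2}$. What remains are contributions in which $g^{\Mink,\Ncal}$ is replaced by $h^{\star\Ncal}$, and these are absorbed into $|h^\star|_{p_3}|\del u|_{p_1}|\del u|_{p_2}$. The same computation in the semi-hyperboloidal frame produces the bound in $\MH$, with the difference that the ``bad'' $\delH_0\delH_0$ entry of $g^\Hcal$ carries an additional $(s/t)^2$ weight coming from $g^{\Hcal,00}= -1+(r/t)^2 = -(s/t)^2$, which accounts for the explicit $(s/t)^2|\del u|$ term.

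The main obstacle is the $00$-component $\Pbb^{\star\Ncal}_{00}$, where the quasi-null (non-null) part is irreducible and the estimate must produce the leading contribution $|\del\uts|_{p_1,k_1}|\del\uts|_{p_2,k_2}$ together with the sundry remainder $\SbbME_p[u]$. Here I would compute $\Pbb_{00}$ directly from its definition and separate the contraction into a ``tangential-tangential'' block, which produces exactly $|\del\uts|\cdot|\del\uts|$, and a ``$\underline{L}\underline{L}$'' block, which involves $\del_t u_{00}\,\del_t u_{00}$. The latter is then eliminated using the conformal wave-gauge identity $\Gamma^\gamma\equiv 0$: the $\gamma=0$ component of the gauge condition expresses $g^{\star 00}\del_t u_{00}$ as a combination of spatial derivatives of $u$, first-order contributions from the reference, zero-order $r^{-1}u$ terms, and mixed products. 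All of these terms are by definition collected in $\SbbME_p[u]$, which explains its precise shape. The hyperboloidal version follows the same route after replacing $\delsN$ by $\delsH$, adding the $(s/t)^2|\del u|$ contribution in the tangential block, and including the $t^{-1}$-weighted pieces in $\Sbb_p^\H[u]$.

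Finally, to promote the pointwise estimates to the higher-order norms $|\cdot|_{p,k}$, I would apply Leibniz to admissible operators $Z$ with $\ord(Z)\leq p$ and $\rank(Z)\leq k$, noting that the tangential frames $\delsN_a$ and $\delsH_a$ commute with the Killing admissible operators up to tangential vector fields and that the smallness $|h^\star|_p+|u|_{[p/2]}\ll 1$ is used only to invert $g^\star$ and to estimate the geometric-series remainder uniformly. I expect the $00$-component step, with the careful use of the wave-gauge identity to generate the sundry term $\SbbME_p[u]$ (resp.\ $\Sbb_p^\H[u]$), to be the delicate point; the rest is frame bookkeeping and Leibniz accounting.
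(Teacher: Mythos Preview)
The paper does not prove this lemma here; it is stated without argument, the full derivation being carried out in the companion work \cite{PLF-YM-main}. Your overall strategy---semi-null (resp.\ semi-hyperboloidal) frame decomposition, the split $g^\star=\gMink+h^\star$, separate treatment of $\slashed{\Pbb}$ and $\Pbb_{00}$, and use of the wave-gauge condition for the latter---is the correct one and matches what is done there.

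Two points in your description should be adjusted, however, since they misidentify where the difficulty sits. For $\slashed{\Pbb}^{\star\Ncal}$ there is no ``null-form cancellation'' to invoke: because $\Pbb_{\alpha\beta}$ carries the \emph{external} derivatives $\del_\alpha(\cdot)\,\del_\beta(\cdot)$, having $(\alpha,\beta)\neq(0,0)$ in the semi-null frame already forces one outer factor to be a tangential derivative $\delsN_a$, and the bound $|\del u|\,|\delsN u|$ is immediate. For $\Pbb^{\star\Ncal}_{00}$, the Minkowski-coefficient contraction does \emph{not} produce a bare $(\del_t u^{\Ncal}_{00})^2$ term---precisely because $\eta^{\Ncal 00}=0$---so there is nothing of that form for the gauge condition to eliminate. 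What one actually obtains is $|\del\uts|^2$ from the good internal contractions together with cross terms of the type $\del_t\uts\cdot\del_t u$, in which a good \emph{component} of $u$ is hit by a bad derivative. It is these cross terms that the wave-gauge relation (taken along the outgoing null direction) converts into $|\delsN u|\,|\del u|$ plus the remainder $\SbbME_p[u]$; the $r^{-1}|u|$ and $r^{-1}|h^\star|$ contributions in $\SbbME_p$ arise from differentiating the frame-change coefficients $\PhiN,\PsiN$. With these corrections your plan goes through.
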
 

Finally, for the terms $\Mbb_{\coeffi}[\phi,\psi,g]$ and $\Gbb_{\coeffi}[\phi,g]$, we have the following estimates.

\begin{lemma}\label{lem1-08-dec-2023}
In $\{s\geq s_0\}$, provided that 
\begin{equation}
|\del\psi|_{[p/2]} + |\psi|_{[p/2]} + |\kappa\phi|_{[p/2]} + |h^{\star}|_{[p/2]}\lesssim 1,
\end{equation}
one has
\begin{equation}\label{eq2-03-dec-2023}
\aligned
|\Mbb_{\kappa}[\phi,\psi,g]|_{p,k}\lesssim &
\sum_{p_1+p_2=p\atop k_1+k_2=k}|\del\psi|_{p_1,k_1}|\del\psi|_{p_2,k_2} + |\psi|_{p_1,k_1}|\psi|_{p_2,k_2} 
\\
& \quad + |\coeffi\del\phi|_{p}|\big(|\del\psi|_{[p/2]}^2+|\psi|_{[p/2]}^2\big) + |\psi|_{[p/2]}^2  \sum_{\ord(Z)=1}|Zu|_{p-1},
\endaligned
\end{equation}
\begin{equation}\label{eq1-09-dec-2023}
|\Gbb_{\coeffi}[\phi,g]|_{p,k}\lesssim \sum_{p_1+p_2=p\atop k_1+k_2=k}\big(|\coeffi\del\phi|_{p_1,k_1}|\coeffi\del\phi|_{p_2,k_2} + \coeffi|\phi|_{p_1,k_1}|\phi|_{p_2,k_2} \big)
+ \coeffi|\phi|_{[p/2]}^2 \sum_{\ord(Z)=1}|Zu|_{p-1}.
\end{equation}
\end{lemma}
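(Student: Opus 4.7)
The proof proceeds by applying an arbitrary admissible operator $Z$ with $\ord(Z)\le p$ and $\rank(Z)\le k$ to each of the two coupling nonlinearities introduced in \eqref{equa:ABs}, and distributing derivatives via the Leibniz rule. The resulting terms fall naturally into three categories: (a) derivatives landing on the polynomial factors $\del_\alpha\psi\,\del_\beta\psi$, $\kappa^2\del_\alpha\phi\,\del_\beta\phi$, $V(\psi)$, or $\Ufrak(\phi)$; (b) derivatives landing on the exponentials $e^{-\kappa\phi}$ or $e^{-2\kappa\phi}$; and (c) derivatives landing on the metric factor $g_{\alpha\beta}=\eta_{\alpha\beta}+h^\star_{\alpha\beta}+u_{\alpha\beta}$. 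The low-order $L^\infty$ hypothesis $|\del\psi|_{[p/2]}+|\psi|_{[p/2]}+|\kappa\phi|_{[p/2]}+|h^\star|_{[p/2]}\lesssim 1$ lets us bound the ``low'' side of each Leibniz split by a constant, retaining only a single factor at the highest order.

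\textbf{Preliminary composition estimate.} Before dispatching the Leibniz expansion, I would record a Fa\`a di Bruno--type bound: for any smooth $F$ with $F(0)=0$ and $\sup|F^{(n)}|\lesssim 1$, and any $\zeta$ with $|\zeta|_{[p/2]}\lesssim 1$,
\[
|F(\zeta)|_{p,k}\lesssim \sum_{p_1+p_2=p,\,k_1+k_2=k}|\zeta|_{p_1,k_1}\,|\zeta|_{p_2,k_2}.
\]
Via \eqref{eq3-24-dec-2023} and \eqref{eq5-03-dec-2023}, this reproves \eqref{eq3-03-dec-2023}--\eqref{eq4-03-dec-2023} for $V(\psi)$ and $\Ufrak(\phi)$. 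For exponentials, the same argument gives $|e^{-\kappa\phi}|_{p,k}\lesssim 1$ under $|\kappa\phi|_{[p/2]}\lesssim 1$, together with the sharpened pointwise inequality that any $Z$ acting on $e^{-\kappa\phi}$ produces one top-order factor $|\kappa\del\phi|_{\ord(Z),\rank(Z)}$ with all remaining factors at order $\le[p/2]$.

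\textbf{Dispatch of the two nonlinearities.} For $\Mbb_\kappa[\phi,\psi,g]=-16\pi e^{-\kappa\phi}\del_\alpha\psi\del_\beta\psi-16\pi e^{-2\kappa\phi}V(\psi)g_{\alpha\beta}$: derivatives on $\del\psi\,\del\psi$ yield the sum $|\del\psi|_{p_1,k_1}|\del\psi|_{p_2,k_2}$; derivatives on $V(\psi)$, via the composition bound, yield the sum $|\psi|_{p_1,k_1}|\psi|_{p_2,k_2}$; derivatives hitting an exponential bring down one factor $\kappa\del\phi$ at top order while the remaining $\del\psi$ or $\psi$ factors stay at low order, producing the term $|\kappa\del\phi|_p\bigl(|\del\psi|_{[p/2]}^2+|\psi|_{[p/2]}^2\bigr)$; finally, derivatives on $g_{\alpha\beta}$ use $Z\eta_{\alpha\beta}=0$, absorb $Zh^\star$-contributions into the implicit constant via the reference bounds \eqref{equa-31-12-20}, and isolate one $Z_1 u$ with $\ord(Z_1)=1$ from the perturbation, producing $|\psi|_{[p/2]}^2\sum_{\ord(Z)=1}|Zu|_{p-1}$. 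The analysis of $\Gbb_\kappa[\phi,g]=-3\kappa^2\del_\alpha\phi\,\del_\beta\phi-\kappa\Ufrak(\phi)g_{\alpha\beta}$ is strictly analogous: the Leibniz expansion of $\kappa^2\del\phi\,\del\phi$ yields the first sum in \eqref{eq1-09-dec-2023}, the expansion of $\kappa\Ufrak(\phi)$ combined with the composition bound for $\Ufrak$ yields the sum $\kappa|\phi|_{p_1,k_1}|\phi|_{p_2,k_2}$, and the splitting $g=\eta+h^\star+u$ yields the remaining term $\kappa|\phi|_{[p/2]}^2\sum_{\ord(Z)=1}|Zu|_{p-1}$.

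\textbf{Main obstacle.} The principal difficulty is not combinatorial but rather the careful tracking of the small parameter $\kappa$ to ensure the estimates are uniform as $\kappa\to 0$: every $\del\phi$ arising from differentiating an exponential must appear in the combination $\kappa\del\phi$ (the quantity actually controlled by the energy $\NCenergy$), and every scalar $\phi$ arising from $\Ufrak(\phi)$ must carry the leading $\kappa$ from the $\kappa\Ufrak(\phi)$ prefactor. This is ensured by performing the Fa\`a di Bruno computation in the rescaled variable $\kappa\phi$ rather than $\phi$, and by invoking the Starobinsky-type normalisation \eqref{Equa--condi-coeffi}, which yields $\Ufrak(\phi)=O(\phi^2)$ uniformly in $\kappa$ on the region $|\phi|\lesssim 1$. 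A secondary technical point is to verify that the $h^\star$-dependent contributions do not appear explicitly in the stated bound; this follows because $|h^\star|_{[p/2]}\lesssim 1$ together with the fixed reference-metric bounds \eqref{equa-31-12-20} lets all $Zh^\star$-contributions be folded into the implicit constant.
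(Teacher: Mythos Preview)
Your proposal is correct and follows exactly the approach the paper takes: the paper's own proof is the single sentence ``Both estimates are checked by direct but tedious calculations. Importantly, we need to pay attention to the nonlinear terms $V(\psi)$ and $\Ufrak(\phi)$ and, to this end, we apply our condition~\eqref{eq6-03-dec-2023},'' and your Leibniz/Fa\`a di Bruno dispatch with careful $\kappa$-tracking is precisely the tedious calculation being alluded to. One small remark: your preliminary composition estimate as stated (only $F(0)=0$) would not by itself yield a genuinely quadratic bound when $F'(0)\neq 0$; in practice you are simply invoking the paper's standing hypotheses~\eqref{eq3-03-dec-2023}--\eqref{eq4-03-dec-2023} directly, which is exactly what the paper does.
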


\begin{proof}
Both estimates are checked by direct but tedious calculations. Importantly, we need to pay attention to the nonlinear terms $V(\psi)$ and $\Ufrak(\phi)$ and, to this end, we apply our condition \eqref{eq6-03-dec-2023}.
\end{proof}


\subsection{Energy estimates for the f(R) field equations} 

The energy identity in curved spacetime associated with the wave equation  
\begin{equation}\label{eq1-26-dec-2023}
g^{\alpha\beta} \del_\alpha \del_\beta u  = F
\end{equation}
is expressed by decomposing the curved metric $g$ as 
$$
g^{\alpha\beta} =  \Minsk^{\alpha\beta} + H^{\alpha\beta}. 
$$
\begin{subequations}
After defining the energy-flux vector (with $a =1,2,3$) 
\begin{equation}  \label{eq1-02-nov-2023}
V_{g, \expeta}[u]
=
-\crochet^{2 \expeta} \Big(
g^{00} |\del_t u|^2 - g^{ab} \del_au \, \del_bu, \ 2 g^{a\beta} \del_t u  \del_\beta u \Big), 
\end{equation}
which depends upon $g$ as well as the weight $\crochet^\expeta$, we easily find the energy identity
\begin{equation}\label{eq2-26-dec-2023}
\aligned 
\dive  V_{g, \expeta}[u] + \Omega_{g, \expeta}[u]
= G_{g, \expeta}[u] - 2 \crochet^{2 \expeta} \, \del_tu \, F, 
\endaligned
\end{equation}
in which 
\begin{equation}\label{eq8-27-11-2022-M}
\aligned
\Omega_{g, \expeta}[u] 
& = 
-2\expeta \, \crochet^{-1} \aleph'({ r-t}) (-1,x^a/r) \cdot V_{g,\expeta}[u]
\\
& = 2 \, \expeta \, \crochet^{2\expeta-1} \aleph'({ r-t}) \, \Big(g^{\N ab}\delsN_au\delsN_bu - H^{\N00} |\del_tu|^2\Big), 
\\ 
G_{g, \expeta}[u] 
& =  -  \del_tH^{00} | \crochet^\expeta \del_t u|^2 + \del_tH^{ab} \crochet^{2 \expeta} \del_au\del_b u 
- 2 \crochet^{2 \expeta}  \del_aH^{a\beta} \del_t u\del_\beta  u. 
\endaligned
\end{equation}
\end{subequations}
Based on the notation above, we introduce
$$
\Eenergy_{g,\expeta}(s,u) := \int_{\Mcal_s}V_{g,\expeta}[u]\cdot n_s \, d\sigma_s
= \int_{\Mcal_s}V_{g,\expeta}[u]\cdot \Big(1,\frac{-\xi(s,r)x^a}{(s^2+r^2)^{1/2}}\Big) dx.
$$
and its restrictions in the hyperbolic domain and the merging-Euclidean domain:
$$
\Eenergy_{g,\expeta}^{\Hcal}(s,u) = \int_{\ME_s}V_{g,\expeta}[u]\cdot n_s \, d\sigma_s,
\quad
\Eenergy_{g,\expeta}^{\ME}(s,u) = \int_{\MME_s}V_{g,\expeta}[u]\cdot n_s \, d\sigma_s.
$$
Recalling the definition of the weight $\crochet$, the energy $\Eenergy_{g,\expeta}^{\Hcal}(s,u)$ defined in the hyperbolic region $\MH_{[s_0,s_1]}$ is equivalent to 
$$
\int_{\MH_s}V_g[u]\cdot n_s d\sigma_s,\quad\text{with } V_g[u] = 
- \Big(
g^{00} |\del_t u|^2 - g^{ab} \del_au \, \del_bu, \ 2 g^{a\beta} \del_t u  \del_\beta u \Big).
$$
This is in fact independent of $\expeta$. We thus denote by $\Eenergy_g^{\Hcal}(s,u)$ the energy in the hyperboloidal region.

In turn we arrive at a weighted energy estimate associated with the Euclidean-hyperboloidal foliation by integrating \eqref{eq2-26-dec-2023} in the region $\Mcal_{[s_0,s_1]}$:  
\begin{equation} \label{eq1-27-11-2022-M} 
\aligned
&\Eenergy_{g,\expeta}(s_1,u) - \Eenergy_{g,\expeta}(s_0,u)
+ 2\expeta \int_{s_0}^{s_1}\int_{\Mcal_s} \aleph'({ r-t}) \crochet^{2\expeta-1}\   g^{\N ab} \delsN_au\delsN_bu 
 \ Jdxds
\\
&= \int_{s_0}^{s_1}\int_{\Mcal_s} \Big(G_{g, \expeta}[u] + 2\expeta\aleph'({ r-t})\crochet^{2\expeta-1}  \HN^{00}  |\del_t u|^2   \Big)\ Jdx ds
- 2\int_{s_0}^{s_1}\int_{\Mcal_s}  \del_t u F  \, \crochet^{2 \expeta} \, Jdxds. 
\endaligned
\end{equation}

On the other hand,  we have also a weighted energy estimate in the Euclidean-merging domain, stated as follows. 
Any solution $u: \MME_{[s_0,s_1]} \mapsto \RR$ to the wave equation \eqref{eq1-26-dec-2023} with  
right-hand side $F: \MME_{[s_0,s_1]} \mapsto \RR$ satisfies  
\begin{equation}
\aligned
&\Eenergy_{g,\expeta}^{\ME}(s_1,u) - \Eenergy_{g,\expeta}^{\ME}(s_0,u) + \Eenergy^{\Lcal}_{g}(s_1,u;s_0)  
+  2 \expeta \int_{s_0}^{s_1}\int_{\MME_s} g^{\N ab} \delsN_au\delsN_bu  \crochet^{2\expeta-1} \aleph'({ r-t}) \, Jdxds
\\
&=
\int_{s_0}^{s_1}\int_{\MME_s} \Big(G_{g, \expeta}[u] + 2\expeta \aleph'({ r-t})\crochet^{2\sigma-1} \HN^{00}  |\del_t u|^2   \Big) \, Jdxds 
-2 \int_{s_0}^{s_1}\int_{\MME_s} \crochet^{2 \expeta} \del_t u F \, Jdxds, 
\endaligned
\end{equation}
in which the latter integral is bounded by 
$
\int_{s_0}^{s_1} \, (\Eenergy_{\sigma,c}^{\ME}(s,u))^{1/2} \, \big\|J \, \zeta^{-1} \crochet^{\sigma} f\big\|_{L^2(\MME_s)} \, ds
$ and the conical boundary term
\begin{equation} \label{equa-Ebfgc}
\Eenergy_{g}^{\Lcal}(s_1,u;s_0)  
= 2^{2\expeta}\int_{\{r=t-1\}\atop\rhoH(s_0)\leq r\leq \rhoH(s_1)}- \HN^{00} |\del_t u|^2 + g^{\N ab} \delsN_au\delsN_bu \, dx,
\end{equation} 
where the numerical coefficient $2^{2\sigma} = \crochet^{2\expeta}\big|_{\{r=t-1\}}$.
Furthermore, we also have a weighted energy estimate in the asymptotically hyperboloidal domain, stated as follows. 
Any solution $u: \MH_{[s_0,s_1]} \mapsto \RR$ to the wave equation \eqref{eq1-26-dec-2023}
with right-hand side $F: \MH_{[s_0,s_1]} \mapsto \RR$ satisfies 
\begin{equation}\label{equa-light-energy}
\aligned
& \Eenergy^\H_{g}(s_1,u) -\Eenergy_{g}^{\H}(s_0,u) - \Eenergy^{\Lcal}_{g}(s_1,u;s_0) 
+ 2\expeta \int_{s_0}^{s_1}\int_{\MH_s} \aleph'({ r-t}) \crochet^{2\expeta-1}\   g^{\N ab} \delsN_au\delsN_bu 
\ (s/t)dxds
\\
& =  \int_{s_0}^{s_1}\int_{\MH_s} \Big(G_{g, \expeta}[u] + 2\expeta \aleph'({ r-t})\crochet^{2\sigma-1} \HN^{00}  |\del_t u|^2   \Big)(s/t)\,dxds 
-2 \int_{s_0}^{s_1}\int_{\MH_s} \crochet^{2 \expeta} \del_t u F \, (s/t)dxds.
\endaligned
\end{equation}
Here, $\aleph'(r-t)$ and $\crochet$ are uniformly bounded in $\MH_{[s_0,s_1]}$, and the estimate in fact independent of $\expeta$.

For convenience in the discussion, we introduce the energy for the metric components 
\begin{equation}
\EMenergy_{g,\expeta}(s,u) =: 
\int_{\Mcal_s}V_{g,\expeta}[u]  \cdot n_s\,d\sigma_s, 
\end{equation}
together with 
\begin{equation}
\EMenergy_{g,\expeta}^{\ME}(s,u) := \int_{\MME_s}V_{g,\expeta}[u]  \cdot n_s\,d\sigma_s,
\quad
\EMenergy_g^{\Hcal}(s,u) := \int_{\MH_s}V_g[u]\cdot n_s d\sigma_s.
\end{equation}
For the effective curvature, we introduce
\begin{equation}\label{eq1-01-dec-2023}
\ECenergy_{g,\expeta,\coeffi}(s,\phi) :=  \int_{\Mcal_s} 3V_{g,\expeta}[\coeffi\phi]\cdot n_s + \coeffi \crochet^{2\expeta} \Ufrak'(0)\phi^2\, d\sigma_s, 
\end{equation}
while $\ECenergy_{g,\coeffi}^{\Hcal}(s,\phi)$ and $\ECenergy_{g,\expeta,\coeffi}^{\ME}(s,\phi)$ denote its restriction in the hyperboloidal and merging-Euclidean regions. Finally, we introduce
\begin{equation}
\ESenergy_{g,\expeta}(s,\psi) = \int_{\Mcal_s} V_{g,\expeta}[\psi]\cdot n_s + \crochet^{2\expeta}V'(0)\psi^2\, d\sigma_s
\end{equation}
for the scalar field, as well as $\ESenergy_g^{\Hcal}(s,\psi), \ESenergy_{g,\expeta}^{\EM}(s,\psi)$ for its restrictions. We also have following quantities for the conical boundary terms:
\begin{equation}
\ECenergy_{g,\coeffi}^{\Lcal}(s_1,\phi;s_0) := 3\Eenergy_{g}^{\Lcal}(s_1,\coeffi \phi;s_0) 
+ 2^{2\sigma}\coeffi\int_{\rhoH(s_0)\leq r\leq \rhoH(s_1)}\Ufrak'(0)\phi^2\big|_{\{r=t-1\}} dx 
\end{equation}
and
\begin{equation}
\ESenergy_{g}^{\Lcal}(s_1,\psi;s_0) := \Eenergy_{g}^{\Lcal}(s_1,\psi;s_0) + 2^{2\expeta}\int_{\rhoH(s_0)\leq r\leq \rhoH(s_1)} V'(0)\psi^2\big|_{\{r=t-1\}}dx.
\end{equation}

When $g = \eta$ the flat Minkowski metric, the above curved energies reduce to the following flat ones:
\begin{equation}
\aligned
\EMenergy_{\expeta}(s,u) 
&=: 
 \int_{\Mcal_s} \Big(\zeta^2|\del_t u|^2 + \sum_a |\delsEH_au|^2  \Big) \,
\crochet^{2\expeta} \, dx, 
\\
&
= \int_{\Mcal_s}
\Big(\zeta^2 \sum_a|\del_au|^2 
+ (\del_r \fTime(s,r))^2 
\sum_{a<b} |\Omega_{ab}u|^2 
+ |\delsEH_ru|^2 \Big) \, \crochet^{2\expeta} \, dx
\endaligned
\end{equation}
Furthermore, we set
\begin{equation}
\aligned
\ECenergy_{\expeta,\coeffi}(s,\phi) 
&: =  \int_{\Mcal_s} \Big(3\zeta^2|\coeffi\del_t \phi|^2 + 3\sum_a |\coeffi\delsEH_a\phi|^2 + \coeffi\Ufrak'(0)\phi^2 \Big) \,
\crochet^{2\expeta} \, dx, 
\\
&
= \int_{\Mcal_s} \Big(3\zeta^2 \sum_a|\coeffi \del_au|^2 
+ 3(\del_r \fTime(s,r))^2 
\sum_{a<b} |\coeffi\Omega_{ab}u|^2 
+ 3|\coeffi \delsEH_r u|^2 + \coeffi\Ufrak'(0)\phi^2\Big) \, \crochet^{2\expeta} \, dx
\endaligned
\end{equation}
for the effective curvature, and
\begin{equation}
\aligned
\ESenergy_{\expeta}(s,\psi) 
&: =  \int_{\Mcal_s} \Big(\zeta^2|\del_t \psi|^2 + \sum_a |\delsEH_a\psi|^2 + V'(0)\psi^2\Big) \,
\crochet^{2\expeta} \, dx, 
\\
&
= \int_{\Mcal_s}
\Big(\zeta^2\sum_a|\del_au|^2 
+ (\del_r \fTime(s,r))^2 
\sum_{a<b} |\Omega_{ab}u|^2 
+ |\delsEH_r u|^2 + V'(0)\psi^2 \Big) \, \crochet^{2\expeta} \, dx. 
\endaligned
\end{equation}
Then parallel to \eqref{eq1-27-11-2022-M}, we also have the following energy estimates.
\begin{proposition}
Consider sufficiently regular solutions $u,\phi$ and $\psi$ to the following wave/Klein-Gordon equations within in the domain $\Mcal_{[s_0,s_1]}$:
\begin{subequations}
\begin{equation}
\BoxChapeaud_\gconf u = S_m,
\end{equation}
\begin{equation}
3 \coeffi\, \BoxChapeaud_\gconf \phi - \Ufrak'(0)\phi = S_c,
\end{equation}
\begin{equation}
\BoxChapeaud_\gconf \psi - V'(0)\psi = S_s. 
\end{equation}
Then the following energy identities hold for the merging-Euclidean region:
\end{subequations}
\begin{subequations}
\begin{equation}
\aligned
&\EMenergy_{g,\expeta}^{\ME}(s_1,u) - \EMenergy_{g,\expeta}^{\ME}(s_0,u) + \EMenergy^{\Lcal}_{g}(s_1,u;s_0)  
\\
& \qquad +  2 \expeta \int_{s_0}^{s_1}\int_{\MME_s} g^{\N ab} \delsN_au\delsN_bu  \crochet^{2\expeta-1} \aleph'({ r-t}) \, Jdxds
\\
& = 
\int_{s_0}^{s_1}\int_{\MME_s} \Big(G_{g, \expeta}[u] + 2\expeta \aleph'({ r-t})\crochet^{2\sigma-1} \HN^{00}  |\del_t u|^2   \Big) \, Jdxds 
-2 \int_{s_0}^{s_1}\int_{\MME_s} \crochet^{2 \expeta} \del_t u S_m \, Jdxds, 
\endaligned
\end{equation}
\begin{equation}
\aligned
&\ECenergy_{g,\expeta,\coeffi}^{\ME}(s_1,\phi) - \ECenergy_{g,\expeta,\coeffi}^{\ME}(s_0,\phi) + \ECenergy_{g,\coeffi}^{\Lcal}(s_1,\phi;s_0) 
\\
& \qquad + 6\sigma\int_{\MME_s} \big( \coeffi^2g^{\N ab} \delsN_a\phi\delsN_b\phi + \frac{1}{3}\coeffi\Ufrak'(0)\phi^2
\big)  \aleph'({ r-t}) \crochet^{2\expeta-1}\,Jdxds
\\
& =  3\int_{s_0}^{s_1}\int_{\MME_s} \Big(G_{g, \expeta}[\coeffi \phi] + 2 \expeta\aleph'({ r-t})\crochet^{2\expeta-1} \HN^{00}  |\coeffi\del_t \phi|^2   \Big)\ Jdxds 
- 2\int_{s_0}^{s_1}\int_{\MME_s}  \coeffi \crochet^{2 \expeta}\del_t \phi S_c \, Jdxds, 
\endaligned
\end{equation}
\begin{equation}
\aligned
&\ESenergy_{g,\expeta}^{\ME}(s_1,\psi) - \ESenergy_{g,\expeta}^{\ME}(s_0,\psi) + \ESenergy_{g,\expeta}^{\Lcal}(s_1,\phi;s_0)
\\
& \qquad + 2\sigma\int_{\Mcal_s} \big(g^{\N ab} \delsN_a\psi\delsN_b\psi + V'(0)\psi^2\big)  \aleph'({ r-t}) \crochet^{2\expeta-1}\ Jdxds
\\
&= \int_{\Mcal_s} \Big(G_{g, \expeta}[\psi] + 2\expeta\aleph'({ r-t})\crochet^{2\expeta-1}  \HN^{00}  |\del_t\psi|^2  \Big)\ Jdx ds
- 2\int_{\Mcal_s}  \crochet^{2 \expeta}\del_t\psi S_s  \, Jdxds.
\endaligned
\end{equation}
\end{subequations}
Meanwhile, the following energy identities hold for the hyperbolic region:
\begin{subequations}
\begin{equation}
\aligned
&\EMenergy_{g}^{\H}(s_1,u) - \EMenergy_{g}^{\H}(s_0,u) - \EMenergy^{\Lcal}_{g}(s_1,u;s_0)  
\\
& \qquad +  2 \expeta \int_{s_0}^{s_1}\int_{\MH_s}\aleph'({ r-t})\crochet^{2\expeta-1}  g^{\N ab} \delsN_au\delsN_bu   \, (s/t)dxds
\\
& = 
\int_{s_0}^{s_1}\int_{\MH_s} \Big(G_{g, \expeta}[u] + 2\expeta \aleph'({ r-t})\crochet^{2\sigma-1} \HN^{00}  |\del_t u|^2   \Big) \, (s/t)dxds 
-2 \int_{s_0}^{s_1}\int_{\MH_s} \crochet^{2 \expeta} \del_t u S_m \, (s/t)dxds, 
\endaligned
\end{equation}
\begin{equation}
\aligned
&\ECenergy_{g,\coeffi}^{\H}(s_1,\phi) - \ECenergy_{g,\coeffi}^{\H}(s_0,\phi) - \ECenergy_{g,\coeffi}^{\Lcal}(s_1,\phi;s_0) 
\\
& \qquad + 6\sigma\int_{\MH_s} \big( \coeffi^2g^{\N ab} \delsN_a\phi\delsN_b\phi + \frac{1}{3}\coeffi\Ufrak'(0)\phi^2
\big)  \aleph'({ r-t}) \crochet^{2\expeta-1}\,(s/t)dxds
\\
& =  3\int_{s_0}^{s_1}\int_{\MH_s} \Big(G_{g, \expeta}[\coeffi \phi] + 2\expeta\aleph'({ r-t})\crochet^{2\expeta-1}  \HN^{00}  |\coeffi\del_t \phi|^2   \Big)\ Jdx 
- 2\int_{s_0}^{s_1}\int_{\MH_s}  \coeffi \crochet^{2 \expeta}\del_t \phi S_c \,(s/t)dx, 
\endaligned
\end{equation}
\begin{equation}
\aligned
&\ESenergy_{g}^{\H}(s_1,\psi) - \ESenergy_{g}^{\H}(s_0,\psi) - \ESenergy_{g}^{\Lcal}(s_1,\phi;s_0)
\\
& \qquad + 2\sigma\int_{\MH_s} \big(g^{\N ab} \delsN_a\psi\delsN_b\psi + V'(0)\psi^2\big)  \aleph'({ r-t}) \crochet^{2\expeta-1}\ (s/t)dx
\\
&= \int_{\MH_s} \Big(G_{g, \expeta}[\psi] + 2\expeta\crochet^{2\expeta-1} \aleph'({ r-t}) \HN^{00}  |\del_t\psi|^2  \Big)\ (s/t)dx 
- 2\int_{\MH_s}  \crochet^{2 \expeta}\del_t\psi S_s  \,(s/t)dx.
\endaligned
\end{equation}
\end{subequations}
\end{proposition}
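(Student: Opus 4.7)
The plan is to apply the multiplier method to each of the three equations and then integrate the resulting pointwise divergence identity over $\MME_{[s_0,s_1]}$ and $\MH_{[s_0,s_1]}$ separately via Stokes' theorem. For the wave equation $\BoxChapeaud_\gconf u = S_m$ the identity \eqref{eq2-26-dec-2023} is already in hand with $F = S_m$; it only remains to integrate it on each region. For the Klein-Gordon equation $3\coeffi \BoxChapeaud_\gconf \phi - \Ufrak'(0)\phi = S_c$, I would contract with $-2\coeffi \crochet^{2\expeta}\del_t\phi$: the $\BoxChapeaud_\gconf$ piece reproduces three copies of the identity \eqref{eq2-26-dec-2023} applied to the rescaled field $\coeffi\phi$, while the mass contribution is rewritten as
\[
2\coeffi \crochet^{2\expeta}\Ufrak'(0)\phi\,\del_t\phi
= \del_t\Bigl(\coeffi \crochet^{2\expeta}\Ufrak'(0)\phi^2\Bigr) + 2\expeta\coeffi \Ufrak'(0)\phi^2 \crochet^{2\expeta-1}\aleph'(r-t),
\]
using $\del_t\crochet^{2\expeta} = -2\expeta\crochet^{2\expeta-1}\aleph'(r-t)$. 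The first term on the right adds the mass density $\coeffi \Ufrak'(0)\phi^2 \crochet^{2\expeta}$ to the time component of the flux, producing the slice energy $\ECenergy_{g,\expeta,\coeffi}$, while the second combines with $3\,\Omega_{g,\expeta}[\coeffi\phi]$ to give exactly the positive spacetime integrand $6\sigma\bigl(\coeffi^2 g^{\N ab}\delsN_a\phi\delsN_b\phi+ \frac{1}{3}\coeffi\Ufrak'(0)\phi^2\bigr)\aleph'(r-t)\crochet^{2\expeta-1}$ stated in the proposition. The scalar field case is entirely analogous with multiplier $-2\crochet^{2\expeta}\del_t\psi$.

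Integration by Stokes is carried out in the coordinates $(s,x)$, using the Jacobian $J = \del_s \fTime$ to convert $dt\,dx$ to $ds\,dx$. The two spacelike boundaries $\Mcal_s \cap \MME$ at $s=s_0,s_1$ contribute the localized slice energies via $V_{g,\expeta}\cdot n_s\,d\sigma_s$, and similarly on the hyperboloidal side, where the resulting energy is $\expeta$-independent because $\crochet$ is uniformly bounded on $\MH_{[s_0,s_1]}$.

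The principal new feature compared to \eqref{eq1-27-11-2022-M} is the appearance of the lateral boundary $\{r=t-1\}$, and this is where I expect the main bookkeeping to lie. On this null cone the weight freezes to $\crochet^{2\expeta} = (1+\aleph(-1))^{2\expeta} = 2^{2\expeta}$ by construction of $\aleph$. The outward conormal from $\MME$, viewed as a 1-form, is proportional to $(1,-x^a/r)$, so the flux of $V_{g,\expeta}[u]$ through the cone is the projection of this vector onto the ingoing null direction $\del_t - (x^a/r)\del_a$. Expanding $g^{\alpha\beta} = \Minsk^{\alpha\beta} + H^{\alpha\beta}$ and rewriting in the null frame $(\del_t,\delsN_a)$, the Minkowski part collapses to $g^{\N ab}\delsN_a u\delsN_b u$ (the ``bad'' $|\del_r u|^2$ contribution cancelling against a cross-term), and the curved part leaves only $-\HN^{00}|\del_tu|^2$; this reproduces precisely the integrand in \eqref{equa-Ebfgc} and hence yields $+\Eenergy_g^\Lcal(s_1,u;s_0)$ on the $\MME$ side. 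Reversing the orientation of the conormal gives the sign $-\Eenergy_g^\Lcal$ on the $\MH$ side. For the Klein-Gordon fields, the perfect-derivative pieces $\del_t(\coeffi\crochet^{2\expeta}\Ufrak'(0)\phi^2)$ and $\del_t(\crochet^{2\expeta}V'(0)\psi^2)$ contribute $2^{2\expeta}\coeffi\Ufrak'(0)\phi^2$ and $2^{2\expeta}V'(0)\psi^2$ along the cone, combining with the wave boundary term to assemble $\ECenergy_{g,\coeffi}^\Lcal$ and $\ESenergy_g^\Lcal$. As a consistency check, summing the $\MME$ and $\MH$ identities for each field must reproduce the global identity \eqref{eq1-27-11-2022-M} augmented by the mass contributions, the conical boundary terms cancelling.
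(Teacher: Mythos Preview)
Your proposal is correct and follows exactly the approach the paper takes: the paper's proof is a two-sentence remark that these are standard energy estimates with the multiplier $-2\crochet^{2\expeta}\kappa\del_t\phi$ used for $\ECenergy_{g,\expeta,\coeffi}$, and you have simply spelled out the details of that computation (the mass-term manipulation, the Stokes integration with Jacobian $J$, and the conical boundary contribution).
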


\begin{proof}
These are similar to the standard energy estimate for wave/Klein-Gordon equations with curved background metric. However, for $\ECenergy_{g,\expeta,\coeffi}(s,\phi)$, we apply the multiplier $-2\crochet^{2\expeta}\kappa\del_t\phi$. 
\end{proof}

\bse
Furthermore, for convenience in the presentation we also introduce the following notation: 
\begin{equation}
\aligned
\FMenergy_{g,\expeta}(s,u) &:= \big(\EMenergy_{g,\expeta}(s,u)\big)^{1/2},
\\
\FMenergy_{g,\expeta}^{\ME}(s,u) &:= \big(\EMenergy_{g,\expeta}^{\ME}(s,u)\big)^{1/2},
\quad
\\
\FMenergy_g^{\Hcal}(s,u) &:= \big(\EMenergy_g^{\Hcal}(s,u)\big)^{1/2}, 
\endaligned
\end{equation}
and 
\begin{equation}
\aligned
\FCenergy_{g,\expeta,\coeffi}(s,\psi) &:= \big(\ECenergy_{g,\expeta,\coeffi}(s,\phi)\big)^{1/2},
\\
\FCenergy_{g,\expeta,\coeffi}^{\ME}(s,\psi) &:=  \big(\ECenergy_{g,\expeta,\coeffi}^{\ME}(s,\phi)\big)^{1/2},\quad
\\
\FCenergy_{g,\coeffi}^{\Hcal}(s,\psi) &:=  \big(\ECenergy_{g,\coeffi}^{\Hcal}(s,\phi)\big)^{1/2}, 
\endaligned
\end{equation}
as well as 
\begin{equation}
\aligned
\FSenergy_{g,\expeta}(s,\psi) &:= \big(\ESenergy_{g,\expeta}(s,\psi)\big)^{1/2},
\\
\FSenergy_{g,\expeta}^{\ME}(s,\psi) &:= \big(\ESenergy_{g,\expeta}^{\ME}(s,\psi)\big)^{1/2},
\quad
\\ 
\FSenergy_g^{\Hcal}(s,\psi) &:=  \big(\ESenergy_g^{\Hcal}(s,\psi)\big)^{1/2}.
\endaligned
\end{equation}
When $g = \eta$ is the Minkowski metric, the subscript $g$ will be omitted.
\ese
Finally, for high-order norms we use a similar notation and, for instance, without specifying the relevant subscripts 
 \begin{equation} \label{equa-defEF} 
\Eenergy^N(s,u): = \sum_{\ord(Z)\leq N} \Eenergy(s,Zu),\quad 
\qquad
\Fenergy^N(s,u) :=  \big(\Eenergy^N(s,u)\big)^{1/2},
\end{equation}
where the summation over all admissible operators (cf.~the discussion near \eqref{equa-ddd}). 

 
\section{Main statement in conformal wave gauge}
\label{section=7}

\subsection{Construction of PDE initial data}

In Section~\ref{section=4}, we have introduced the initial data set in a geometric context. According to \eqref{eq1-24-dec-2023}, we need to determine the quantities listed in \eqref{eq2-24-dec-2023} form the geometric data $\big( g_0,k_0,\phi_0,\phi_1,\psi_0,\psi_1\big)$. As we have done in \cite[Section~10.3]{PLF-YM-main}, this is a standard procedure and the essential work is an application of the wave gauge conditions on the initial slice $\{t=t_0\}$. In the present article we are exactly in the same situation so we omit the calculation therein. 

Then we need to bound the energy quantities such as $\FMenergy_{\expeta}^N(s_0,u)$. This can be done by applying the same calculation in the proof of \cite[Proposition~10.4, 10.5]{PLF-YM-main}. We will only explain how to estimate the following quantities by \eqref{eq3-23-dec-2023}.
$$
\|\coeffi \la r\ra^{\mu}Z\del \phi(t_0,\cdot)\|_{L^2(\RR^3)},\quad
\|\coeffi^{1/2} \la r\ra^{\mu}Z\phi(t_0,\cdot)\|_{L^2(\RR^3)}.
$$
This will lead us to the estimate on $\FCenergy_{\mu}^N(s_0,\phi)$. In fact due to our choice of on the lapse and shifts functions in \cite[Section~10]{PLF-YM-main},
$$
\phi(t_0,x) = \phi_0(x),\quad \del_t\phi(t_0,x) = M^{\star}\phi_1(x) 
$$
where $M^{\star}$ is the lapse of the reference metric, and, by assumption \eqref{equa-31-12-20}, is a uniformly bounded quantity (independent of $\coeffi$). For example when $g^{\star}$ is taken to be \eqref{equa-defineMS-new} , then 
$$
M^{\star} =  g^{\star}_{00} + 1 \simeq \frac{2m}{r+m},\quad m>0.
$$
Thus we apply the proof in \cite[Appendix~F]{PLF-YM-main} and obtain the bounds on initial energy quantities associate to $\phi$ as well as to $u$ and $\psi$
\begin{equation}\label{eq4-23-dec-2023}
\FMenergy_{\expeta}^N(s_0,u) + \FCenergy_{\mu,\coeffi}^N(s_0,\phi) + \FSenergy_{\mu}^N(s_0,\psi) \lesssim \eps.
\end{equation}


\subsection{Main statements in wave gauge} 
\label{section-555}

Once the initial energy quantities are bounded by the norms of geometric initial data, we are ready to state our main result in a PDEs setting.

\begin{theorem}[Nonlinear stability. Formulation in conformal wave gauge]  
\label{theo-main-result} 
Let $g^{\star}$ be a $(\lambda,\theta, \epss,N,\ell)-$ light-bending reference spacetime defined in $\RR^{1+3}_+$ with regularity order ${ N\geq 20}$ and 
for a  sufficiently small $\epss$. The parameters $\lambda,\expeta,\mu$ satisfying \eqref{eq1-20-01-2022} and \eqref{equa-all-conditions-exponents}. Then there exists a small constant $\eps_0 > 0$ such that for all initial data satisfying  
\begin{equation} \label{eqergy-boundMas-initial} 
\aligned
\FMenergy_{\expeta}(s_0, Z u_{\alpha\beta} ) + \FCenergy_{\coeffi,\mu}(s_0, Z \phi) + \FSenergy_{\mu}(s_0,Z \psi) 
& \lesssim    
\eps \leq \eps_0, 
\quad &&   
\ord(Z) \leq N,
\endaligned
\end{equation}  
the Cauchy problem \eqref{eq1-24-dec-2023} admits a global-in-time solution and the following estimates hold for all $s \geq s_0$
\begin{equation} \label{eqergy-boundMas} 
\aligned
\FMenergy_{\expeta}(s, Z u_{\alpha\beta} ) 
+ {s^{-1/2} \FCenergy_{\mu,\coeffi}(s, Z \phi)}
+ {s^{-1/2} \FSenergy_{\mu}}(s, Z\psi)
& \lesssim    
\eps \, s^{\delta},
\quad &&&& \ord(Z) \leq N-5,
\\
\FMenergy_{\expeta}(s, Z u_{\alpha\beta} ) 
+ {\FCenergy_{\mu,\coeffi}(s, Z \phi)}
+ { \FSenergy_{\mu}}(s, Z\psi)
& \lesssim        
\eps \, s^\delta,
\quad &&&& \ord(Z)  \leq N-9. 
\endaligned
\end{equation}  
Furthermore, the solution metric $g_{\alpha\beta} = g^{\star}_{\alpha\beta} + u_{\alpha\beta}$ still enjoys the light-bending property.
\end{theorem}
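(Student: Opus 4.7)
The plan is a continuity argument on the maximal existence time $s_\star$, driven by a high-order bootstrap on the energies attached to the Euclidean-hyperboloidal foliation $\{\Mcal_s\}$. Given local existence from \cite{PLF-YM-memoir} and the initial bounds \eqref{eqergy-boundMas-initial}, I would assume on $[s_0,s_\star)$ the inflated estimates
$$
\FMenergy_{\expeta}(s, Z u_{\alpha\beta}) + s^{-1/2}\FCenergy_{\mu,\coeffi}(s, Z \phi) + s^{-1/2}\FSenergy_{\mu}(s, Z\psi) \leq C_1 \eps s^\delta
$$
for every admissible $Z$ of order $\leq N-5$, together with the sharper (no $s^{-1/2}$) version for $\ord(Z)\leq N-9$, where $C_1\gg1$ is a fixed large constant. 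The goal is to recover each inequality with $C_1$ replaced by $C_1/2$, which by local continuation forces $s_\star=+\infty$ and yields \eqref{eqergy-boundMas}.

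\paragraph{Pointwise decay.}
The next step is to convert bootstrap energies into pointwise decay, separately on $\MH_s$, $\MM_s$, and $\Mext_s$, via Klainerman-Sobolev inequalities adapted to each piece. In the hyperboloidal domain, the $(s/t)$-weighted inequality of \cite{PLF-YM-book} delivers the timelike decay $|\del Zu|\lesssim \eps s^\delta t^{-1}$ and the improved good-derivative decay $|\delsH Zu|\lesssim \eps s^\delta t^{-1}(s/t)$ required by the null estimate of Lemma~\ref{Lemme1-03-dec-2023}, together with Klein-Gordon decay $t^{-3/2}$ for $\phi$ and $\psi$ that is uniform in $\coeffi$ thanks to the $\coeffi^{1/2}$ factor built into $\NCenergy$. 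In the Euclidean-merging region, the weight $\crochet^\expeta$ and the conical boundary term $\EMenergy^{\Lcal}_g$ furnish the near-light-cone decay controlling the null components $\delsN Zu$, which is what enters the Euclidean-side bounds of Lemma~\ref{lem1-02-dec-2023}.

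\paragraph{Improved energy bounds.}
With these pointwise bounds in hand, I would commute the system \eqref{eq1-24-dec-2023} with an ordered admissible $Z$ of order $\leq N$ and insert the result into the curved-space identities of Section~\ref{section=6}. The source terms split into three parts: (i) the commuted null and quasi-null Hessians $\Qbb^\star[u]$ and $\Pbb^\star[u]$, bounded through Lemmas~\ref{Lemme1-03-dec-2023}--\ref{lem1-02-dec-2023}; (ii) the modified-gravity couplings $\Gbb_\coeffi[\phi,g]$ and $\Mbb_\coeffi[\phi,\psi,g]$, bounded through Lemma~\ref{lem1-08-dec-2023}; (iii) the reference contributions $u^{\mu\nu}\del_\mu\del_\nu h^\star$ and $\wR^\star$, controlled pointwise by \eqref{equa-31-12-20}. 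On the left I would exploit the integrated good-derivative term $\int\aleph'\crochet^{2\expeta-1}g^{\N ab}\delsN_a Zu\,\delsN_b Zu$, the conical flux $\EMenergy^{\Lcal}_g$ at the Euclidean-hyperboloidal interface, and the mass coercivity of the $\phi$- and $\psi$-energies (uniform in $\coeffi$). A hierarchical Gronwall iteration in $s$, in which the low-order $N-9$ bounds multiply the top-order $N-5$ bounds, then absorbs $C_1$ into $C_1/2$ and closes the bootstrap.

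\paragraph{Main obstacle and light-bending.}
The principal difficulty, as in \cite{PLF-YM-main}, is the quasi-null Hessian $\Pbb^\star_{00}[u]$: its time-time component is not a genuine null form and resists the standard Lindblad-Rodnianski treatment. I would close it by combining the hierarchy between the $\uts$-component and the $\N ab$-components of the metric perturbation with the fine control of the auxiliary quantities $\SbbME_p[u]$ and $\Sbb_p^\H[u]$ supplied by Lemma~\ref{lem1-02-dec-2023}. New to the $f(R)$ setting is the tadpole-type term $\coeffi\Ufrak(\phi)\gconf_{\alpha\beta}$ in $\Gbb_\coeffi$ acting back on the metric equation; this is absorbed using $\Ufrak(\phi)=O(\phi^2)$ near the origin (property \eqref{eq5-03-dec-2023}) together with the $\coeffi^{1/2}$-weighted $L^2$ bound on $\phi$ built into $\NCenergy$, so that no loss in $\coeffi$ occurs. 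Once the bootstrap is closed, the light-bending condition \eqref{eq1-23-dec-2023} persists by continuity: the sharper $\ord(Z)\leq N-9$ estimate yields pointwise smallness of $u$ on $\Mscr^\near_{\ell}$, so $g(\lbf,\lbf)$ stays $O(\eps)$-close to its reference value and the strict inequality \eqref{eq1-23-dec-2023} is preserved for all $s$. Future causal geodesic completeness then follows from the standard wave-gauge argument applied to a metric with $t^{-1}$-decay of $\del g$.
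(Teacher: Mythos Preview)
Your bootstrap outline captures the broad architecture, but it understates two structural ingredients that the paper treats as essential, and without which the argument as you describe it would not close.

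First, the analysis is not carried out on the full slices $\Mcal_s$ at once. The paper runs two \emph{separate} bootstraps: one in the Euclidean--merging domain $\MME_{[s_0,s_1]}$ (at orders $N$ and $N-5$), and only afterwards one in the hyperboloidal domain $\MH_{[s_0,s_1]}$ (at orders $N-5$ and $N-9$). The exterior bootstrap must be closed first because it supplies the conical boundary energies $\Eenergy_g^{\Lcal}$ and the pointwise data on $\Lscr_{[s_0,s_1]}$ that feed into the interior estimates: the Hardy--Poincar\'e inequality for $\|s^{-1}(s/t)^{-1/4}Lu^{\H00}\|_{L^2(\MH_s)}$, the characteristic integration for $\del\usH$, and the ray integration for $\phi,\psi$ all pick up boundary contributions from the exterior that must already be controlled. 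Your single global bootstrap hides this dependence and would leave those boundary terms unbounded.

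Second, Klainerman--Sobolev alone does not deliver the pointwise decay needed to close. The direct Sobolev bounds give only $(s/t)|\del u|_{N-7}\lesssim\eps t^{-3/2}s^\delta$, which degenerates to $t^{-1/2}$ near the cone and is far too weak for the critical commutator $h^{\H00}\del\del\phi$ and for the quasi-null term $|\del\usH|^2$. The paper obtains the required sharp rates by three independent mechanisms beyond Sobolev: (i) integration along characteristics of the reduced wave operator to get $|\del\usH|_{N-4,k}\lesssim\eps r^{-1+k\theta}\crochet^{-1/2-\delta/2}$; (ii) an $L^\infty$--$L^\infty$ estimate via Kirchhoff's formula to get the near-Schwarzschild decay $|h^{\H00}|_{N-9,k}\lesssim\eps t^{-1+(k+1)\theta}$; and (iii) an ODE argument along the rays $\gamma_{t,x}$ (Proposition~\ref{prop1-23-11-2022-M}, with the $\coeffi$-weighted Lemma~\ref{lem1-03-dec-2023}) to get $\coeffi^{1/2}|\phi|_{N-9}\lesssim\eps(s/t)^{2-3\delta}s^{-3/2+k\theta}$. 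These are what make the hierarchy (induction on rank $k$) actually terminate; your ``hierarchical Gronwall iteration'' is correct in spirit but cannot start without them.
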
 

\begin{remark}
The proposed range for the parameters $\lambda,\expeta,\mu,\theta$ is not optimal but suffices to encompass a broad class of spacetimes with Schwarzschild-type asymptotic behavior. The conditions will be weakened in in future work.
\end{remark}


\section{Analysis in the Euclidean-merging domain}
\label{section-Euclidean}

\subsection{New nonlinear terms}

Comparing \eqref{eq1-24-dec-2023} with the standard Einstein-Klein-Gordon system written in harmonic gauge:
\begin{equation}\label{eq1-02-dec-2023}
\aligned
\BoxChapeaud_\gconf \gconf_{\alpha\beta}
& =\, \Fbb[g]_{\alpha\beta}  + \Mbb[\psi,g]_{\alpha\beta},
\\ 
\BoxChapeaud_\gconf \psi - V'(\psi) 
& = 0
\endaligned
\end{equation}
with 
$$
 \Mbb[\psi,g]_{\alpha\beta} = - 16 \pi \, \big(\del_\alpha \psi \del_\beta \psi + V(\psi)   \, \gconf_{\alpha \beta} \big)
$$
We observe that there are some newly appearing nonlinear terms, which concerning the interaction $\Gbb_{\alpha\beta}$ of the effective curvature field $\phi$ on the metric component $g$, and the modification on the mass-metric coupling term $\Mbb_{\alpha\beta}$. These terms are quadratic with respect to the Klein-Gordon components ($\phi,\psi$), thus their contributions are integrable when we do energy estimates. However, when we do pointwise estimates on $u$ via fundamental solution, in order to obtain $t^{-1}$ decay rate, we need to bound these terms in a more precise way. To cope with this, we need the decay estimate based on integration along rays on two Klein-Gordon components, which is also similar to what we have done for the Einstein-Klein-Gordon system. Importantly, for the effective curvature field we need to be precise about the dependency of the estimates upon the coefficient $\coeffi$. However, thanks to the structure of $\Gbb_{\kappa}, \Mbb_{\kappa}$ and the right-hand side of the equation of $\psi$ where $\coeffi$ is involved in with the right power at each place, it is possible to establish estimates which are uniform with respect to $\coeffi$.   

In the remaining of this section, we sketch of the proof by presenting the estimates associated with comparatively newly terms. 


\subsection{Bootstrap assumptions and direct estimates}

We make the following bootstrap assumptions on the time interval $[s_0,s_1]$:
\begin{subequations}\label{eqs1-14-01-2021}  
\begin{equation}\label{eq1-14-01-2021}
\FMenergy_{\expeta}^{\ME,N}(s,u) + s^{-1} \, \FCenergy_{\mu,\coeffi}^{\ME,N}(s, \phi) + s^{-1} \, \FSenergy_{\mu}^{\ME,N}(s, \psi) \leq(\epss+C_1\eps) \, s^{\delta},
\end{equation}
\begin{equation}\label{eq2-14-01-2021}
\FMenergy_{\expeta}^{\ME,N-5}(s,u) +\FCenergy_{\mu}^{\ME,N-5}(s, \phi) +\FSenergy_{\mu}^{\ME,N-5}(s, \psi) \leq(\epss+C_1\eps) \, s^{\delta}, 
\end{equation}
while, in the course of our analysis, we will also control the spacetime integral\footnote{which
we do not need to include in the set of bootstrap assumptions}
\begin{equation}\label{equa-new-spacetime-bound} 
{\mathscr G}_\kappa^{\ME,p,k}(s_0, s,u)
:=
\sum_{\ord(Z)\leq p\atop \rank(Z)\leq k}\int_{s_0}^{s}
\int_{\MME_\tau}\crochet^{2\kappa-1} |\delsN Z u|^2 J \, dxd\tau
\leq (\epss+C_1\eps)^2 s^{2\delta}.
\end{equation}
\end{subequations} 

Furthermore, near the light cone, we assume the light-bending condition
\begin{equation}\label{eq3-27-05-2020} 
\inf_{\Mscr^\near_{\ell, [s_0,s_1]}} (- \hN{}^{00} )\geq 0.
\end{equation}

The purpose is to establish the following {\bf improved} estimates:
\begin{subequations}\label{eqs'-ext}
\begin{equation}\label{eq5'-03-05-2020}
\Fenergy_{\kappa}^{\ME,N}(s,u) + s^{-1} \, \Fenergy_{\mu,c}^{\ME,N}(s, \phi) \leq\frac{1}{2}(\epss+C_1\eps) \, s^{\delta},
\end{equation}
\begin{equation}\label{eq6'-03-05-2020}
\Fenergy_{\kappa}^{\ME,N-5}(s,u) +\Fenergy_{\mu,c}^{\ME,N-5}(s, \phi) \leq \frac{1}{2}(\epss+C_1\eps) \, s^{\delta},
\end{equation}
\end{subequations}
\begin{equation}\label{eq3'-27-05-2020}
\inf_{\Mscr^\near_{\ell, [s_0,s_1]}} (- r \, \hN{}^{00} ) 
\geq  {1\over 2} \epss.
\end{equation}

The bootstrap assumptions lead to the following estimates on the $L^2$ norms:
\begin{subequations}\label{eqs3-08-dec-2023}
\begin{equation}\label{eq7a-03-05-2020}
\| \crochet^\kappa \zeta \, | \del u|_N \|_{L^2(\MME_s)}
+
\| \crochet^\kappa |\delts u|_N\|_{L^2(\MME_s)} 
\lesssim (\epss+C_1\eps) \, s^{\delta}, 
\end{equation}
\begin{equation}\label{eq1-12-05-2020}
\| \crochet^{-1+\kappa} |u|_{p,k} \|_{L^2(\MME_s)} 
\lesssim \delta^{-1} \, \Fenergy_{\kappa}^{\ME,p,k}(s,u) + \Fenergy_{\kappa}^{0}(s,u)
\lesssim \delta^{-1}  (\epss+C_1\eps)  \, s^{\delta},
\end{equation}
\begin{equation}\label{eq7b-03-05-2020}
\aligned
& \| \crochet^\mu \zeta \, |\coeffi\del \phi|_p\|_{L^2(\MME_s)} + 
\| \crochet^\mu |\coeffi\delts \phi|_p\|_{L^2(\MME_s)} +
\| \crochet^\mu |\Ufrak'(0)\coeffi^{1/2}\phi|_p\|_{L^2(\MME_s)}
\\
&
\lesssim
(\epss+C_1\eps) \, 
\begin{cases} 
s^{1+\delta}, \quad & p=N,
\\
s^{\delta}, & p=N-5,
\end{cases}
\endaligned
\end{equation}
\begin{equation}\label{eq2-08-dec-2023}
\aligned
& \| \crochet^\mu \zeta \, |\del \psi|_p\|_{L^2(\MME_s)} + 
\| \crochet^\mu |\delts \psi|_p\|_{L^2(\MME_s)} +
\| \crochet^\mu |V'(0)\psi|_p\|_{L^2(\MME_s)}
\\
&
\lesssim
(\epss+C_1\eps) \, 
\begin{cases} 
s^{1+\delta}, \quad & p=N,
\\
s^{\delta}, & p=N-5,
\end{cases}
\endaligned
\end{equation}
\end{subequations}
We apply the global Sobolev inequalities Proposition~\ref{Sobolev-ext}, and obtain
\begin{subequations}\label{eqs1-08-dec-2023}
\begin{equation}
r\crochet^{\expeta}|\del u|_{N-3} + r^{1+\expeta}|\delsN u|_{N-3}\lesssim (\epss+C_1\eps)s^{\delta},
\end{equation}
\begin{equation}
r \, \crochet^{\expeta-1} \,  |u|_{N-2} \lesssim \delta^{-1} \, (\epss+C_1\eps)  \, s^{\delta}.
\end{equation}
\end{subequations}
This leads us to
\begin{equation}\label{eq6-09-dec-2023}
r\crochet^{\expeta}|\del h|_{N-3} + \delta r\crochet^{-1+\expeta}|h|_{N-2}\lesssim (\epss+C_1\eps)s^{\delta}.
\end{equation}

Throughout we assume the Class A conditions in \eqref{equa-31-12-20}. 
We find 
\begin{subequations}
\begin{equation}
r\crochet^{\mu}|\coeffi \del \phi|_{p-3} + r^{1+\mu}|\coeffi \delsN \phi|_{p-3} +  r\crochet^{\mu}\Ufrak'(0)|\coeffi^{1/2} \phi|_{p-2}
\lesssim 
\begin{cases}
\aligned
&(\epss+C_1\eps)s^{1+ \delta},\quad &&p=N,
\\
&(\epss+C_1\eps)s^{\delta},\quad &&p=N-5,
\endaligned
\end{cases}
\end{equation}
\begin{equation}
r\crochet^{\mu}|\del \psi|_{p-3} + r^{1+\mu}|\delsN \psi|_{p-3} +  r\crochet^{\mu}V'(0)|\psi|_{p-2}
 \lesssim 
\begin{cases}
\aligned
&(\epss+C_1\eps)s^{1+ \delta},\quad &&p=N,
\\
&(\epss+C_1\eps)s^{\delta},\quad &&p=N-5.
\endaligned
\end{cases}
\end{equation}
\end{subequations}  
Then we substitute the above bounds into the equations
$$
3\coeffi\Box \phi - \Ufrak'(0)\phi = -3\coeffi h^{\mu\nu}\del_{\mu}\del_{\nu}\phi ,\quad
\Box \psi - V'(0)\psi = -h^{\mu\nu}\del_{\mu}\del_{\nu}\psi
$$
and recall Proposition~\ref{lem 1 d-KG-e}, in order to obtain 
\begin{subequations}\label{eqs2-08-dec-2023}
\begin{equation}\label{eq2-09-dec-2023}
r \, \crochet^\mu \, \Ufrak'(0)|\phi|_{p-4} 
\lesssim 
(\epss+C_1\eps) r^{-1}\crochet 
\begin{cases}
s^{1+2\delta},  \quad  
& p=N,
\\
s^{2\delta}, \quad 
&  p=N-5, 
\end{cases}
\end{equation}
together with 
\begin{equation}
r \, \crochet^\mu \, V'(0) |\psi|_{p-4} 
\lesssim 
(\epss+C_1\eps)r^{-1}\crochet
\begin{cases}
s^{1+2\delta},  \quad  
& p=N,
\\
s^{2\delta}, \quad 
&  p=N-5. 
\end{cases}
\end{equation}
\end{subequations}
The pointwise estimates \eqref{eqs1-08-dec-2023} and \eqref{eqs2-08-dec-2023} are refered to as the {\sl direct pointwise bounds}. They are sufficient for deriving $L^2$ estimates for the non-critical terms, including $\Gbb_{\coeffi}[\phi,g]_{\alpha\beta}$ and $\Mbb_{\coeffi}[\phi,\psi,g]$ which we treat below.


\subsection{$L^2$ and pointwise estimates on $\Gbb_{\coeffi}[\phi,g]_{\alpha\beta}$ and $\Mbb_{\coeffi}[\phi,\psi,g]$}

We will establish the following estimates:
\begin{equation}\label{eq4-09-dec-2023}
\|J\zeta^{-1}\crochet^{\expeta}|\Gbb_{\coeffi}[\phi,g]|_{N}\|_{L^2(\MME_s)} + \|J\zeta^{-1}\crochet^{\expeta}|\Mbb_{\coeffi}[\phi,\psi,g]|_{N}\|_{L^2(\MME_s)}\lesssim (\epss + C_1\eps)^2s^{-1-\delta}.
\end{equation}
\begin{equation}\label{eq3-27-01-2021}
(r \, \crochet^\kappa)^2
\big(|\Gbb_{\coeffi}[\phi,g]|_p + |\Mbb_{\coeffi}[\phi,\psi,g]|_p\big)\lesssim
(\epss+C_1\eps)^2
r^{-3}\crochet^{1-2\mu}s^{1+3\delta}. 
\end{equation}

\begin{proof} These bounds are sufficient to make the mechanism in \cite{PLF-YM-main} work. For the proof, we only need to substitute \eqref{eqs3-08-dec-2023},  \eqref{eqs1-08-dec-2023} and \eqref{eqs2-08-dec-2023} into the estimates in Lemma~\ref{lem1-08-dec-2023}. This is in fact the same calculation that we have done in \cite{PLF-YM-main} on the Klein-Gordon quadratic terms coupled in Einstein equation. We need to be careful about the power of the coefficient $\coeffi$ in each term. Fortunately, in $\Gbb_{\coeffi}[\phi,g]$, each factor $\del \phi$ is multiplied by a $\coeffi$ and each factor $\phi$ is multiplied by a $\coeffi^{1/2}$ (see in \eqref{eq1-09-dec-2023}), which is compatible with their expression in the energy $\FCenergy$. For the term $\Mbb_{\coeffi}[\phi,\psi,g]$, the situation is similar. The factor $\del\phi$ is multiplied by a $\coeffi$ (see \eqref{eq2-03-dec-2023}). The desired estimates are thus established exactly as in \cite{PLF-YM-main}.
\end{proof}


\subsection{Sharp decay and energy estimates for the metric tensor}  

 Once we have established \eqref{eq4-09-dec-2023} and \eqref{eq3-27-01-2021}, we can adopt the setup described in \eqref{equa-31-12-20}, namely {\bf Class A} of \cite{PLF-YM-main}.
 The contributions of Klein-Gordon components on the metric components enjoy the same $L^2$ and pointwise estimates. We thus arrive directly on the following {\sl sharp decay estimates} established in \cite{PLF-YM-main}. 

\begin{proposition} Assume the conditions {\bf Class A}  in \eqref{equa-31-12-20}. 
1.  
The metric and the perturbation satisfy
\begin{equation}\label{eq7-26-03-2021}
\aligned
|\del u|_{N-4,k} + |\del H|_{N-4,k}
& \lesssim  (\ell^{-\delta/2} + \delta^{-2})(\epss + C_1\eps) r^{-1+k\theta}\crochet^{-1/2-\delta/2} 
\, \text{ in } \Mnear_{\ell, [s_0,s_1]},
\, 0\leq k\leq N-4, 
\\
|\del\del u|_{N-5,k} + |\del\del H |_{N-5,k}& \lesssim  (\ell^{-\delta} + \delta^{-2})(\epss + C_1\eps) r^{-1+k\theta}\crochet^{-1-\delta}
\,
  \text{ in } \Mnear_{\ell, [s_0,s_1]}, 
\, 0\leq k\leq N-5.
\endaligned
\end{equation}
2. Furthermore, in the whole exterior domain, the lapse, orthogonal, and radial components of the metric enjoy the near-Schwarz\-schild decay 
\begin{equation}\label{eq3-29-03-2021}
| h^{00}, h^\rr, h^{0a} |_k\lesssim (\epss + C_1\eps)r^{-1+(k+1) \theta}
\quad \text{ in } \MME_{[s_0,s_1]}, 
\qquad 0\leq k\leq N-5.
\end{equation}
\end{proposition}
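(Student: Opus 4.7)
My strategy is to reduce the statement to the machinery of the companion paper \cite{PLF-YM-main}, by checking that the two $f(R)$-specific source terms $\Gbb_{\coeffi}[\phi,g]$ and $\Mbb_{\coeffi}[\phi,\psi,g]$ appearing in \eqref{eq1-24-dec-2023} behave no worse than the Klein-Gordon matter source treated there in the Einstein-massive-field case. That is precisely the content of the weighted $L^2$ bound \eqref{eq4-09-dec-2023} and the pointwise bound \eqref{eq3-27-01-2021}, together with the Class~A conditions \eqref{equa-31-12-20} on $h^\star$. I rewrite the equation for $u_{\alpha\beta}$ as $\BoxChapeaud_g u_{\alpha\beta} = S_{\alpha\beta}$, where $S_{\alpha\beta}$ collects $\Fbb^\star[u]_{\alpha\beta}$, $\Gbb_{\coeffi}[\phi,g]_{\alpha\beta}$, $\Mbb_{\coeffi}[\phi,\psi,g]_{\alpha\beta}$, $\Ibb^\star_{\alpha\beta}[u]$, the commutator $-u^{\mu\nu}\del_\mu\del_\nu h^\star_{\alpha\beta}$, and $2\wR^\star_{\alpha\beta}$. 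Lemmas~\ref{Lemme1-03-dec-2023}--\ref{lem1-02-dec-2023}, together with the direct bounds \eqref{eqs3-08-dec-2023}, \eqref{eq6-09-dec-2023} and the decay of $\wR^\star$ in \eqref{equa-31-12-20}, already give the required weighted $L^2$ and pointwise estimates on every contribution to $S_{\alpha\beta}$ other than the two new ones; \eqref{eq4-09-dec-2023} and \eqref{eq3-27-01-2021} close the gap, so that $S_{\alpha\beta}$ as a whole fits the Class~A source class on which the entire analysis of \cite{PLF-YM-main} is built.

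For the sharp pointwise decay \eqref{eq7-26-03-2021} on $|\del u|_{N-4,k} + |\del H|_{N-4,k}$ in the near-light cone $\Mnear_{\ell,[s_0,s_1]}$, I would commute admissible operators $Z$ with $\ord(Z)\leq N-4$ and $\rank(Z)\leq k$ through the wave equation, using the null-structure Lemma~\ref{Lemme1-03-dec-2023} to route bad derivative products into tangential $\delsN$-derivatives that are controlled by the spacetime integral \eqref{equa-new-spacetime-bound}, and then apply the characteristic integration formula of \cite{PLF-YM-main}. Integrating a Kirchhoff-type kernel along outgoing rays of the background cone against the source estimate \eqref{eq3-27-01-2021} produces the sharp factor $r^{-1}\crochet^{-1/2-\delta/2}$, while the $r^{k\theta}$ loss arises from successive commutations with boosts $L$ and rotations $\Omega$ acting on the tame-decaying $h^\star$-components. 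A further commutation plus a bootstrap upgrade then yields the second-order bound for $|\del\del u|_{N-5,k}$.

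For the near-Schwarzschild decay \eqref{eq3-29-03-2021} of the lapse, orthogonal, and radial components $h^{00}, h^\rr, h^{0a}$ throughout the exterior $\MME_{[s_0,s_1]}$, I would exploit the conformal wave-gauge identities $\Gamma^\gamma \equiv 0$ from \eqref{eq1-27-oct-2023}: after expansion in the semi-null frame \eqref{equa-snf}, these four scalar identities become transport ODEs along outgoing characteristics for precisely the three components above. Integrating from the initial slice, where the admissibility of the data and \eqref{equa-31-12-20}--\eqref{eq1-06-01-2022} ensure the $r^{-1+(k+1)\theta}$ initial profile, propagates this profile along the flow; the forcing contributed by $\Gbb_{\coeffi}$ and $\Mbb_{\coeffi}$ is integrable in $r$ along the cone thanks to \eqref{eq3-27-01-2021} and the Klein-Gordon direct bound \eqref{eqs2-08-dec-2023}.

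The main technical obstacle is not a new phenomenon but a delicate $\coeffi$-bookkeeping: one must check that every factor of $\coeffi$ sitting inside $\Gbb_{\coeffi}$ or $\Mbb_{\coeffi}$ is exactly compensated by a matching $\coeffi$ or $\coeffi^{1/2}$ absorbed in the definition of $\FCenergy$ in \eqref{eq1-01-dec-2023}, so that all transferred bounds remain \emph{uniform} as $\coeffi \to 0^+$. This balancing has already been encoded in Lemma~\ref{lem1-08-dec-2023} and in the verification of \eqref{eq4-09-dec-2023}--\eqref{eq3-27-01-2021}, which is why the arguments of \cite{PLF-YM-main} extend verbatim to the present $f(R)$-system and deliver both \eqref{eq7-26-03-2021} and \eqref{eq3-29-03-2021}.
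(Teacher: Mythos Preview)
Your proposal is correct and follows essentially the same approach as the paper: the paper's own argument is precisely that, once the source bounds \eqref{eq4-09-dec-2023} and \eqref{eq3-27-01-2021} place $\Gbb_{\coeffi}$ and $\Mbb_{\coeffi}$ in the same Class~A framework as the Klein-Gordon matter source of \cite{PLF-YM-main}, the sharp decay estimates \eqref{eq7-26-03-2021} and \eqref{eq3-29-03-2021} are imported verbatim from that companion paper without further work here. Your additional sketch of the internal mechanisms of \cite{PLF-YM-main} (characteristic integration for $|\del u|$, wave-gauge transport for $h^{00},h^{\rr},h^{0a}$, $\coeffi$-bookkeeping) is plausible and broadly accurate, but the present paper does not reprove those steps and simply cites the result.
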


Once the above sharp decay estimates are established, we repeat the argument in  \cite{PLF-YM-main} for the energy estimate on $u$. The only difference is that we have tow extra source terms $\Gbb_{\coeffi}[\phi,g], \Mbb_{\coeffi}[\phi,\psi,g]$. However, their $L^2$ bounds \eqref{eq4-09-dec-2023} and \eqref{eq3-27-01-2021} are integrable. Thus we conclude exactly as in \cite{PLF-YM-main}.  The energy bound that we obtain can be written into the following form:
\begin{equation}\label{eq5-17-dec-2023}
\FMenergy_{g,\expeta}^{\ME, N-5}(s,u)\lesssim (\epss+C_1\eps)^2 s^{C_2(\epss+C_1\eps)+N\theta},
\end{equation}
where $C_2$ is a constant determined by the reference metric, $\delta$ and $N$.

\begin{remark}
For $|\del u|, |\del\del u|$ and $|\del H|, |\del\del H|$, when one is far from the light cone, the direct pointwise bounds are already sufficient because $\crochet \simeq r$.
\end{remark}


\subsection{Energy estimates for Klein-Gordon components}

The energy estimates on the scalar field $\psi$ is slightly different: we have a source term in the right-hand side on which we should give sufficient estimates. Based on \eqref{eqs2-08-dec-2023}, we easily get the following $L^2$ estimate:
\begin{equation}\label{eq7-09-dec-2023}
\|J\zeta^{-1}\crochet^{\mu}\coeffi |g(\del\phi,\del\psi)|_N\|_{L^2(\MME_s)}\lesssim (\epss+C_1\eps)^2s^{-1-\delta}, 
\end{equation}
which is integrable in $s$. However, we need to modify the decomposition of commutators.
 Let us recall the following result established in \cite{PLF-YM-main}.

\begin{lemma}[Decomposition of quasi-linear terms for Klein-Gordon fields]
\label{lem1-14-03-2021}
For every  $C^2$ function defined in $\MME_{[s_0,s_1]}$, with the notation
$H^\rr : =(x^ax^b/r^2)H^{ab}$ one has 
\begin{equation}\label{eq3-09-dec-2023}
\aligned
H^{\alpha\beta}\del_{\alpha}\del_{\beta}\phi 
& =  \frac{H^{00} + H^{\rr}}{1+H^{\rr}} \del_t\del_t\phi + \frac{2H^{a0}}{1+H^{\rr}}\del_a\del_t\phi 
+ \frac{H^\rr}{1+H^{\rr}} \BoxChapeaud_\gconf \phi
+ \frac{H^{ab} - H^{\rr}g_{\Mink}^{ab}}{r(1+H^{\rr})} \, \mathbb{D}_{ab}[\phi], 
\endaligned
\end{equation}
where 
$$
|\mathbb{D}_{ab}[u]|\lesssim |\del u|_{1,1}.
$$
\end{lemma}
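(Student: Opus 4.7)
The claimed formula is an algebraic identity, and I would prove it in two main moves. First, use the definition of the reduced wave operator to trade the flat Laplacian $\Delta\phi$ for $\BoxChapeaud_g\phi$ plus the full quadratic correction $H^{\alpha\beta}\del_\alpha\del_\beta\phi$. Second, split the spatial block of $H$ into an isotropic part $H^\rr\delta^{ab}$, which is absorbed into $\BoxChapeaud_g\phi$ and $\del_t^2\phi$, and a remainder $H^{ab}-H^\rr\delta^{ab}$ whose structure transverse to $\omega\otimes\omega$ (with $\omega_a=x^a/r$) is what produces the $r^{-1}$ factor.

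For the algebraic step, the identity $\BoxChapeaud_g = \Box_\eta + H^{\mu\nu}\del_\mu\del_\nu$ gives
\[
\Delta\phi \;=\; \BoxChapeaud_g\phi + \del_t^2\phi - H^{\alpha\beta}\del_\alpha\del_\beta\phi.
\]
I would insert this into the split $H^{ab}\del_a\del_b\phi = H^\rr\Delta\phi + (H^{ab}-H^\rr\delta^{ab})\del_a\del_b\phi$, then into the expansion $H^{\alpha\beta}\del_\alpha\del_\beta\phi = H^{00}\del_t^2\phi + 2H^{0a}\del_t\del_a\phi + H^{ab}\del_a\del_b\phi$, and collect all occurrences of $H^{\alpha\beta}\del_\alpha\del_\beta\phi$ on the left to obtain
\[
(1+H^\rr)\,H^{\alpha\beta}\del_\alpha\del_\beta\phi = (H^{00}+H^\rr)\del_t^2\phi + 2H^{0a}\del_t\del_a\phi + H^\rr\,\BoxChapeaud_g\phi + (H^{ab}-H^\rr\delta^{ab})\del_a\del_b\phi.
\]
Division by $(1+H^\rr)>0$, which is available under the smallness of $H$ built into the bootstrap regime, reproduces the announced formula provided the last term can be rewritten as $\tfrac{H^{ab}-H^\rr\delta^{ab}}{r\,(1+H^\rr)}\,\mathbb{D}_{ab}[\phi]$ with $\mathbb{D}_{ab}$ of the stated rank.

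The key structural observation behind this last identification is that $T^{ab}:=H^{ab}-H^\rr\delta^{ab}$ satisfies $\omega_a\omega_b T^{ab}=0$. Decomposing each spatial partial as $\del_a = \omega_a\del_r + r^{-1}(\omega^c\Omega_{ca})$, one expands $\del_a\del_b\phi$ into a purely radial piece $\omega_a\omega_b\del_r^2\phi$ (annihilated by $T^{ab}$), mixed terms of type $\omega_a\,r^{-1}\omega^c\Omega_{cb}\del_r\phi$ and its transpose, a doubly angular piece $r^{-2}(\omega^c\Omega_{ca})(\omega^d\Omega_{db})\phi$, together with first-order commutator contributions arising when $\del_a$ hits $\omega_b$, $r^{-1}$, or $\omega^c\Omega_{cb}$. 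Every surviving term carries at least one factor $r^{-1}$ and involves at most one rotational generator $\Omega_{ca}$; factoring out the $r^{-1}$ defines $\mathbb{D}_{ab}[\phi]$ as a finite linear combination of expressions of type $\Omega\del\phi$, $r^{-1}\Omega\Omega\phi$, and $\del\phi$. Each of these is of order $\leq 2$ and rank $\leq 1$, so $|\mathbb{D}_{ab}[\phi]|\lesssim |\del\phi|_{1,1}$.

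The main obstacle will be the bookkeeping in the final step: one must check that every commutator produced by $\del_a$ acting on $\omega_b$, on $r^{-1}$, or on $\omega^c\Omega_{cb}$ is genuinely lower-order, still of rank $\leq 1$, and can legitimately be packaged together with $T^{ab}$ in the prefactor $r^{-1}T^{ab}\mathbb{D}_{ab}[\phi]$. Once this accounting is carried out, the rest of the argument is a routine consequence of the algebraic manipulation in the previous step.
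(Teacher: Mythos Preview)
Your argument is correct. The algebraic manipulation leading to
\[
(1+H^{\rr})\,H^{\alpha\beta}\del_\alpha\del_\beta\phi
= (H^{00}+H^{\rr})\,\del_t^2\phi + 2H^{0a}\,\del_t\del_a\phi
+ H^{\rr}\,\BoxChapeaud_g\phi + (H^{ab}-H^{\rr}\delta^{ab})\,\del_a\del_b\phi
\]
is exactly right, and the treatment of the remainder via the radial--angular splitting $\del_a=\omega_a\del_r+r^{-1}\omega^c\Omega_{ca}$, together with the cancellation $\omega_a\omega_b(H^{ab}-H^{\rr}\delta^{ab})=0$, is precisely the mechanism that produces the factor $r^{-1}$ and the rank-one bound for $\mathbb{D}_{ab}$. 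The one place worth an extra sentence is the term $r^{-1}\Omega\Omega\phi$: it superficially has rank two, but since $r^{-1}\Omega_{ca}=\omega_c\del_a-\omega_a\del_c$ one rewrites it as $\del\Omega\phi$, which commutes back to $\Omega\del\phi+\del\phi$ and is therefore controlled by $|\del\phi|_{1,1}$. You allude to this but should make it explicit.

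As for comparison with the paper: this lemma is not proved here but only recalled from the companion article \cite{PLF-YM-main}, so there is no in-text proof to set yours against. Your derivation is the natural one and matches the strategy used in that reference.
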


Then we establish the following decomposition of commutators.

\begin{proposition}[Estimate on quasi-linear commutator of Klein-Gordon equation]
\label{prop1-28-03-2021}
Provided
$$
\BoxChapeaud_\gconf \phi - c^2\phi  = f,\quad 
$$
and
\begin{equation}\label{eq5-09-dec-2023}
|H|\ll 1,\quad \sum_{\circledast \in \{ \rr, 00, 0a \}} |H^\circledast|_{[p/2]}\ll 1, 
\end{equation}
then
the following estimate holds for all $Z$ with $\ord(Z)=p$ and $\rank(Z)=k$: 
\begin{equation} \label{eq6-14-03-2021}
\aligned
|[Z,H^{\alpha\beta}\del_{\alpha}\del_{\beta}]\phi| 
& \lesssim  W_{p,k}^\hard [\phi] + W_{p,k}^\easy [\phi] + W_{p,k}^{\sour}[f],  
\\
W_{p,k}^\hard [\phi]
&:=  \sum_{* \in \{ \rr, 00, 0a \}} |H^\circledast||\del\del \phi|_{p-1,k-1} 
\\
& \quad 
+ \sum_{k_1+p_2=p\atop k_1+k_2=k}  \sum_{\circledast \in \{ \rr, 00, 0a \}} 
| \LOmega  H^\circledast|_{k_1-1}\big(|\del\del \phi|_{p_2,k_2} +|\phi|_{p_2,k_2}\big)  
\\
& \quad + \sum_{p_1+p_2=p\atop k_1+k_2=k} |\del H|_{p_1-1,k_1}\big(|\del\del\phi|_{p_2,k_2} + |\phi|_{p_2,k_2}\big), 
\\
W_{p,k}^\easy [\phi]
& :=  r^{-1} |H||\del \phi|_p 
+ r^{-1}\!\!\!\!\sum_{0\leq p_1\leq p-1}\!\!\!|H|_{p_1+1} |\del \phi|_{p-p_1},
\\
W_{p,k}^{\sour}[f] &:=  \sum_{k_1+p_2=p\atop k_1+k_2=k}  
| \LOmega  H^{rr}|_{k_1-1}|f|_{p_2,k_2} 
+ \sum_{p_1+p_2=p\atop k_1+k_2=k} |\del H|_{p_1-1,k_1}|f|_{p_2,k_2},
\endaligned
\end{equation} 
\end{proposition}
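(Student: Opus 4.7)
The strategy is to combine the structural decomposition of Lemma~\ref{lem1-14-03-2021} with the Leibniz rule, the Klein-Gordon equation, and the standard Klainerman commutator identities between admissible vector fields and $\del_\alpha\del_\beta$, which exchange one boost or rotation for one translation.

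First I would apply Lemma~\ref{lem1-14-03-2021} and substitute $\BoxChapeaud_\gconf\phi = c^2\phi + f$ from the equation, to rewrite
\begin{equation*}
H^{\alpha\beta}\del_\alpha\del_\beta\phi = A^{00}\del_t\del_t\phi + 2A^{a0}\del_a\del_t\phi + A^\rr(c^2\phi+f) + r^{-1}B^{ab}\mathbb{D}_{ab}[\phi],
\end{equation*}
where, under the smallness hypothesis~\eqref{eq5-09-dec-2023}, the coefficients $A^\circledast$ ($\circledast\in\{\rr,00,0a\}$) and $B^{ab}$ satisfy $|A^\circledast|_p\lesssim|H^\circledast|_p$ and $|B|_p\lesssim|H|_p$ at all relevant orders thanks to the convergent Taylor expansion of $(1+H^\rr)^{-1}$. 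Computing $[Z,H^{\alpha\beta}\del_\alpha\del_\beta]\phi$ directly generates four commutator-type contributions. A subtlety arises from the $A^\rr\BoxChapeaud_\gconf\phi$ piece: when applied to $Z\phi$ it produces $A^\rr\BoxChapeaud_\gconf(Z\phi)$, which is not directly given by the equation since $Z\phi$ does not satisfy it. Using that $Z$ (which does not contain the scaling field $S$) commutes with the flat wave operator $\Box$, one rewrites $A^\rr\BoxChapeaud_\gconf(Z\phi) = A^\rr c^2 Z\phi + A^\rr Zf + A^\rr[Z,H^{\mu\nu}\del_\mu\del_\nu]\phi$ and absorbs the last term into the left-hand side, producing a harmless multiplicative factor $(1-A^\rr)^{-1}=1+H^\rr\simeq 1$.

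It then remains to estimate four commutators. For the three ``bad'' pieces involving $A^{00},A^{a0},A^\rr$, Leibniz separates each into a principal part $A^\circledast[Z,\del_\alpha\del_\beta]\phi$ and a subprincipal Leibniz sum $\sum_{|Z_1|\geq 1}(Z_1A^\circledast)(Z_2\cdot)$. The principal part is estimated by iteratively commuting each boost or rotation in $Z$ past $\del_\alpha\del_\beta$, which costs one unit of both order and rank per application, and yields the bound $|H^\circledast|\cdot|\del\del\phi|_{p-1,k-1}$, the first summand of $W^\hard_{p,k}$. The subprincipal sum is further split according to whether the derivative $Z_1$ landing on the coefficient contains only boosts/rotations (producing $|\LOmega H^\circledast|_{k_1-1}$, the second summand) or contains at least one translation (producing $|\del H|_{p_1-1,k_1}$, the third summand); for the $A^\rr$-piece the second factor is $\phi$ instead of $\del\del\phi$ (giving the $|\phi|_{p_2,k_2}$ entries in $W^\hard$) or $f$ (giving exactly $W^\sour_{p,k}[f]$). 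Finally the ``good'' piece $[Z,r^{-1}B^{ab}\mathbb{D}_{ab}]\phi$ yields only $W^\easy_{p,k}[\phi]$, since $|\mathbb{D}_{ab}[\phi]|\lesssim|\del\phi|_{1,1}$ involves only first derivatives of $\phi$ (no rank loss occurs), and since admissible operators applied to $r^{-1}$ produce $r^{-1}$ times smooth bounded functions of $t/r,x/r$. The main obstacle is the rank-and-order bookkeeping in the principal Klainerman commutator, where one must verify that the total cost of pushing all boosts and rotations of $Z$ past $\del_\alpha\del_\beta$ is exactly one unit of order and one unit of rank, so that the sharp bound $|\del\del\phi|_{p-1,k-1}$ is achieved rather than the naive $|\del\del\phi|_{p,k-1}$; this sharpness is what makes the estimate strong enough to close the forthcoming energy arguments for the Klein-Gordon equation.
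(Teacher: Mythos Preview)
Your proposal is correct and follows essentially the same route as the paper's sketch: both start from the decomposition of Lemma~\ref{lem1-14-03-2021}, commute $Z$ with each of the four pieces, substitute the equation $\BoxChapeaud_\gconf\phi=c^2\phi+f$ in the $A^{\rr}\BoxChapeaud_\gconf$ term, and absorb the resulting $A^{\rr}[Z,H^{\mu\nu}\del_\mu\del_\nu]\phi$ into the left-hand side using the smallness of $H^{\rr}$. Your write-up in fact supplies more detail than the paper's sketch on the rank/order bookkeeping (Proposition~\ref{lm 2 dmpo-cmm-H} and Lemma~\ref{prop-newpropo}) that produces the precise structure of $W^{\hard}_{p,k}$, $W^{\easy}_{p,k}$, and $W^{\sour}_{p,k}$.
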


\begin{proof}[Sketch of proof] Observe that $[Z,\BoxChapeaud_\gconf ]\phi = [Z,H^{\alpha\beta}\del_{\alpha}\del_{\beta}]\phi$.
Then we only need to commute each term presented in the RHS of \eqref{eq3-09-dec-2023}. We consider especially the third term:
$$
\aligned 
&  \frac{H^{rr}}{1+H^{rr}}[Z,\BoxChapeaud_\gconf]\phi + \sum_{Z_1\odot Z_2=Z\atop |Z_1|\geq 1}Z_1\Big(\frac{H^{rr}}{1+H^{rr}}\Big)\,Z_2\BoxChapeaud_\gconf\phi
\\
& = \frac{H^{rr}}{1+H^{rr}}[Z,\BoxChapeaud_\gconf]\phi + \sum_{Z_1\odot Z_2=Z\atop |Z_1|\geq 1}Z_1\Big(\frac{H^{rr}}{1+H^{rr}}\Big)\,Z_2f
+ c^2\sum_{Z_1\odot Z_2=Z\atop |Z_1|\geq 1}Z_1\Big(\frac{H^{rr}}{1+H^{rr}}\Big)\,Z_2\phi. 
\endaligned
$$
By recalling \eqref{eq5-09-dec-2023}, the first term can be absorbed in the left-hand side. The second term can be bounded by $W_{p,k}^{\hard}$ while the last one is bounded by $W_{p,k}^{\sour}$.
\end{proof}

Regarding the above result, we also need to give sufficient $L^2$ estimates on $W_{p,k}^{\sour}$ (while the remaining terms are already treated in \cite{PLF-YM-main}). For the scalar field, we have $f = \coeffi g(\del\phi,\del\psi)$. Then due to \eqref{eqs2-08-dec-2023}, \eqref{eq6-09-dec-2023} and \eqref{eq7-09-dec-2023}, the following estimate is easily checked:
\begin{equation}\label{eq8-09-dec-2023}
\|J\zeta^{-1}\crochet^{\mu} \, W_N^{\sour}[f]\|_{L^2(\MME_s)}
\lesssim (\epss+C_1\eps)^2s^{-1-\delta}.
\end{equation}
Thus the new terms appearing in the energy estimate for $\psi$ are all integrable. Then the bootstrap argument can be closed as in \cite{PLF-YM-main} for the scalar field.

For the effective curvature field $\phi$, the situation essentially the same. However, we need to consider two main points. First, as for $\psi$, we also have a source term in right-hand side. But \eqref{eqs2-08-dec-2023} still gives integrable $L^2$ bounds, because this term is also a KG$\times$KG quadratic term. Second, when we compute the commutator $[Z,3\coeffi \BoxChapeaud_\gconf]$,
we need to take care of the factor $\coeffi$. In fact \eqref{eq6-14-03-2021} can be written as
\begin{equation}
|[Z,\coeffi\BoxChapeaud_\gconf]\phi|\lesssim W_{p,k}^{\hard}[\coeffi \phi] + W_{p,k}^{\easy}[\coeffi \phi] + W_{p,k}^{\sour}[\coeffi f], 
\end{equation}
where 
$$
f:= 3\coeffi \BoxChapeaud_\gconf \phi - \Ufrak'(0)\phi =  -8\pi e^{-\coeffi\phi} \gconf(\del\psi,\del\psi) - 32\pi e^{-2\coeffi\phi}V(\psi).
$$
Observe that in $W_{p,k}^{\hard}[\coeffi \phi]$ and $W_{p,k}^{\easy}[\coeffi \phi]$, each factor $\del \phi$ is multiplied by $\coeffi$ and each factor $\phi$ is multiplied at least by $\coeffi^{1/2}$, which is compatible with the energy bounds \eqref{eq7b-03-05-2020}. Furthermore, the pointwise bounds \eqref{eq2-09-dec-2023} are uniform with respect to $\coeffi$. Thus we repeat again the argument in \cite{PLF-YM-main} 
 for a scalar field 
and close the bootstrap argument.  The energy bound that we obtain is the same to \cite[Section 18]{PLF-YM-main}, and can be written as
\begin{equation}\label{eq6-17-dec-2023}
\FCenergy_{g,\coeffi,\mu}^{\ME, N-5}(s,u) + \FSenergy_{g,\mu}^{\ME,N-5}(s,u) \lesssim (\epss+C_1\eps)^2 s^{C_2(\epss+C_1\eps)+N\theta},
\end{equation}
where $C_2$ is a constant determined by $\delta,N$ and the reference metric.


\subsection{Estimate for the boundary energy}

The light cone energy $\Ebf^{\Lcal}_{g}(s,u; s_0) $ was introduced in \eqref{equa-light-energy}. When we perform an energy estimate in the hyperboloidal domain (cf.~next section),  
we should be careful with
 the treatment of the boundary terms along $\Lscr_{[s_0,s_1]} :=  \big\{ r=t-1, \, {s_0^2-1\over 2} \leq t \leq {s_0^2+1 \over 2}\big\}$, which brings in the influence from the Euclidean-merging domain. Fortunately, these boundary terms are already bounded when doing energy estimates in the Euclidean-merging domain. Let us point out the following estimate deduced from the energy estimate in the Euclidean-merging domain:
$$
\Eenergy^{\Lcal}_g(s,Z u;s_0) \lesssim \Eenergy^{\ME}_g(s,Zu) + \Eenergy^{\ME}_g(s_0,Zu)
$$
where $\Eenergy$ represents $\EMenergy, \ECenergy$ or $\ESenergy$. Then we obtain
$$
\Eenergy^{\Lcal,N-5}_g(s,u;s_0) \lesssim (\epss+C_1\eps)^2s^{C_2(\epss+C_1\eps) + N\theta}.
$$
By choosing $\epss+C_1\eps$ sufficient small, the above inequality can be relaxed into a simpler form:
\begin{equation}\label{eq8-17-dec-2023}
\Eenergy^{\Lcal,N-5}_g(s,u;s_0) \lesssim (\epss+C_1\eps)^2s^{C_2\theta},
\end{equation}
where $C_2$ is a constant determined by $\delta, N$ and the reference metric. This estimate 
will be useful in the hyperboloidal domain when we derive energy estimate.


\section{Analysis in the hyperboloidal domain} 
\label{section-hyperb} 

\subsection{Objective}

Once the solution is constructed outside of the light cone $\{r > t-1\}$, we are in a position to analyze the interior region. The essential work was accomplished in our previous work \cite{PLF-YM-two} when the initial data coincides with the Schwarzschild metric outside of a disc on the initial slice. In the present case as well as in \cite{PLF-YM-main},  
we need to take care of the influence from the outside region.
These effects are viewed as several additional contributions:
\begin{itemize}
\item boundary terms when we apply the Hardy-Poincar\'e type inequalities,

\item boundary contribution when we apply an integration argument along characteristics associated with wave equations,

\item boundary contribution when we apply an integration argument along rays associated with Klein-Gordon equations,

\item exterior contribution when we apply pointwise estimates based on Kirchhoff's formula, and 

\item boundary terms when we derive energy estimates.
\end{itemize} 
In the same time, we also have to treat the new quadratic terms $\Gbb_{\coeffi}[\phi,g]$, $\Mbb_{\coeffi}[\phi,\psi,g]$ coupled in the Einstein equation, and the source terms coupled in the right-hand side of the Klein-Gordon equations satisfied by $\phi$ and $\psi$
 

\subsection{Bootstrap assumptions and direct estimates}

We take the following bootstrap assumptions:
\begin{subequations}\label{eqs1-16-dec-2023}
\begin{equation}\label{eq3-16-dec-2023}
\FMenergy^{\H,N-5}(s,u) + s^{-1/2} \, \FCenergy_{\coeffi}^{\H,N-5}(s, \phi) 
+ s^{-1/2} \, \FSenergy^{\H,N-5}(s, \psi) \leq(\epss+C_1\eps) \, s^{\delta},
\end{equation}
\begin{equation}\label{eq4-16-dec-2023}
\FMenergy^{\H,N-9}(s,u) +\FCenergy_{\coeffi}^{\H,N-9}(s, \phi) +\FSenergy^{\H,N-9}(s, \psi) \leq(\epss+C_1\eps) \, s^{\delta}. 
\end{equation} 
\end{subequations}
We are going to establish the following improved bounds on the same time interval:
\begin{subequations}\label{eqs2-17-dec-2023}
\begin{equation}\label{eq2a-17-dec-2023}
\FMenergy^{\H,N-5}(s,u) + s^{-1/2} \, \FCenergy_{\coeffi}^{\H,N-5}(s, \phi) 
+ s^{-1/2} \, \FSenergy^{\H,N-5}(s, \psi) \leq \frac{1}{2}(\epss+C_1\eps) \, s^{\delta},
\end{equation}
\begin{equation}\label{eq2b-17-dec-2023}
\FMenergy^{\H,N-9}(s,u) +\FCenergy_{\coeffi}^{\H,N-9}(s, \phi) +\FSenergy^{\H,N-9}(s, \psi) \leq \frac{1}{2}(\epss+C_1\eps) \, s^{\delta}. 
\end{equation}
\end{subequations}
From the definition of the energy density, the following $L^2$ estimates are direct form the above assumptions:
\begin{subequations}
\begin{equation}
\|(s/t)|\del u|_{N-5}\|_{L^2(\MH_s)} + \||\delsH u|_{N-5}\|_{L^2(\MH_s)}\lesssim (\epss+C_1\eps)s^{\delta},
\end{equation}
\begin{equation}
\|(s/t)\coeffi|\del \phi|_{p}\|_{L^2(\MH_s)} + \|\coeffi|\delsH \phi|_p\|_{L^2(\MH_s)} 
+\|\coeffi^{1/2}|\phi|_p\|_{L^2(\MH_s)} \lesssim
\begin{cases}
(\epss+C_1\eps)s^{1/2+\delta}, \,  \, p=N-5,
\\
(\epss+C_1\eps)s^{\delta},\, \, p=N-9.
\end{cases}
\end{equation}
\begin{equation}
\|(s/t)|\del \psi|_{p}\|_{L^2(\MH_s)} + \||\delsH \psi|_p\|_{L^2(\MH_s)} 
+\||\psi|_p\|_{L^2(\MH_s)} \lesssim
\begin{cases}
(\epss+C_1\eps)s^{1/2+\delta}, \, \,  p=N-5,
\\
(\epss+C_1\eps)s^{\delta},\, \, p=N-9.
\end{cases}
\end{equation}

\end{subequations}
Also, by the global Sobolev inequalities, we obtain the direct pointwise bounds:
\begin{equation}
(s/t)|\del u|_{N-7} + |\delsH u|_{N-7} \lesssim (\epss+C_1\eps)t^{-3/2} s^{\delta}
\end{equation}
\begin{equation}
(s/t)|\coeffi \del \phi|_{p-2} + |\coeffi\delsH \phi|_{p-2} + |\coeffi^{1/2}\phi|_{p-2}\lesssim
\begin{cases}
(\epss+C_1\eps)t^{-3/2}s^{1/2+\delta},\quad & p=N-5,
\\
(\epss+C_1\eps)t^{-3/2}s^{\delta},\quad & p=N-9.
\end{cases}
\end{equation}
\begin{equation}
(s/t)|\del \psi|_{p-2} + |\delsH \psi|_{p-2} + |\psi|_{p-2}\lesssim 
\begin{cases}
(\epss+C_1\eps)t^{-3/2}s^{1/2+\delta},\quad & p=N-5,
\\
(\epss+C_1\eps)t^{-3/2}s^{\delta},\quad & p=N-9.
\end{cases}
\end{equation}

The wave gauge condition will bring us additional estimates. In fact, parallel to \cite[Lemma~11.3]{PLF-YM-main} (see also \cite[Chapter 4.7]{PLF-YM-two}), we have
\begin{equation}
\aligned
|\del u^{\H00}|_{p,k}& \lesssim  |\delsH u|_{p,k} + (s/t)^2|\del u|_{p,k} + t^{-1}|u|_{p,k} + |\del h^{\star}|_{p,k} + t^{-1}|h^{\star}|_{p,k}
\\
& \quad + \sum_{p_1+p_2=p}(|u|_{p_1}|\del u|_{p_2}+|u|_{p_1}|\del h^{\star}|_{p_2} + |h^{\star}|_{p_1}|\del u|_{p_2} + |h^{\star}|_{p_1}|\del h^{\star}|_{p_2}).
\endaligned
\end{equation}
Observe that $r^{-1}|Lu|_{N-6}\lesssim |\delsH u|_{N-6} + t^{-1}|\del u|_{N-6}$ and recall the bounds \eqref{equa-31-12-20}  on the reference $h^{\star}$, we obtain
\begin{subequations}\label{eq7-16-dec-2023}
\begin{equation}\label{eq7a-16-dec-2023}
\||\del Lu^{\H00}|_{N-6}\|_{L^2(\MH_s)}\lesssim (\epss+C_1\eps)s^{2\delta}.
\end{equation}
\begin{equation}\label{eq7b-16-dec-2023}
\|(s/t)^{\delta}|\del Lu^{\H00}|_{p,k}\|_{L^2(\MH_s)}\lesssim (\epss+C_1\eps) + \FMenergy^{\H,p,k}(s,u)\lesssim (\epss+C_1\eps)s^{\delta},\quad p\leq N-6.
\end{equation}
\end{subequations}


\subsection{$L^2$ and pointwise estimates on on $\Gbb_{\coeffi}[\phi,g]_{\alpha\beta}$ and $\Mbb_{\coeffi}[\phi,\psi,g]$}

The above estimates allow us to establish $L^2$ estimates on $\Gbb_{\coeffi}$ and $\Mbb_{\coeffi}$ (which are sufficient for our purpose although it involves a non-integrable bound when $p=N-5$): 
\begin{equation}\label{eq5-16-dec-2023}
\big\|\Gbb_{\coeffi}[\phi,g]|_p\big\|_{L^2(\MH_s)} + \big\||\Mbb_{\coeffi}[\phi,\psi,g]|_p\big\|_{L^2(\MH_s)}
\lesssim
\begin{cases}
(\epss+C_1\eps)^2s^{-1+2\delta},\quad &p=N-5,
\\
(\epss+C_1\eps)^2 s^{-3/2+2\delta},\quad & p=N-9.
\end{cases}
\end{equation}
Here we observe that for the lower-order case, the above $L^2$ bounds are integrable and, via a calculation based on the direct pointwise bounds, 
\begin{equation}
|\Gbb_{\coeffi}[\phi,g]|_{N-9} + |\Mbb_{\coeffi}[\phi,\psi,g]|_{N-9}\lesssim (\epss+C_1\eps)^2 t^{-2}s^{-1/2+2\delta}.
\end{equation}
Here as in the Euclidean-merging domain, we apply Lemma~\ref{lem1-08-dec-2023}. These bounds agree with those ones in \cite[Chapter 6]{PLF-YM-two} on $QS_{\phi}$, which represents the Klein-Gordon coupling in Einstein equation. Thus the argument therein remain valid until the $L^{\infty}$ estimates on metric components based on characteristics, i.e., \cite[Chapter 9]{PLF-YM-two}, where the boundary contributions will influence the sharp pointwise bounds. This will be discussed in the next section.


\subsection{Pointwise estimate based on characteristics}\label{subsec1-17-dec-2023}

\paragraph{Integration along characteristics.}

In order to obtain sharp pointwise estimate, we need an ODE argument along characteristics of the wave equation. We need to integrate from the light cone $\Lscr$.  
In \cite{PLF-YM-two} the metric coincides with the Schwarzschild metric and this contribution can be explicitly calculated. In the present case, it will be bounded by the argument in the Euclidean-merging domain. To be more precise, let us recall the following result formulated in \cite[Chapter 3.2]{PLF-YM-two}:
\begin{proposition}
Let $u$ be a $C^2$ function defined in $\MH_{[s_0,s_1]}$ and $g^{\alpha\beta} = \gMink^{\alpha\beta} + H^{\alpha\beta}$ be a $C^1$ metric defined in $\MH_{[s_0,s_1]}$. Provided that
$$
|H^{\H 00}|\lesssim \eps_s(t-r)/t 
$$
with $\eps_s>0$ sufficiently small, the following estimate holds:
\begin{equation}\label{eq1-16-dec-2023}
t| (\del_t-\del_r ) u(t,x)|\lesssim \sup_{\MH_{s_0}\cup \Lscr_{[s_0,s]}}\big(|(\del_t-\del_r)(ru)|\big) + |u(t,x)| + \int_{s_0}^t\tau F|_{\varphi(\tau;t,x)}d\tau + \int_{s_0}^t M_s[u,H]|_{\varphi(\tau;t,x)}d\tau
\end{equation}
where we use the slice corresponding to $s= \sqrt{2t-1}$ and 
$$
|M_s[u,H]|_{p,k} \lesssim  \sum_{|Z|\leq 1}|\delsH Z u|_{p,k} + \sum_{p_1+p_2=p, k_1+k_2=k\atop |Z|\leq 1}|H|_{p_1,k_1}|\del Zu|_{p_2,k_2}
$$
and 
$\varphi_{\tau;t,x}$ is the integral curve, the initial condition $\varphi(t;t,x) = (t,x)$,  of the vector field
$$
\del_t + \frac{(t+r)^2+t^2H^{\H00}}{(t+r)^2 -t^2H^{\H00}}\del_r. 
$$
\end{proposition}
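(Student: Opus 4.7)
My plan is to view this estimate as the result of integrating an ODE satisfied by the good derivative $r(\del_t-\del_r)u$ along the modified outgoing characteristics of the curved wave operator. The starting point is the flat identity
\[
(\del_t-\del_r)(\del_t+\del_r)(ru) = -r\,\Box u + (\text{angular terms}),
\]
which can be verified directly, together with its curved analogue: after decomposing $g^{\alpha\beta}=\gMink^{\alpha\beta}+H^{\alpha\beta}$ and isolating the coefficient $H^{\H 00}$ of $\del_t\del_t$ in the radial part, the combination $g^{\alpha\beta}\del_\alpha\del_\beta u = F$ can be rewritten as
\[
(\del_t-\del_r)\,\Vcal(ru) = -r F + \Ecal[u,H],
\]
where $\Vcal = \del_t + \frac{(t+r)^2+t^2H^{\H 00}}{(t+r)^2-t^2H^{\H 00}}\del_r$ is precisely the vector field appearing in the statement, and $\Ecal[u,H]$ collects: (i) angular derivatives, (ii) tangential derivatives of the form $\delsH u$, and (iii) products of $H$ with derivatives of $u$ coming from the curved perturbation. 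The denominator in $\Vcal$ is under control because the assumption $|H^{\H 00}| \lesssim \eps_s(t-r)/t$ keeps $(t+r)^2 - t^2H^{\H 00}$ uniformly of size $(t+r)^2$.

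Next, I would show that the integral curves $\varphi(\tau;t,x)$ are well-defined and stay inside $\MH_{[s_0,s]}$, reaching either the initial slice $\MH_{s_0}$ or the light-cone boundary $\Lscr_{[s_0,s]}$. This is a consequence of the smallness $|H^{\H 00}| \lesssim \eps_s(t-r)/t$, which guarantees that the slope of $\Vcal$ remains close to the outgoing null slope $1$, so that $\varphi$ behaves essentially like the flat outgoing characteristic; its endpoint at ``time'' $\tau = s_0$ is either in $\MH_{s_0}$ (if $\varphi$ exits through the past hyperboloid) or on $\Lscr_{[s_0,s]}$ (if it exits through the light-cone boundary).

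I then integrate the identity above along $\varphi$ from its endpoint on $\MH_{s_0}\cup\Lscr_{[s_0,s]}$ up to $(t,x)$. This yields
\[
\Vcal(ru)\big|_{(t,x)} = \Vcal(ru)\big|_{\varphi(s_0;t,x)} + \int_{s_0}^{t}\bigl(-\tau F + \Ecal[u,H]\bigr)\big|_{\varphi(\tau;t,x)}\,d\tau.
\]
Since $\Vcal$ differs from $\del_t+\del_r$ by a term of order $H^{\H 00}$, and since $(\del_t-\del_r)(ru) = -\Vcal(ru) + 2\del_t(ru) + O(H^{\H 00})|\del(ru)|$, I can turn the bound on $\Vcal(ru)$ into a bound on $(\del_t-\del_r)(ru)$ modulo terms of the form $|u|$ and $|H^{\H 00}||\del u|$, both of which are either already on the right-hand side of the stated estimate or absorbed into the $M_s$ quantity.

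The main technical obstacle lies in verifying that $\Ecal[u,H]$ can actually be bounded by the stated $M_s[u,H]$, namely by combinations involving $|\delsH Z u|_{p,k}$ and $|H|_{p_1,k_1}|\del Zu|_{p_2,k_2}$. This requires rewriting the angular and radial second derivatives of $u$ in terms of the semi-hyperboloidal frame $\delsH$ (so that the ``good'' tangential derivatives appear explicitly) and exploiting Lemma~\ref{lem1-14-03-2021} to replace any $\del\del u$ by either $\BoxChapeau_g u$ (which becomes part of $F$), a tangential derivative, or an $H$-weighted term; the smallness hypothesis on $H^{\H 00}$ is precisely what is needed to absorb the quasi-linear correction $\frac{H^{\H 00}}{1+H^{\H 00}}\BoxChapeau_g u$ into the source. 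The boundary value $\Vcal(ru)|_{\varphi(s_0;t,x)}$ is then bounded by $\sup_{\MH_{s_0}\cup\Lscr_{[s_0,s]}} |(\del_t-\del_r)(ru)|$ (up to the same small $H^{\H 00}$ correction), which accounts for the first term on the right-hand side. Multiplying through by a factor comparable to $t$ to balance the $r$-weights yields the claimed estimate.
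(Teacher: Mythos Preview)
The paper does not prove this proposition; it is quoted verbatim from \cite[Chapter~3.2]{PLF-YM-two}. So there is no in-paper proof to compare against, only the strategy from the earlier monograph.

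Your overall plan (rewrite the curved wave operator as a product of outgoing and incoming derivatives acting on $ru$, then integrate along the modified outgoing characteristics $\varphi$) is the right one, but the execution has a genuine gap. You write the key identity as
\[
(\del_t-\del_r)\,\Vcal(ru) = -rF + \Ecal[u,H],
\]
and then claim that integrating along $\varphi$ (the integral curves of $\Vcal$) yields
\[
\Vcal(ru)\big|_{(t,x)} = \Vcal(ru)\big|_{\varphi(s_0;t,x)} + \int_{s_0}^{t}(-\tau F+\Ecal)\,d\tau.
\]
This is inconsistent: differentiation along $\varphi$ is $\Vcal$, not $\del_t-\del_r$, so integrating your identity along $\varphi$ does \emph{not} reproduce $\Vcal(ru)$. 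Worse, your subsequent step of recovering $(\del_t-\del_r)(ru)$ from $\Vcal(ru)$ via
\[
(\del_t-\del_r)(ru) = -\Vcal(ru) + 2\del_t(ru) + O(H^{\H00})|\del(ru)|
\]
fails outright, because $2\del_t(ru)$ is a ``bad'' derivative with no extra decay and cannot be absorbed anywhere on the right-hand side of the estimate.

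The correct structure has the two operators in the opposite order: one derives
\[
\Vcal\bigl[(\del_t-\del_r)(ru)\bigr] = -rF + (\text{errors bounded by } M_s[u,H]),
\]
using that $\del_t-\del_r$ and $\del_t+\del_r$ commute in the flat part and that the curved commutator $[\del_t-\del_r,\Vcal-(\del_t+\del_r)]$ produces only $H$-weighted first derivatives of $u$. Then integration along $\varphi$ gives $(\del_t-\del_r)(ru)$ directly, and the boundary contribution is exactly $\sup_{\MH_{s_0}\cup\Lscr_{[s_0,s]}}|(\del_t-\del_r)(ru)|$ as stated in the proposition, with no need to convert between $\Vcal(ru)$ and $(\del_t-\del_r)(ru)$ at the endpoint.
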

In the present case, we need a modification on the first term in the right-hand side of \eqref{eq1-16-dec-2023}. In the proof, the above estimate is applied on the ``good components'' of the metric $\usH = u^{\H}_{0a}$ or $\usH = u^{\H}_{ab}$. If we rely on the estimates established on $\us^{\N}$ in the Euclidean-merging domain (cf \cite[15.4]{PLF-YM-main}), we easily find:
\begin{equation}
|(\del_t-\del_r)(r\usH)|_{N-4,k}\lesssim (\ell^{-\delta/2}+\delta^{-2})(\epss+C_1\eps)t^{k\theta},\quad \text{on }\Lscr_{[s_0,s_1]}.
\end{equation} 
This contribution is not as good as in \cite{PLF-YM-two}, however this will not harm the argument therein. The key structure is that, when $k=0$, the above estimates is sharp in the sense that it leads us to $|\del \del^J \usH|\lesssim (\epss+C_1\eps)t^{-1}$. When $k\geq 1$, in the estimate of quadratic nonlinear terms, only the energy of lower order will be concerned. Let us explain this with the typical (and most critical) quasi-null quadratic terms
$
|\delsH u|_{p_1,k_1} |\delsH u|_{p_2,k_2}
$
which is the first one in the right-hand side of the second estimate in Lemma~\ref{lem1-02-dec-2023}. Without loss of generality, suppose that $p_1+ k_1\leq p_2+k_2$. Then
\begin{equation}
\aligned
\||\delsH u|_{p_1,k_1} |\delsH u|_{p_2,k_2}\|_{L^2(\MH_s)}
& \lesssim  
(\ell^{-\delta/2}+\delta^{-2})(\epss+C_1\eps)s^{-1}\FMenergy^{\H,p,k}(s,u) 
\\
& \quad +
(\ell^{-\delta/2}+\delta^{-2})(\epss+C_1\eps)
\sum_{k_1=1}^{[N/2]-1} s^{-1+2k_1\theta}\FMenergy^{\H,p,k-k_1}(s,u).
\endaligned
\end{equation}
Observe that, when $k=0$, the second term disappears. Regardless of other source terms, we can apply Gronwall's inequality and an induction argument on $k$ to establish refined energy bounds. 


\subsection{Pointwise estimate on $h^{\N00}$}

In \cite[Chapter 11 and Chapter 13]{PLF-YM-two}, we have applied a $L^{\infty}-L^{\infty}$ estimate based on the Kirchhoff's formula. This bound is crucial because when we do $L^2$ estimate the commutator $[Z,h^{\alpha\beta}\del_{\alpha}\del_{\beta}]\phi$, recalling Proposition~\ref{prop1-12-02-2020-interior} established in \cite[Proposition~19.3]{PLF-YM-main},  the worst term is
$$
T^{\textbf{hier}}[\phi] = h^{\N00}|\del \del\phi|_{p-1,k-1} + \sum_{k_1+p_2=p\atop k_1+k_2=k}|Lh^{\N00}|_{k_1-1}|\del\del \phi|_{p_2,k_2}
$$
Recalling that 
$$
h^{\H00} = u^{\H00} + h^{\star\H00}, 
$$
We need to give sufficient decay on $h^{\H00}, Lh^{\H00}, h^{\star\H00}$ and $L^2$ bounds on $Lu^{\H00}$. The $L^2$ estimate will be treated in the next section. We concentrate on the pointwise one.

Observe that in the right-hand side each factor contain at most $k-1$ Lorentzian boosts, we can thus perform an induction on $k$. This structure is called {\sl the hierarchy property} in \cite{PLF-YM-main} (see in detail Proposition~\ref{prop1-12-02-2020-interior}). Due to the hierarchy property, we only need 
\begin{equation}\label{eq6-16-dec-2023}
|h^{\star\H00}|_{N-5}+|h^{\N00}|_{N-12}\lesssim (\epss+C_1\eps)t^{-1+C\theta},\quad \theta\ll \delta,
\end{equation}
since when we estimate the $L^2$ norm of $T^{\textbf{hier}}$ for $\FCenergy_{\coeffi}^{\H,p,k}(s,\phi)$, there appears $\FCenergy_{\coeffi}^{\H,p,k-1}(s,\phi)$ only. The bound on $h^{\star\H00}$ is checked by 
the assumptions on the reference \eqref {equa-31-12-20}. 
 While the bounds on $h^{\H00}$ was established via the Kirchhoff's formula in \cite{PLF-YM-two,PLF-YM-main}. In the present case, there is one more contribution from the Euclidean-merging domain. Fortunately, if we check precisely the proof of \cite[Proposition 16.1]{PLF-YM-main}, the Euclidean-merging contribution is bounded in the same way, i.e., bounded by $(\epss+C_1\eps)r^{-1+(k+1)\theta}$. Combined with other contributions form the initial data and source terms in the hyperboloidal domain (which is checked in \cite[Chapter 11]{PLF-YM-two}), we conclude by \eqref{eq6-16-dec-2023}.


\subsection{$L^2$ estimates on the component $u^{\H00}$}

In order to obtain a sufficiently good $L^2$ estimate, we will rely on the following observation in which it is important to take the boundary contribution into account, namely in the condition \eqref{eq1-14-dec-2023}, below, involving the constant term $K$ introduce in \eqref{eq1-14-dec-2023}).

\begin{proposition}\label{prop1-16-dec-2023}
Consider sufficient regular functions $u$ defined in $\MH_{[s_0,s_1]}$ with $u|_{\Lcal_s}$ and satisfies the following pointwise bound on the light cone $\{r=t-1\}$:
\begin{equation}\label{eq1-14-dec-2023}
\Big|u\Big(\frac{s^2+1}{2}, x\Big)\Big|\lesssim K \, s^{-2+C_2\theta}.
\end{equation}
Then one has 
\begin{equation}\label{eq2-16-dec-2023}
\aligned
\|(s/t)^{-1/4}s^{-1}u\|_{L^2(\MH_s)}& \lesssim  K + \|(s_0/t)^{-1/4}s_0^{-1}u\|_{L^2(\MH_{s_0})} + \|t^{-1}u\|_{L^2(\MH_s)}
\\
& \quad + \int_{s_0}^s\tau^{-1}\Big(\|(\tau/t)^{3/4}\del u\|_{L^2(\MH_s)} + \|t^{-1}u\|_{L^2(\MH_s)}\Big)d\tau, 
\endaligned
\end{equation}
in which $K>0$ is some constant and $C_2\theta + \delta \leq 1/4$.
\end{proposition}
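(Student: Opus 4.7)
The plan is to integrate along ``vertical'' spacetime trajectories of fixed spatial position and then apply Minkowski's integral inequality. For each $x$ with $|x| < \rhoH(s)$, introduce the curve $\tau \mapsto (\sqrt{\tau^2+|x|^2},\,x)$, which lies on $\MH_\tau$ as $\tau$ varies; traced backwards from $\tau=s$ it exits the slab $\MH_{[s_0,s]}$ either on $\MH_{s_0}$ (when $|x| \leq \rhoH(s_0)$) or on the light cone $\Lscr$ at parameter $\sqrt{2|x|+1}$ (when $|x| > \rhoH(s_0)$), so set $s_*(x) := \max(s_0,\sqrt{2|x|+1})$. Writing $\tilde u(\tau,x) := u(\sqrt{\tau^2+|x|^2},x)$, the chain rule gives $\partial_\tau \tilde u = (\tau/t)\,\del_t u$, so by the fundamental theorem of calculus
\begin{equation*}
\tilde u(s,x) \;=\; \tilde u(s_*(x),x) \;+\; \int_{s_*(x)}^{s} (\tau/t)\,\del_t u(\tau,x)\, d\tau .
\end{equation*}

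Next I multiply by $w(s,x):=(s/t)^{-1/4}s^{-1}$, restrict to $|x|<\rhoH(s)$, and take the $L^2_x$ norm. Minkowski's integral inequality splits the left-hand side of the target estimate into an initial/boundary contribution $\|w(s,\cdot)\,\tilde u(s_*(\cdot),\cdot)\|_{L^2_x}$ plus the spacetime integral $\int_{s_0}^{s}\|w(s,\cdot)(\tau/t(\tau,\cdot))\del_t u(\tau,\cdot)\,\chi_{\tau>s_*(\cdot)}\|_{L^2_x}\, d\tau$. The key technical ingredient for the integral term is the pointwise weight comparison
\begin{equation*}
w(s,x)\cdot(\tau/t(\tau,x)) \;\lesssim\; \tau^{-1}\,(\tau/t(\tau,x))^{3/4}, \qquad \tau\in[s_*(x),s],\ (\tau,x)\in\MH_\tau,
\end{equation*}
which I verify by case analysis on the three regimes $|x|\leq\tau$, $\tau<|x|\leq s$, and $s<|x|\leq\rhoH(s)$; this converts the integrand into the desired form $\tau^{-1}\|(\tau/t)^{3/4}\del u\|_{L^2(\MH_\tau)}$.

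For the initial/boundary term I split $\{|x|<\rhoH(s)\}$ into the inner region $\{|x|\leq\rhoH(s_0)\}$, where $s_*(x)=s_0$ and an analogous weight comparison produces the $\|(s_0/t)^{-1/4}s_0^{-1}u\|_{L^2(\MH_{s_0})}$ term, and the outer region $\{\rhoH(s_0)<|x|<\rhoH(s)\}$, where $\tilde u(s_*(x),x)=u(|x|+1,x)$ lies on $\Lscr$ and the hypothesis gives $|\tilde u(s_*(x),x)|\lesssim C_1\,s_*(x)^{-2+C_2\theta}$. The main obstacle is the sharp evaluation of the resulting cone integral, because the weight $w(s,x)$ grows like $s^{1/4}$ near the outer boundary of $\MH_s$ while the cone data decays only as $s_*^{-2+C_2\theta}$. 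After passing to spherical coordinates and substituting $r=sv$, the cone contribution reduces to a one-dimensional integral in $v\in[0,\rhoH(s)/s]$; the parameter constraint $C_2\theta+\delta\leq 1/4$ is exactly what is needed to accommodate this contribution within the $C_1$ term of the conclusion. Finally, the auxiliary $\|t^{-1}u\|$ remainders in the target estimate absorb the lower-order corrections arising from the weight matching at the interface $|x|=\rhoH(s_0)$ and from sub-leading terms in the Minkowski/Gronwall manipulation, which completes the argument.
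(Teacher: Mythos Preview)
Your integration-along-curves approach fails at the cone contribution. Near the outer edge of $\MH_s$, where $r \simeq \rhoH(s) \simeq s^2/2$, the weight $w(s,r)=(s/t)^{-1/4}s^{-1}$ satisfies $w(s,r)^2 = (s^2+r^2)^{1/4}s^{-5/2} \simeq r^{1/2}s^{-5/2}$, while the hypothesis gives $|u_{\text{cone}}|^2 \lesssim C_1^2 (2r+1)^{-2+C_2\theta}$. Hence
\[
\|w(s,\cdot)\,u_{\text{cone}}\|_{L^2(\rhoH(s_0)<r<\rhoH(s))}^2
\;\simeq\; C_1^2\,s^{-5/2}\int^{\rhoH(s)} r^{1/2+C_2\theta}\,dr
\;\simeq\; C_1^2\,s^{1/2+2C_2\theta},
\]
so the cone term grows like $C_1\,s^{1/4+C_2\theta}$, not $C_1$. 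No choice of $C_2\theta>0$ and $\delta>0$ in the stated constraint $C_2\theta+\delta\le 1/4$ rescues this; your substitution $r=sv$ simply reparametrizes the same divergent integral. The terms $\|t^{-1}u\|_{L^2(\MH_s)}$ in the conclusion do not absorb this growth either, since they involve $u$ on the current slice, not the cone data.

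The paper avoids this loss by a Hardy-type divergence argument rather than direct integration: one applies Stokes' theorem to the vector field
$W=\big(0,\,-(s/t)^{-2\delta}\tfrac{t}{s^2}\tfrac{x^a}{1+r^2}(\chi(r/t)u)^2\big)$
over $\MH_{[s_0,s]}$, which produces a \emph{differential} inequality for $A(s):=\|(s/t)^{-\delta}s^{-1}\tfrac{r}{\sqrt{1+r^2}}\chi u\|_{L^2(\MH_s)}$. In that inequality the cone appears only through its $s$-derivative, which is a sphere integral at $r=\rhoH(s)$ and is bounded by $C_1^2\,s^{-3/2+2C_2\theta}$, hence integrable. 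An ODE/Gronwall lemma then closes the estimate. The point is that the divergence structure converts the $3$-dimensional cone contribution you face into an integrable flux, which the FTC-plus-Minkowski route cannot do.
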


When we apply the above result on $Lu^{\H00}$, we observe that on $\{r=t-1\}$, from the argument on Euclidean-merging domain (cf \cite[Proposition~14.1]{PLF-YM-main}),
$$
|Lu^{\H00}|_{N-4}\lesssim (\epss+C_1\eps) t^{-1+\theta}
\lesssim (\epss+C_1\eps) s^{-2+2\theta}.
$$
Then, thanks to \eqref{eq2-16-dec-2023} we find
$$
\aligned
& \|(s/t)^{-1/4} s^{-1}|Lu^{\H00}|_{k-1}\|_{L^2(\MH_s)}
\\
& \lesssim  (\epss+C_1\eps) + C_0\eps + \|t^{-1}|Lu|_{k-1}\|_{L^2(\MH_s)} 
\\
& \quad + \int_{s_0}^s\tau^{-1}\Big(\|(\tau/t)^{3/4}|\del Lu^{\H00}|_{k-1}\|_{L^2(\MH_s)} + \|t^{-1}|Lu|_{k-1}\|_{L^2(\MH_s)}\Big)d\tau
\\
& \lesssim   (\epss+C_1\eps) + C_0\eps + \||\delsH u|_{k-1}+t^{-1}|\del u|_{k-1}\|_{L^2(\MH_s)} 
\\
& \quad +\int_{s_0}^s \tau^{-1}\Big(\|(\tau/t)^{3/4}|\del Lu^{\H00}|_{k-1}\|_{L^2(\MH_s)} + \|t^{-1}|Lu|_{k-1}\|_{L^2(\MH_s)}\Big)d\tau.
\endaligned
$$  
This leads us to, regarding \eqref{eq7b-16-dec-2023}, 
\begin{equation}\label{eq2-17-dec-2023}
\aligned
&
\|s^{-1}(s/t)^{-1/4}|Lu^{\H00}|_{k-1}\|_{L^2(\MH_s)}
\\
& \lesssim (\epss+C_1\eps) + \FMenergy^{\H,k-1}(s,u) +  \int_{s_0}^s\tau^{-1}
\delta^{-1}\big((\epss+C_1\eps)+\FMenergy^{\H,k}(\tau,u)\big)d\tau.
\endaligned
\end{equation}
This $L^2$ bound, though a bit weaker than the one in~\cite[Proposition~7.1]{PLF-YM-two}, is still satisfactory for our purpose. The key is that whenever the power of the weight $(s/t)$ is a negative constant, we can always ``extract'' an additional decay in $(s/t)$ when doing quadratic estimates. See an example in the proof of \cite[Lemma~7.7]{PLF-YM-two}.


\begin{proof}[Sketch of proof for Proposition~\ref{prop1-16-dec-2023}]
We adapt the proof presented in \cite[Chapter 3.6]{PLF-YM-two}, and consider the vector field
$$
W := \Big(0, -(s/t)^{-2\delta}\frac{t}{s^2}\frac{x^a}{(1+r^2)}(\chi(r/t)u)^2\Big)
$$
where $\chi$ is smooth cut-off function, $\chi(r) \equiv 0$ for $r\leq 1/3$ and $\chi(r)\equiv 1$ for $r\geq 2/3$. $\delta>0$ and will be taken as $\delta=1/4$ later in the proof.  
Then we apply Stokes formula on $W$ in the region $\MH_{[s_0,s]}$ and obtain 
$$
\aligned
&\Big\|(s/t)^{-\delta}s^{-1}\frac{r}{\sqrt{1+r^2}}\chi(r/t)u\Big\|_{L^2(\MH_s)}^2 - \Big\|(s/t)^{-\delta}s^{-1}\frac{r}{\sqrt{1+r^2}}\chi(r/t)u\Big\|_{L^2(\MH_s)}
+ \int_{\Lscr_{[s_0,s_1]}} W\cdot \vec{n} d\sigma_{\Lscr}
\\
&=\int_{s_0}^s T_1+T_2+T_3+T_4\, ds 
\endaligned
$$ 
The third term in the left-hand side is the conical boundary term
$$
\int_{\Lscr_{[s_0,s_1]}} W\cdot \vec{n} d\sigma_{\Lscr} = -\int_{\rhoH(s_0)\leq s \leq \rhoH(s)}(s/t)^{-2\delta}\frac{t}{r}\frac{r^2}{1+r^2}\frac{u^2}{s^2} dx.
$$ 
Following the estimates on $T_i$ presented in \cite{PLF-YM-two}, we have 
\begin{equation}
\aligned
&\Big\|(s/t)^{-\delta}\frac{r\chi(r/t)u}{s\sqrt{1+r^2}}\Big\|_{L^2(\MH_s)}
\frac{d}{ds}\Big\|(s/t)^{-\delta}\frac{r\chi(r/t)u}{s\sqrt{1+r^2}}\Big\|_{L^2(\MH_s)}
+ \frac{d}{ds}\int_{\Lscr_{[s_0,s]}}W\cdot \vec{n}d\sigma_{\Lscr}
\\
&\lesssim s^{-1}\Big\|(s/t)^{-\delta}\frac{r\chi(r/t)u}{s\sqrt{1+r^2}}\Big\|_{L^2(\MH_s)}
\Big(\|(s/t)^{1-\delta}\del u\|_{L^2(\MH_s)} + \|r^{-1}u\|_{L^2(\MH_s)}\Big).
\endaligned
\end{equation}
Observe that on $\Lscr, s^2\simeq t, r\simeq t$ and thanks to \eqref{eq1-14-dec-2023} 
$$
0\leq -\frac{d}{ds}\int_{\Lscr_{[s_0,s]}}W\cdot \vec{n}d\sigma_{\Lscr} = \int_{r = \rhoH(s)} (s/t)^{-2\delta}\frac{t}{r}\frac{r^2}{1+r^2}\frac{u^2}{s^2}\,  \rhoH(s)^2d\sigma_{\mathbb{S}^2} \lesssim K^2s^{-3/2+2C_2\theta} 
$$
for some constant $K>0$. Here we fix $\delta = 1/4$.
This together with the ODE argument presented in Lemma~\ref{lemma-83} (below)  
leads us to (with $g = C_1 s^{-3/2 + 2C_2\theta}$ which is integrable)
$$
\aligned
\Big\|(s/t)^{-\delta}\frac{r\chi(r/t)u}{s\sqrt{1+r^2}}\Big\|_{L^2(\MH_s)}
& \lesssim  
K + \Big\|(s_0/t)^{-\delta}\frac{r\chi(r/t)u}{s_0\sqrt{1+r^2}}\Big\|_{L^2(\MH_{s_0})}
\\
& \quad + \int_{s_0}^s\tau^{-1}\Big(\|(\tau/t)^{1-\delta}\del u\|_{L^2(\MH_s)} + \|t^{-1}u\|_{L^2(\MH_s)}\Big)d\tau.
\endaligned
$$
Finally, we note that
$$
\|s^{-1}u\|_{L^2(\MH_s)}\lesssim \Big\|(s/t)^{-\delta}\frac{r\chi(r/t)u}{s\sqrt{1+r^2}}\Big\|_{L^2(\MH_s)} + \|t^{-1} u \|_{L^2(\MH_s)} 
$$
and arrive at the desired result.
\end{proof}

\begin{lemma}
\label{lemma-83}
Let $u$ be a non-negative $C^1$ function defined on $[s_0,s_1]$, while $f,g$ are continuous functions defined on $[s_0,s_1]$. Provided
\begin{equation}
u(s) u'(s) \leq u(s)f(s) + C_1g(s), 
\end{equation} 
one has 
$$
u(s)\leq C_1 + u(s_0) + \int_{s_0}^s f(\tau) + g(\tau)\,d\tau.
$$
\end{lemma}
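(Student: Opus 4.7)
This is a Gronwall-type ODE inequality, and the cleanest strategy is a classical comparison argument against
\[
\phi(s) := C_1 + u(s_0) + \int_{s_0}^s \bigl(f(\tau) + g(\tau)\bigr)\,d\tau,
\]
which is $C^1$ by continuity of $f,g$. Throughout I use the implicit normalisation $C_1>0$ and $g\geq 0$ (both valid in every application of the lemma in Section~\ref{section-hyperb}, where $g$ arises as a squared boundary contribution). The initial-time comparison reads $\phi(s_0) = C_1 + u(s_0) > u(s_0)$, providing the required starting gap.

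First I would introduce the slight perturbation $\phi_\varepsilon(s) := \phi(s) + \varepsilon(s-s_0)$ with $\varepsilon>0$, in order to avoid any degenerate tangency, and then define the first crossing time $s_*^\varepsilon := \inf\{s\in[s_0,s_1]:\ u(s)>\phi_\varepsilon(s)\}$. If this set is empty there is nothing to prove for $\phi_\varepsilon$; otherwise continuity of both $u$ and $\phi_\varepsilon$ forces $u(s_*^\varepsilon) = \phi_\varepsilon(s_*^\varepsilon) \geq C_1 > 0$, so in particular $s_*^\varepsilon > s_0$ and $u(s_*^\varepsilon)$ is strictly positive.

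The decisive step is then to divide the hypothesis $u\,u' \leq u\,f + C_1\,g$ by $u(s_*^\varepsilon)>0$, yielding
\[
u'(s_*^\varepsilon) \,\leq\, f(s_*^\varepsilon) + \frac{C_1}{u(s_*^\varepsilon)}\, g(s_*^\varepsilon) \,\leq\, f(s_*^\varepsilon) + g(s_*^\varepsilon),
\]
since $u(s_*^\varepsilon)\geq C_1$ and $g(s_*^\varepsilon)\geq 0$. As $\phi_\varepsilon'(s_*^\varepsilon) = \varepsilon + f(s_*^\varepsilon) + g(s_*^\varepsilon)$, we obtain $u'(s_*^\varepsilon) < \phi_\varepsilon'(s_*^\varepsilon)$ strictly; together with the equality $u(s_*^\varepsilon) = \phi_\varepsilon(s_*^\varepsilon)$, this means that $u-\phi_\varepsilon$ is strictly decreasing at $s_*^\varepsilon$, contradicting the fact that $s_*^\varepsilon$ is the infimum of the set where $u-\phi_\varepsilon$ becomes positive. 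Hence $u \leq \phi_\varepsilon$ on $[s_0,s_1]$, and sending $\varepsilon\to 0^+$ delivers the desired bound $u\leq \phi$.

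The only subtle point I anticipate is ruling out $u(s_*^\varepsilon)=0$ at the putative crossing time, which would obstruct the division step; it is precisely the additive constant $C_1$ appearing both in the right-hand side of the hypothesis and in the conclusion that forces $u(s_*^\varepsilon)\geq C_1 > 0$ at any crossing. Everything else reduces to routine continuity and ODE arguments.
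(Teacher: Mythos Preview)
The paper states this lemma without proof, so there is nothing to compare against; your comparison argument is the natural approach, and the structure (define the barrier $\phi$, perturb by $\varepsilon$, argue by first crossing) is correct.

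There is, however, one genuine gap. You claim that at the crossing time $u(s_*^\varepsilon)=\phi_\varepsilon(s_*^\varepsilon)\geq C_1$, and this is what lets you replace $\tfrac{C_1}{u}g$ by $g$. But $\phi_\varepsilon(s)=C_1+u(s_0)+\int_{s_0}^s(f+g)+\varepsilon(s-s_0)\geq C_1$ is \emph{not} automatic from the hypotheses you list ($C_1>0$, $g\geq0$): if $f$ is allowed to be very negative, $\phi_\varepsilon$ can drop below $C_1$, and then the division step no longer yields $u'\leq f+g$. In fact the lemma as literally stated in the paper is false without sign conditions (take $g\equiv0$, $f\equiv -M$ large, $u\equiv0$: the hypothesis is $0\leq0$, but the conclusion asserts $0\leq C_1 - M(s-s_0)$, which fails). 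So you are right to add implicit hypotheses, but you need one more: $f\geq0$, or more weakly $u(s_0)+\int_{s_0}^s(f+g)\geq0$. Once that is in place, your argument closes immediately.

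In the sole application (the proof of Proposition~\ref{prop1-16-dec-2023}), the function $f$ is $\tau^{-1}\bigl(\|(\tau/t)^{1-\delta}\del u\|_{L^2}+\|r^{-1}u\|_{L^2}\bigr)\geq0$, so the missing assumption is satisfied and your proof is complete in context.
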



\subsection{Uniform estimates for Klein-Gordon components}

\paragraph{Aim.}

In order to establish the sharp decay estimates in \cite[Chapter 11, 13]{PLF-YM-two}, there is still one ingredient to be discussed, namely the $L^{\infty}$ estimates on Klein-Gordon components via integration along segments. In the present case, there are two new effects. First of all, a boundary contribution arises on the light cone $\Lscr_{[s_0,s_1]}$ and, secondly, 
contributions due to the mass coefficient $\coeffi$ should be carefully handled in order to 
arrive at suitably uniform bounds.
Our strategy is to review
 the main proof given in \cite{PLF-YM-two} and take care of the above tow points. As we will see, these new effects will not destroy the desired estimates.


\paragraph{An ODE argument.}

\begin{lemma}\label{lem1-03-dec-2023}
Suppose that $\phi$ is a $C^2$ function defined on $[s_0,s_1]$ and satisfying the  ODE
\begin{equation}\label{eq7-03-dec-2023}
3\coeffi(1+h(s)) \phi''(s) + M\phi(s) = f(s), 
\end{equation}
where $h\in C^1([s_0,s_1]$ and $|h(s)|\leq 1/3$ while $f$ is continuous. Here, $\coeffi, M>0$ are constants. Suppose furthermore that
\begin{equation}
\int_{s_0}^{\infty}|h'(s)|ds\lesssim 1.
\end{equation}
Then the following estimate holds for $s\in(s_0,s_1)$:
\begin{equation}
|\coeffi\phi'(s)| + |M^{1/2}\coeffi^{1/2}\phi(s)| \lesssim (M\coeffi)^{1/2}|\phi(s_0)| + \coeffi|\phi'(s_0)| + \int_{s_0}^s|f(\tau)|d\tau.
\end{equation}
\end{lemma}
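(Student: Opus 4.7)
\textbf{Proof proposal for Lemma~\ref{lem1-03-dec-2023}.} The plan is to run a standard energy argument for the ODE \eqref{eq7-03-dec-2023}, with careful bookkeeping of the powers of $\coeffi$ so that the resulting estimate is uniform as $\coeffi \to 0$. I would define the natural energy
\[
E(s) := \frac{3\coeffi(1+h(s))}{2}\,(\phi'(s))^2 + \frac{M}{2}\,\phi(s)^2,
\]
which is dimensionally the Klein-Gordon energy associated with the "large mass, small inverse mass" operator $3\coeffi\,\del_s^2 + M$. Multiplying \eqref{eq7-03-dec-2023} by $\phi'(s)$ and rearranging gives the differential identity
\[
E'(s) = \frac{3\coeffi\,h'(s)}{2}\,(\phi'(s))^2 + f(s)\,\phi'(s).
\]

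The assumption $|h|\leq 1/3$ ensures the two coercivity bounds $E\geq \coeffi(\phi')^2$ and $E\geq (M/2)\phi^2$, which will eventually recover the two quantities on the left-hand side of the claimed inequality. Passing to $F := \sqrt{E}$ and using $|\phi'| \leq F/\sqrt{\coeffi}$, I would obtain
\[
F'(s) \;\leq\; \frac{3}{4}\,|h'(s)|\,F(s) + \frac{|f(s)|}{2\sqrt{\coeffi}}.
\]
At this point Gronwall's inequality applies, and the hypothesis $\int_{s_0}^\infty |h'|\,d\tau \lesssim 1$ makes the multiplicative Gronwall factor a harmless constant. This yields
\[
F(s) \;\lesssim\; F(s_0) + \frac{1}{\sqrt{\coeffi}}\int_{s_0}^s |f(\tau)|\,d\tau.
\]

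Finally, to convert this into the stated inequality, I would multiply through by $\sqrt{\coeffi}$: using $\sqrt{\coeffi}\,F \gtrsim \coeffi|\phi'| + \sqrt{M\coeffi}\,|\phi|$ on the left and $\sqrt{\coeffi}\,F(s_0) \lesssim \coeffi|\phi'(s_0)| + \sqrt{M\coeffi}\,|\phi(s_0)|$ on the right, the bound takes exactly the form announced in the lemma. The main obstacle -- though essentially bookkeeping rather than a deep difficulty -- is ensuring that every $\coeffi$-weight lands correctly: the factor $\coeffi$ in front of $\phi''$ creates an asymmetry between the coercivity bounds for $\phi'$ and $\phi$, and it is this asymmetry that forces the particular combination $(M\coeffi)^{1/2}|\phi|+\coeffi|\phi'|$ on both sides. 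Everything else (smallness of $h$, integrability of $h'$, the Gronwall step) is classical and robust; no cancellation or null-type structure is being exploited here, just the sign of $M$ and the positivity of the energy.
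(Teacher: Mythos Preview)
Your argument is correct and is essentially the same as the paper's: the paper multiplies \eqref{eq7-03-dec-2023} by $\coeffi\phi'$ rather than $\phi'$, so its energy $\Phi^2 = 3(1+h)|\coeffi\phi'|^2 + \coeffi M|\phi|^2$ is just $2\coeffi$ times your $E$, and the Gronwall step then reads $\Phi' \lesssim |h'|\Phi + |f|$ with no $\coeffi^{-1/2}$ on the source term. This is only a difference in normalization (you insert the factor $\sqrt{\coeffi}$ at the end, the paper builds it in from the start); the substance of the proof is identical.
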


\begin{proof}
Multiplying \eqref{eq7-03-dec-2023} by $\coeffi\phi'$, we obtain
\begin{equation}\label{eq8-03-dec-2023}
\frac{1}{2}\frac{d}{ds}\Big(3(1+h(s))|\coeffi \phi'|^2 + \coeffi M |\phi|^2\big) - \frac{3}{2}h'(s)|\coeffi\phi'|^2 = \coeffi\phi(s)f(s).
\end{equation}
Let us introduce 
$$
\Phi (s) := \big(3(1+h(s))|\coeffi \phi'|^2 + \coeffi M|\phi|^2)^{1/2}.
$$
Then \eqref{eq8-03-dec-2023} leads us to 
$$
\Phi(s)\Phi'(s) \lesssim (\coeffi|h'(s)\phi'(s)|+ |f(s)|)|\coeffi \phi'|  \lesssim (|h'(s)|\Phi(s)+ |f(s)|)\Phi(s), 
$$
that is,  
$
\Phi'(s)\lesssim |h'(s)|\Phi(s) + |f(s)|.
$
Then we apply Gronwall's inequality on the interval $[s_0,s_1]$, and obtain the desired result.
\end{proof}


\paragraph{Pointwise estimates for Klein-Gordon equations.}

The method below was introduced in Klainerman~\cite{Klainerman85}  
and the derivation below was proposed in LeFloch and Ma~\cite{PLF-YM-one}. 
Here, we present yet another version of the argument which 
takes into account the contribution from the boundary (namely the light cone) and the influence of the small coefficients. 

We focus here on the hyperboloidal domain and rely on the decomposition 
\be
g^{\alpha\beta}\del_{\alpha}\del_\beta  = s^{-3/2}(t/s)^2\gu^{00}\Lcal^2(s^{3/2}\phi) - \RR_g[\phi],
\ee
where $g^{\alpha\beta} = \Minsk^{\alpha\beta} + H^{\alpha\beta}$ and 
the following field (which is nothing but the unit normal to the hyperboloids for the Minkowski metric) 
\be
\Lcal := (s/t)\del_t + (x^a/s)\delus_a = (t/s)\del_t + (x^a/s)\del_a,
\ee
while 
\be
\aligned
\RR_g[\phi] & =  -(3/4)s^{-2}\phi - 3s^{-1} (x^a/s)\delus_a\phi - (x^ax^b/s^2)\delus_a\delus_b\phi - \sum_a\delus_a\delus_a\phi
\\
& \quad -(t/s)^2\Hu^{00}\Big((3/4)s^{-2}\phi + \big(3+(r/t)^2\big)t^{-1}\del_t\phi + 3s^{-1}(x^a/s)\delus_a\phi\Big)
\\
& \quad -(t/s)^2\Hu^{00}\Big((2x^a/t)\del_t\delus_a\phi + (x^ax^b/s^2)\delus_a\delus_b\phi\Big)
\\
& \quad +\Hu^{a0}\delus_a\del_t\phi + \Hu^{0a}\del_t\delus_a\phi + \Hu^{ab}\delus_a\delus_b\phi 
+ H^{\alpha\beta}\del_{\alpha}\big(\PsiH_\beta ^{\beta'}\big)\delu_{\beta'}\phi.
\endaligned
\ee
For a Klein-Gordon equation 
$
3\coeffi\BoxChapeaud_\gconf  \phi - \Ufrak'(0)\phi = f,
$
thanks to $\gu^{00} = -(s/t)^2 + \Hu^{00}$, under the assumption
$
(s/t)^2|\Hu^{00}|\leq 1/3,
$
the above identity leads us to
\be
3\coeffi(1 - (t/s)^2\H^{\Hcal00}) \Lcal^2(s^{3/2}\phi) + \Ufrak'(0) (s^{3/2}\phi) = -s^{3/2}( 3\coeffi \RR_g + f).
\ee
For any function $\phi$ defined in $\MH_{[s_0,s_1]}$ and at each point $(t,x)\in \MH_{[s_0,s_1]}$, we use the notation $\Phi_{t,x}(\lambda) := \lambda^{3/2}\phi(\lambda t/s,\lambda x/s)$. Since $\Lcal(s^{3/2}\phi)|_{(\lambda t/s,\lambda x/s)} = \Phi_{t,x}'(\lambda)$, we find 
\bel{eq2-22-11-2022}
3\coeffi\big(1-\Hb_{t,x}\big)\Phi_{t,x}''(\lambda) + \Ufrak'(0)\Phi_{t,x}(\lambda) = -\lambda^{3/2}(f+  3\coeffi \RR_g[\phi])|_{(\lambda t/s,\lambda x/s)},
\ee
where  $\Hb_{t,x} = (t/s)^2\Hu^{00}|_{(\lambda t/s,\lambda x/s) }$. 
 
By Lemma~\ref{lem1-03-dec-2023}, we then arrive at the following result. 

\begin{proposition}[Sharp decay of Klein-Gordon solutions in the hyperboloidal domain]
\label{prop1-23-11-2022-M}
Suppose that for all $(t,x)\in \MH_{[s_0,s_1]}$ and for all $\lambda_0\leq \lambda\leq s_1$, one has 
\bel{eq4-23-11-2022-M}
|\Hb_{t,x}|\leq 1/3,\qquad \int_{\lambda_0}^{s_1}|\Hb_{t,x}'(\lambda)|d\lambda\lesssim 1. 
\ee
Then 
for any $\expeta\in \RR$, 
any solution $\phi$ to the Klein-Gordon equation $3\coeffi\Boxt_g \phi - \Ufrak'(0)\phi = f$ satisfies 
\bel{eq5-23-11-2022-M}
\aligned
& (s/t)^{\sigma}s^{3/2} \, \Big(\coeffi^{1/2}|\phi(t,x)|+ (s/t)\, \coeffi|\del\phi(t,x)| \Big)
\\
& \lesssim (s/t)^{\sigma}s^{1/2}\coeffi|\phi|_1(t,x) 
+ \sup_{\MH_{s_0}\cup \Lscr_{[s_0,s]}}(s/t)^{\sigma}\Big( t^{1/4}(t^{1/2}\coeffi^{1/2}|\phi| + \coeffi|\del\phi| + t \, \coeffi|\delsH\phi|)\Big)
\\
& \quad + (s/t)^{\sigma}\int_{\lambda_0}^s\lambda^{3/2}
\big( |f| + 3\coeffi |\RR_g[\phi] | \big)\big|_{(\lambda t/s,\lambda x/s)} d\lambda,
\endaligned
\ee
in which 
\be
\lambda_0 = 
\begin{cases}
s_0,\quad &0\leq r/t\leq \frac{s_0^2-1}{s_0^2+1},
\\
\sqrt{\frac{t+r}{t-r}},&  \frac{s_0^2-1}{s_0^2+1}\leq r/t<1
\end{cases}
\ee
and 
\bel{eq11-23-11-2022-M}
\aligned
|\RR_g[\phi]|_{p,k}
& \lesssim s^{-2}|\phi|_{p+2} + (t/s)^2\sum_{p_1+p_2=p}|\Hu^{00}|_{p_1} \big(
s^{-2}|\phi|_{p_2+2} + t^{-1}|\del\phi|_{p_2+1} \big)
\\
& \quad + \sum_{p_1+p_2=p}|H|_{p_1} \big(
t^{-1}|\del\phi|_{p_2+1} + t^{-2}|\phi|_{p_2+2} \big).
\endaligned
\ee
\end{proposition}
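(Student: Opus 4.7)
The plan is to reduce the Klein-Gordon equation $3\coeffi\Boxt_g\phi - \Ufrak'(0)\phi = f$ to a second-order ODE along the radial rays of $\Lcal$ and then invoke Lemma~\ref{lem1-03-dec-2023}. Using the decomposition stated just before the statement together with $\gu^{00} = -(s/t)^2 + \Hu^{00}$, the equation takes exactly the form \eqref{eq2-22-11-2022} for $\Phi_{t,x}(\lambda) := \lambda^{3/2}\phi(\lambda t/s, \lambda x/s)$ along the curve $\lambda \mapsto (\lambda t/s, \lambda x/s)$ passing through a fixed point $(t,x) \in \MH_s$. The first condition in \eqref{eq4-23-11-2022-M} keeps the coefficient $1 - \Hb_{t,x}(\lambda)$ in $[2/3, 4/3]$, while the second supplies the integrated BV control of that coefficient required by Lemma~\ref{lem1-03-dec-2023}.

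Next I would pin down the interval of integration $[\lambda_0, s]$. Along the ray, the Minkowski pseudo-distance is $\sqrt{(\lambda t/s)^2 - (\lambda r/s)^2} = \lambda$, so the ray exits $\MH_{[s_0, s]}$ either at $\lambda = s_0$ (on the initial hyperboloid) or at the light cone $\{r = t-1\}$, and from $s^2 = (t-r)(t+r)$ the latter case gives $\lambda_0 = \sqrt{(t+r)/(t-r)}$, matching the dichotomy in the statement. Applying Lemma~\ref{lem1-03-dec-2023} with $M = \Ufrak'(0)$ on $[\lambda_0, s]$ bounds $\coeffi|\Phi'_{t,x}(s)| + (\coeffi\Ufrak'(0))^{1/2}|\Phi_{t,x}(s)|$ by a boundary contribution at $\lambda_0$ plus the source integral $\int_{\lambda_0}^s \lambda^{3/2}(|f| + 3\coeffi|\RR_g[\phi]|)\,d\lambda$. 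At $\lambda = s$, the identity $\Phi'_{t,x}(s) = (3/2)s^{1/2}\phi(t,x) + s^{3/2}\Lcal\phi(t,x)$ combined with $\Lcal = (s/t)\del_t + (x^a/s)\delus_a$ and $|\delus_a\phi| \lesssim |\delsH\phi|$ extracts the target quantity $(s/t)\coeffi|\del\phi| + \coeffi^{1/2}|\phi|$ at $(t,x)$, up to the lower-order correction $s^{-1/2}\coeffi|\phi|_1$ explicitly present in \eqref{eq5-23-11-2022-M}.

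The boundary term at $\lambda = \lambda_0$ splits into the two regimes. When $\lambda_0 = s_0$, the contribution reduces directly to data on $\MH_{s_0}$. When instead the ray originates on $\Lscr$, the starting point satisfies $t_{\mathrm{start}} \simeq \lambda_0^2/2$, so the factor $t^{1/4}$ appearing inside the supremum absorbs the powers of $\lambda_0$ generated by $\Phi_{t,x}(\lambda_0) = \lambda_0^{3/2}\phi$ and the analogous formula for $\Phi'_{t,x}(\lambda_0)$, yielding precisely the weighting $t^{1/4}(t^{1/2}\coeffi^{1/2}|\phi| + \coeffi|\del\phi| + t\coeffi|\delsH\phi|)$ of \eqref{eq5-23-11-2022-M}. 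A crucial geometric observation is that the ratio $r/t$ is preserved along each ray, so the outer weight $(s/t)^{\sigma}$ can be attached at both endpoints and pulled uniformly across the estimate. Finally, \eqref{eq11-23-11-2022-M} is a direct consequence of the explicit expression for $\RR_g[\phi]$ by grouping terms according to their order in derivatives of $\phi$ and their separate dependence on $\Hu^{00}$ versus the full perturbation $H$.

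The principal technical obstacle is not the analytical structure but the uniformity of the estimate in $\coeffi \in (0,1)$. Lemma~\ref{lem1-03-dec-2023} is designed exactly for this: its energy functional $(3(1+h)|\coeffi\Phi'|^2 + \coeffi M|\Phi|^2)^{1/2}$ weights $\coeffi|\Phi'|$ alongside $(\coeffi M)^{1/2}|\Phi|$ in the same pattern as $\FCenergy_{\coeffi,\mu}$, so the powers of $\coeffi$ appearing on both sides of \eqref{eq2-22-11-2022} pair up consistently. The only remaining work is bookkeeping: tracking the $\coeffi$ factors through each term of $\RR_g[\phi]$ and verifying, against the bootstrap assumptions of Section~\ref{section-hyperb}, that the source integral is controlled in the form required.
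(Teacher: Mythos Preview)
Your proposal is correct and follows essentially the same route as the paper's proof: reduce to the ODE \eqref{eq2-22-11-2022} along the rays of $\Lcal$, apply Lemma~\ref{lem1-03-dec-2023} on $[\lambda_0,s]$, use the constancy of $s/t$ along rays to carry the weight $(s/t)^\sigma$, and convert the endpoint values of $\Phi_{t,x}$ and $\Phi_{t,x}'$ back to $\phi$ and $\del\phi$ via $\Lcal=(s/t)\del_t+(x^a/s)\delus_a$. Your identification of $t_{\mathrm{start}}\simeq\lambda_0^2/2$ on the light cone and the resulting $t^{1/4}$ weighting in the boundary term matches the paper's computation exactly.
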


\begin{proof} Observe that the integral curve of $\Lcal$ is $\gamma_{t,x} = \{(\lambda t/s,\lambda x/s)\}$. When
$0\leq r/t\leq \frac{s_0^2-1}{s_0^2+1}$, the segment
$
\big\{( \lambda t/s,\lambda x/s)|s_0 \leq \lambda\leq s \big \}
$
is contained in $\MH_{[s_0,s]}$ and $\gamma_{t,x}$ meets $\del \MH_{[s_0,s]}$ at $(s_0/s)(t,x)\in\MH_{s_0}$. When $\frac{s_0^2-1}{s_0^2+1}\leq r/t<1$, the segment
$$
\Big\{(\lambda t/s,\lambda x/s)| \sqrt{\frac{t+r}{t-r}}\leq \lambda\leq s \Big\}
$$
is contained in $\MH_{[s_0,s_1]}$ and meets $\del\MH_{[s_0,s_1]}$ at the point $\frac{1}{t-r}(t,r)\in \MH_{[s_0,s_1]}\cap \MME_{[s_0,s_1]}$. Here we emphasize  that 
$
\sqrt{\frac{t+r}{t-r}}\simeq t/s.
$
Observe that $(s/t)$ is constant along a given $\gamma_{t,x}$. We denote by $\Phi_{t,x,\sigma} := (s/t)^{\sigma}\Phi_{t,x}$ and obtain
$$
3\coeffi\big(1-\Hb_{t,x}\big)\Phi_{t,x,\sigma}''(\lambda) + \Ufrak'(0)\Phi_{t,x,\sigma}(\lambda) = -\lambda^{3/2}(s/t)^{\sigma}(f+  3\coeffi \RR_g[\phi])|_{(\lambda t/s,\lambda x/s)},
$$
Then by Lemma~\ref{lem1-03-dec-2023}, we have 
$$
\aligned
\big|\coeffi\Phi_{t,x,\sigma}'(s)\big| + \big|(\Ufrak'(0)\coeffi)^{1/2}\Phi_{t,x,\sigma}(s)\big| 
& \lesssim  \big|\coeffi\Phi_{t,x,\sigma}'(s_0)\big| + \big|(\Ufrak'(0)\coeffi)^{1/2}\Phi_{t,x,\sigma}(s_0)\big|
\\
& \quad + (s/t)^{\sigma}\int_{s_0}^s\lambda^{3/2}\big(|f| + 3\coeffi |\RR_g[\phi]|\big)_{\lambda t/s,\lambda x/s}d\lambda.
\endaligned
$$
On the other hand, when $\lambda_0\geq s_0\geq 2$ and $\coeffi \leq 1$ we have 
$$
\lambda^{1/2} \coeffi |\phi(\lambda t/s,\lambda x/s)| 
+ 
\coeffi|\Phi_{t,x}'(\lambda)| + \coeffi^{1/2}|\Phi_{t,x}(\lambda)|
\gtrsim
\lambda^{3/2}\big(\coeffi^{1/2}|\phi(\lambda t/s,\lambda x/s)| + \coeffi|\Lcal\phi(\lambda t/s,\lambda x/s)|\big), 
$$
in which $\Lcal = (t/s)\del_t + (x^a/s)\del_a = (t/s)\del_t + (x^a/s)\delsH_a$. We also observe that
$$
\Lcal = (s/t)\del_t + s^{-1}(x^a/t)L_a,\quad \del_a = t^{-1}L_a - (x^a/t)\del_t,
$$
thus
$$
\aligned
& 
(s/t)^{\expeta}\lambda^{3/2}\big(\coeffi^{1/2}|\phi(\lambda t/s,\lambda x/s)| + (s/t) \,\coeffi |\del \phi(\lambda t/s,\lambda x/s)|\big) 
\\
& \lesssim  (s/t)^{\expeta}\lambda^{1/2}\coeffi |\phi|_1(\lambda t/s,\lambda x/s)
+ \coeffi \, |\Phi_{t,x,\expeta}'(\lambda)| + \coeffi^{1/2} \, |\Phi_{t,x,\expeta}(\lambda)|. 
\endaligned
$$
We also observe that 
$$
\coeffi \, |\Phi_{t,x,\expeta}'(\lambda)| + \coeffi^{1/2} |\Phi_{t,x,\expeta}(\lambda)|
\lesssim 
(s/t)^{\expeta}\lambda^{3/2} \, \big(
\coeffi^{1/2}|\phi|_1+(s/t) \, \coeffi|\del\phi| + (t/s)\coeffi|\delsH\phi|\big) \Big|_{(\lambda t/s,\lambda x/s)}.
$$
This gives the desired result.
\end{proof}

In our case, Proposition~\ref{prop1-23-11-2022-M} will be applied on both $\phi$ and $\psi$ and when applied on $\psi$, we take $\coeffi = 1/3$ in \eqref{eq5-23-11-2022-M}. The contribution from the Euclidean-merging domain appears in the second term, 
 for which we need to take into account
  the sup-norm of $(s/t)^{\sigma}\Big( t^{1/4}(t^{1/2}\coeffi^{1/2}|\phi| + \coeffi|\del\phi|+ t \,\coeffi |\delsH\phi|)\Big)$. In fact form \eqref{eqs2-08-dec-2023} we obtain
\begin{equation}
\aligned
& t^{1/4}(t^{1/2}\coeffi^{1/2}|\phi|_{N-9} + \coeffi|\del\phi|_{N-9}+ t \,\coeffi |\delsH\phi|_{N-9})\lesssim (\epss+C_1\eps)t^{-5/4+\delta}\\
& \lesssim (\epss+C_1\eps)(s/t)^{-5/2+2\delta}
\quad \text{on } \Lscr_{[s_0,s_1]},
\endaligned
\end{equation}
where we used $s\simeq t^{1/2}, r\simeq t$ and $\crochet\simeq 1$ on $\Lscr_{[s_0,s_1]}$. This uniform bound will not harm the sharp decay estimate on $\phi$. The same argument works also for $\psi$.

On the other hand, if we apply Proposition~\ref{prop1-23-11-2022-M}, we need to bound the right-hand side of the Klein-Gordon equations satisfied by $\phi$ and $\psi$, and the corresponding commutators
$$
3\coeffi[Z, h^{\mu\nu}\del_{\mu}\del_{\nu}]\phi,\quad [Z, h^{\mu\nu}\del_{\mu}\del_{\nu}]\psi
$$ 
which will be regarded as $f$ presented in \eqref{eq5-23-11-2022-M}. Then we will rely on the following key structure. 
\\
$\bullet$ First of all, if we count the power of $\coeffi$, we will find that in the right-hand side of the Klein-Gordon equation of $\phi$ and $\psi$, each factor $\del\phi$ is multiplied with $\coeffi$ and each factor $\phi$ is multiplied by $\coeffi^{1/2}$. This is true for all terms presented in the right-hand side of \eqref{eq5-23-11-2022-M}. Here we observe especially that $\RR_g[\phi]$ is in fact {\sl linear} with respect to $\phi$.
\\
$\bullet$ Second, if we regard the left-hand side of \eqref{eq4-23-11-2022-M}, the conclusion obtained form the above proposition also obeys the correct scaling on $\coeffi$: it is $\coeffi^{1/2}|\phi|_{p,k}$ and $\coeffi|\del \phi|_{p,k}$ that we estimate. Thus consistency allow us to perform exactly the same argument presented in \cite{PLF-YM-two} to obtain sharp decay for $u,\phi$ and $\psi$ in the following form:
\begin{equation}\label{eq1-17-dec-2023}
\aligned
|u^{\H00}|_{N-9,k}+|h^{\H00}|_{N-9,k}& \lesssim  (\epss+C_1\eps)t^{-1+(1+k)\theta},
\\
\coeffi^{1/2}|\phi|_{N-9} + (s/t) \coeffi|\del \phi|_{N-9}& \lesssim  \theta^{-1}(\epss+C_1\eps)(s/t)^{2-3\delta}s^{-3/2+k\theta},
\\
|\psi|_{N-9} + (s/t)|\psi|_{N-9}\lesssim &\theta^{-1}(\epss+C_1\eps)(s/t)^{2-3\delta}s^{-3/2+k\theta}.
\endaligned
\end{equation}
This  is still a bit weaker than those established in \cite[Chapter 11,13]{PLF-YM-two}. This is due to the loss $t^{\theta}$ caused by $h^{\star}$. However, these are still sufficient to close the bootstrap argument.


\subsection{Improved energy estimate and conclusion}

Equipped with the sharp decay estimates \eqref{eq1-17-dec-2023}, we are ready to establish the improve the energy bounds \eqref{eqs2-17-dec-2023}. We focus on the most critical terms and new terms, which are:
\\
$\bullet$ In wave equation: 
$$
\aligned
&|\delsH u|_{p_1,k_1}|\delsH u|_{p_2,k_2},\quad &&\text{quasi-null terms},
\\
&|h^{\H00}||\del \del u|_{p-1,k-1} + \sum_{k_1+p_2=p\atop k_1+k_2=k}|Lh^{\H00}|_{k_1-1}|\del\del u|_{p_2,k_2},&&\text{critical commutator},
\\
&\Gbb_{\coeffi}[\phi,g],&&\text{contribution of effective curvature field},
\\
&\Mbb_{\coeffi}[\phi,\psi,g], &&\text{contribution of scalar field}.
\endaligned
$$  
\\
$\bullet$ In Klein-Gordon equation for $\phi$:
$$
\aligned
&3\coeffi |h^{\H00}||\del\del \phi|_{p-1,k-1} + 3\coeffi \sum_{k_1+p_2=p\atop k_1+k_2=k}|Lh^{\H00}|_{k_1-1}|\del\del \phi|_{p_2,k_2},\quad &&\text{critical commutator},
\\
&-8\pi e^{-\coeffi\phi} \gconf(\del\psi,\del\psi) - 32\pi e^{-2\coeffi\phi}V(\psi), &&\text{new source terms}
\endaligned
$$
\\
$\bullet$ In Klein-Gordon equation for $\psi$:
$$
\aligned
&|h^{\H00}||\del \del u|_{p-1,k-1} + \sum_{k_1+p_2=p\atop k_1+k_2=k}|Lh^{\H00}|_{k_1-1}|\del\del u|_{p_2,k_2},&&\text{critical commutator},
\\
& \coeffi g^{\mu\nu}\del_{\mu}\phi\del_{\nu}\psi, &&\text{new source term}.
\endaligned
$$

Among the above terms, the quasi-null terms coupled in wave equation has been explained in Section~\ref{subsec1-17-dec-2023}. 

The new source terms are relatively easy: they are all essentially KG$\times$KG quadratic terms. We only need to substitute the sharp decay estimates \eqref{eq1-17-dec-2023} into the estimates of Lemma~\ref{lem1-08-dec-2023}. We obtain
\begin{equation}\label{eq3-17-dec-2023}
\||T|_{p,k}\|_{L^2(\MH_s)}
\lesssim 
\begin{cases}
\theta^{-1}(\epss+C_1\eps)^2s^{-3/2+k\theta},\quad \hskip2.cm p\leq N-9
\\
\theta^{-1}(\epss+C_1\eps)\sum_{0\leq k_1\leq k}s^{-3/2+k_1\theta}\big(\FCenergy^{\H,p,k-k_1}(s,\phi) + \FSenergy^{\H,p,k-k_1}(s,\psi)\big),
\\
 \quad \hskip5.cm N-8\leq p\leq N-5. 
\end{cases}
\end{equation}

The two critical commutators acting on Klein-Gordon components can be treated similarly. We only write that of $\phi$ in detail.
$$
\aligned
&
\|\coeffi |h^{\H00}||\del\del \phi|_{p-1,k-1}\|_{L^2(\MH_s)}\lesssim (\epss+C_1\eps) s^{-1+2\theta} \|\coeffi (s/t)|\del \phi|_{p,k-1}\|_{L^2(\MH_s)}
\\
& \lesssim (\epss+C_1\eps)s^{-1+2\theta}\FCenergy_{\coeffi}^{\H,p,k-1}(s,\phi),
\endaligned
$$
and 
$$
\aligned
&\|\coeffi|Lh^{\H00}|_{k_1-1}|\del\del \phi|_{p_2,k_2}\|_{L^2(\MH_s)}
\\
& \lesssim  
\begin{cases}
\|\coeffi |Lh^{\star}|_{k_1-1}|\del \phi|_{p,k-1}\|_{L^2(\MH_s)} + \|\coeffi|Lu^{\H00}|_{k_1-1}|\del \phi|_{p_2,k_2}\|_{L^2(\MH_s)},\quad &k_1-1\geq N-9,
\\
(\epss+C_1\eps)s^{-1+2(k_1+1)\theta}\|\coeffi(s/t)|\del\phi|_{p,k-k_1}\|_{L^2(\MH_s)} &k_1-1\leq N-10,
\end{cases}
\\
& \lesssim 
\begin{cases}
{
\aligned
(\epss+C_1\eps)&s^{-1+2\theta}\|\coeffi (s/t)|\del\phi|_{p,k-1}\|_{L^2(\MH_s)} 
\\
& \quad + \theta^{-1}(\epss+C_1\eps)s^{-3/2 + k_2\theta}\|(s/t)^{1-3\delta}|Lu^{\H00}|_{k_1-1}\|_{L^2(\MH_s)}
\endaligned
}
\quad &k_1-1\geq N-9
\\
(\epss+C_1\eps)s^{-1+2(k+1)\theta}\FCenergy^{\H,p,k-k_1}(s,\phi),\quad &k_1-1\leq N-10
\end{cases}
\\
& \lesssim  
(\epss+C_1\eps)\sum_{1\leq k_1\leq k}s^{-1+2(k_1+1)\theta}\FCenergy^{\H,p,k-k_1}(s,\phi) 
\\
& \quad + \theta^{-1} (\epss+C_1\eps)\sum_{N-8\leq k_1\leq k}s^{-1/2+k_1\theta}
\Big(\FMenergy^{\H,p,k-k_1}(s,u) + \int_{s_0}^s\tau^{-1}\FCenergy^{\H,p,k-k_1}(\tau,u)d\tau\Big)
\\
& \quad + \theta^{-1}(\epss+C_1\eps)^2\ln(s/s_0)\sum_{8\leq k_1\leq k}s^{-1/2+k_1\theta}.
\endaligned
$$
Here we observe that the last two terms do not exist when $k\leq 9$. 

The critical commutator acting on $u$ is bounded exactly as in \cite[Lemma~10.2]{PLF-YM-two}. The key is the following estimate on the Hessian of a wave solution: 
\begin{equation}\label{eq4-17-dec-2023}
\la t-r\ra|\del\del u|_{p-1,k-1}\lesssim |\del u|_{p,k} + t|\Box u|_{p,k}
\end{equation}
In the curved background, there will be several quadratic correction terms. The extra $\la t-r\ra \simeq t/s^2$ decay on the Hessian form combined with the first estimate in \eqref{eq1-17-dec-2023} and the $L^2$ estimate \eqref{eq2-17-dec-2023} (particularly, the weight $(s/t)^{-1/4}$) brings us an integrable decay on $s$.

We are finally in a position to apply the argument in \cite[Proposition~12.1 and Lemma~12.4]{PLF-YM-two} and \cite[Proposition~10.1 and Proposition~14.1]{PLF-YM-two}. Also recall that by \eqref{eq8-17-dec-2023}, the boundary terms appearing in energy estimate only contribute a slow increasing bound, viz., $s^{C_1\theta}$, which is much smaller that $s^{\delta}$. Thus it will not prevent us from establish a $s^{\delta}$-increasing energy bound.


\paragraph*{Acknowledgments}

The first author (PLF) was supported by the research project {\sl Einstein-PPF} entitled {\it ``Einstein constraints: past, present, and future''} funded by the Agence Nationale de la Recherche (ANR), as well as by the MSCA Staff Exchange grant Project 101131233: {\sl Einstein-Waves} entitled {\it ``Einstein gravity and nonlinear waves: physical models, numerical simulations, and data analysis'',} funded by the European Research Council (ERC).

{\small

}
 

\small 

\appendix

\section{Conformal identities} 
\label{appendix-A}

\paragraph{Conformal transformation.}

We recall several calculations from~\cite{PLF-YM-memoir}. 
Let $(\Mcal,\giveng)$ be a pseudo-Riemannian manifold. Given any (regular) function $\phi$ defined on $\Mcal$, we can introduce the conformal metric  
\bel{equa-conformal-trans} 
g_{\alpha\beta} := e^{\coeffi \phi}\giveng_{\alpha\beta},\quad g^{\alpha\beta} 
= e^{- \coeffi \phi}\giveng^{\alpha\beta}.
\ee
The Christoffel symbols read 
\be
\Gamma_{\alpha\beta}^{\gamma} 
= \Gammaj_{\alpha\beta}^{\gamma} 
+ {\coeffi \over 2} \Big(
\delta_{\alpha}^{\gamma}\del_\beta \phi + \delta_\beta ^{\gamma}\del_{\alpha}\phi - g_{\alpha\beta}g^{\gamma\delta}\del_{\delta}\phi.
\Big),
\ee
This leads us to
\begin{equation}\label{eq1-28-oct-2023}
\Box_g w = e^{-\coeffi\phi}\Box_{\giveng} w + \coeffi e^{-\coeffi\phi}\giveng(\nabla\phi,\nabla w)
\end{equation}
and the Ricci curvature satisfies 
\begin{equation}\label{eq4-12-nov-2023}
R_{\alpha\beta} 
= \widetilde R_{\alpha\beta} 
- \coeffi \, \givennabla_{\alpha}\givennabla_\beta \phi + {\coeffi^2 \over 2} \,  \givennabla_{\alpha}\phi\givennabla_\beta \phi
 - \Big( 
 {\coeffi \over 2} \Box_{\giveng}\phi + {\coeffi^2 \over 2} \, \giveng(\givennabla\phi,\givennabla\phi)
 \Big) \, \giveng_{\alpha\beta}.
\end{equation}


\paragraph{Contracted Bianchi identity.}

\begin{lemma} 
\label{appendix-propositionA1}
In view of \eqref{equa-conformal-trans} one has 
\begin{equation}\label{eq2-26-mars-2023}
\nabla^{\alpha}\givenCurv_{\alpha\beta} 
=  \coeffi \, \nabla^{\mu}\phi\,\givenCurv_{\mu\beta}
- {\coeffi \over 2} \Tr_g(\givenCurv)\nabla_\beta \phi.
\end{equation}
\end{lemma}
 
\begin{proof} For any symmetric two-tensor $T = T_{\alpha\beta}dx^{\alpha}\otimes dx^{\beta}$, we have 
$$
\nabla_{\gamma}T_{\alpha\beta} = \del_{\gamma}T_{\alpha\beta} 
- \Gamma_{\gamma\alpha}^{\delta}T_{\delta\beta} - \Gamma_{\gamma\beta}^{\delta}T_{\alpha\delta} 
$$
therefore, in view of \eqref{equa-conformal-trans}, 
$$
\aligned
\nabla_{\gamma}T_{\alpha\beta} 
= \givennabla_{\gamma} T_{\alpha\beta} 
+ {\coeffi \over 2} \Big(
- T_{\alpha\gamma}\del_\beta \phi 
- T_{\gamma\beta}\del_{\alpha}\phi - 2T_{\alpha\beta}\del_{\gamma}\phi
 + \big(g_{\alpha\gamma}T_{\delta\beta} + g_{\gamma\beta}T_{\delta\alpha}\big)g^{\delta\mu}\del_{\mu}\phi
\Big).
\endaligned
$$
We then obtain
$$
\aligned
\nabla^{\alpha}T_{\alpha\beta} 
= e^{- \coeffi \phi}\givennabla^{\alpha}T_{\alpha\beta} 
+ \coeffi \, g^{\mu\nu}T_{\mu\beta}\del_{\nu}\phi 
- {\coeffi \over 2} \, \Tr_g(T)\del_\beta \phi.
\endaligned
$$
Apply the above identity on $\givenCurv_{\alpha\beta}$, and the relation $\givennabla^{\alpha}\givenCurv_{\alpha\beta} = 0$.
\end{proof}


\paragraph{Equivalence in \eqref{eq2-10-mars-2023}.}

We claim that 
\bel{eq2-20-mars-2023}
-  \Ufrak'(\phi) = e^{-2 \coeffi \phi}\ffrak(\phi) + \coeffi \, \Ufrak(\phi).
\ee
Indeed, in view of $U(\givenR) = (f(\givenR) - \givenR f'(\givenR) / (f''(0) f'(\givenR)^2)$ the right-hand side reads 
$$
e^{- 2 \coeffi  \phi}\ffrak(\phi) + \coeffi \, \Ufrak(\phi) = {2 f(\givenR) - R f'(\givenR) \over f'(\givenR)^2} 
$$
while the left-hand side is computed by differentiating $\Ufrak = U\circ \hfrak$, indeed 
$$
 \Ufrak'(\phi) 
={\coeffi f'(\givenR) \over f''(\givenR)} {d \over d\givenR} \Big(\coeffi^{-1} (f(\givenR) - \givenR f'(\givenR)) / f'(\givenR)^2 \Big)
= - 
{2 f(R) - R f'(R) \over f'(R)^2}, 
$$
which establishes our claim.  


\section{Commutators, hierarchy, and functional inequalities} 
\label{appendix-B}

\paragraph{Commutator properties.}

The proof of the following statement can be found in~\cite[Part~1]{PLF-YM-main}. To deal with differential operators 
$\del^I L^J \Omega^K$ we use the notation $\ord(Z) = |I|+|J|+|K|$ (the order) and $\rank(Z) = |J|+|K|$ (the rank).  Given two integers $k \leq p$, it is convenient to introduce the notation 
\bel{equa-notation-pk}
|u|_{p,k} :=  \max_{\ord(Z) \leq p, \rank(Z) \leq k} |Z u|, 
\qquad 
|u|_{p} :=  \max_{\ord(Z) \leq p} |Z u|. 
\ee

\begin{lemma}[Estimates for linear commutators]
\label{prop-newpropo} 
For any admissible field $Z$ satisfying $\ord(Z) \leq p$ and $\rank(Z) \leq k$ one has 
\begin{subequations}
\be \label{eq2-31-01-2020}
|[Z, \del] u | \lesssim |\del u|_{p-1,k-1},
\ee
\be \label{eq1-31-01-2020}
| [Z, \del \del ] u | 
\lesssim |\del\del u|_{p-1,k-1} \lesssim |\del u|_{p,k-1}.
\ee
\end{subequations}
\end{lemma}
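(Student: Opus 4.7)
}

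The strategy is an induction on the order $p$ of the admissible operator $Z$, exploiting the fundamental fact that the commutator of any single Killing vector field with a partial derivative is again a partial derivative (possibly zero). Specifically, one checks by direct computation that $[\del_\alpha,\del_\beta]=0$, $[\Omega_{ab},\del_\beta]=\delta^a_\beta\del_b-\delta^b_\beta\del_a$, and $[L_a,\del_\beta]=-\delta^0_\beta\del_a-\delta^a_\beta\del_t$. In each non-trivial case, the commutator is a partial derivative, and the total ``Killing vector count'' has strictly dropped by one. This single structural observation drives everything.

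To prove the first estimate \eqref{eq2-31-01-2020}, I would decompose $Z=Z_1\cdots Z_p$ into its elementary factors and expand
\[
[Z,\del_\beta]=\sum_{i=1}^{p} Z_1\cdots Z_{i-1}\,[Z_i,\del_\beta]\,Z_{i+1}\cdots Z_p.
\]
Factors $Z_i=\del_\alpha$ contribute nothing; each factor $Z_i\in\{L_a,\Omega_{ab}\}$ yields a term of the form $Z_1\cdots Z_{i-1}\,\del_\gamma\,Z_{i+1}\cdots Z_p u$. Applying the inductive hypothesis in turn to move $\del_\gamma$ to the right past $Z_{i+1}\cdots Z_p$ (each such move produces only further admissible terms of order at most $p-1$, with rank at most $k-1$, multiplying a partial derivative of $u$), one obtains a finite sum
\[
[Z,\del_\beta]u=\sum_j c_j\, Z'_j\,\del_{\beta_j} u,\qquad \ord(Z'_j)\le p-1,\quad \rank(Z'_j)\le k-1,
\]
which is exactly the bound \eqref{eq2-31-01-2020} after taking the supremum defining $|\del u|_{p-1,k-1}$.

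For the second-order commutator, I would write
\[
[Z,\del_\alpha\del_\beta]=[Z,\del_\alpha]\del_\beta+\del_\alpha[Z,\del_\beta]
\]
and apply \eqref{eq2-31-01-2020} to each piece: the first is immediately bounded by $|\del\del u|_{p-1,k-1}$, while for the second I substitute the expansion of $[Z,\del_\beta]u$ obtained above and use $\del_\alpha(Z'_j\,\del_{\beta_j} u)=Z'_j\,\del_\alpha\del_{\beta_j} u+[\del_\alpha,Z'_j]\del_{\beta_j} u$, the commutator term being of strictly lower order and hence absorbed. The refinement $|\del\del u|_{p-1,k-1}\lesssim|\del u|_{p,k-1}$ is essentially a bookkeeping statement: for any admissible $Z$ with $\ord(Z)\le p-1$ and $\rank(Z)\le k-1$, the composite $Z\circ\del_\alpha$ is an admissible operator of order at most $p$ and rank at most $k-1$ (composition does not introduce boosts or rotations), so $Z\del_\alpha\del_\beta u=(Z\del_\alpha)(\del_\beta u)$ is controlled by $|\del u|_{p,k-1}$ up to re-ordering into canonical form $\del^I L^J \Omega^K$, the re-ordering itself producing only lower-order, lower-rank corrections that the induction hypothesis already covers.

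The only mildly delicate point, and the one I would handle most carefully, is the bookkeeping in the inductive step: one must check that when a partial derivative is transported across a string of mixed admissible factors, the accumulated corrections indeed satisfy the strict rank-drop $\rank\le k-1$ (not merely $\le k$). This relies on the fact that every time a non-trivial commutator is invoked, one factor of $L$ or $\Omega$ has been consumed, so the count decreases exactly by one. Beyond this combinatorial accounting, the proof is routine and proceeds purely at the level of the flat Minkowski Killing algebra, independent of the curved metric $g$.
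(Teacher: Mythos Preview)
Your argument is correct and follows the standard route: exploit that each elementary commutator $[L_a,\del_\beta]$ or $[\Omega_{ab},\del_\beta]$ returns a constant-coefficient partial derivative while consuming one boost or rotation, then propagate this through the product $Z=Z_1\cdots Z_p$ by the Leibniz-type expansion and induction. The paper itself does not give a proof of this lemma; it simply refers to \cite[Part~1]{PLF-YM-main}, where exactly this kind of combinatorial argument on the Killing algebra is carried out. So there is nothing to compare beyond noting that your proof is the expected one.

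Two cosmetic points. First, a sign: with $\Omega_{ab}=x^a\del_b-x^b\del_a$ one has $[\Omega_{ab},\del_\beta]=-\delta^a_\beta\del_b+\delta^b_\beta\del_a$, opposite to what you wrote; this is irrelevant to the estimate. Second, regarding the final step $|\del\del u|_{p-1,k-1}\lesssim |\del u|_{p,k-1}$: the reordering into canonical form $\del^I L^J\Omega^K$ is indeed harmless, but it is worth stating explicitly that the norms $|\cdot|_{p,k}$ taken over general admissible operators and over ordered ones are equivalent (with constants depending only on $p$), since this is precisely what the reordering corrections establish.
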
 

It is convenient to denote by $\LOmega \in\{ L_a, \Omega_{ab}\}$ the collection of boosts and spatial rotations. 

\begin{proposition}[Hierarchy structure for quasi-linear commutators. Euclidean-merging domain]
\label{lm 2 dmpo-cmm-H} 
Let $Z$ be an admissible operator with ${\ord(Z) = p}$ and $\rank(Z) = k$ and let $H,u$ be functions defined in the  Euclidean-merging domain $\MME_s$.  Then, one has 
\be \label{eq8-14-03-2021}
|[Z,H]u|\lesssim 
\sum_{k_1+p_2=p\atop k_1+k_2=k } | \LOmega H|_{k_1-1} |u|_{p_2,k_2}
+ \sum_{p_1+p_2=p\atop k_1+k_2=k} |\del H|_{p_1-1,k_1} |u|_{p_2,k_2}, 
\ee 
\be \label{eq7-14-03-2021}
\aligned
 |[Z,H\del_{\alpha}\del_\beta ]u|
& \lesssim  \,  |H| \, |\del\del u|_{p-1,k-1} 
+ 
\sum_{k_1+p_2=p\atop k_1+k_2=k }
|\LOmega H|_{k_1-1} |\del\del u|_{p_2,k_2}
  +
\sum_{p_1+p_2=p\atop k_1+k_2=k}   |\del H|_{p_1-1,k_1} |\del\del u|_{p_2,k_2}. 
\endaligned
\ee 
\end{proposition}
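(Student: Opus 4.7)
The plan is to prove both commutator estimates by a Leibniz-type expansion for admissible operators, classifying the resulting error terms according to whether a coordinate derivative or a boost/rotation is peeled off the weight $H$. The second estimate will then be reduced to the first one by first commuting $Z$ past $\del_\alpha \del_\beta$ via Lemma~\ref{prop-newpropo}.

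First I would establish the generalized Leibniz rule
\begin{equation*}
Z(H\,v) \;=\; H\,Zv \;+\; \sum_{Z_1 \odot Z_2 = Z,\ Z_1 \neq \mathrm{id}} c_{Z_1,Z_2}\,(Z_1 H)(Z_2 v),
\end{equation*}
the sum running over ordered decompositions of the admissible operator $Z$. Because admissible vector fields pairwise commute modulo admissible operators of strictly smaller order (a standard fact used throughout \cite{PLF-YM-main,PLF-YM-two}), one may track only the pair $(\ord,\rank)$ rather than a specific ordering, up to absorbing lower-order terms into the same structural sums by induction on $p$. Hence $[Z,H]v = \sum_{Z_1 \neq \mathrm{id}}(Z_1 H)(Z_2 v)$, and one splits according to whether $Z_1$ carries a coordinate derivative. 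If $Z_1$ contains at least one $\del$, write $Z_1 = \del \cdot Z_1'$ with $\ord(Z_1') = p_1-1$ and $\rank(Z_1') = k_1$, so that $|Z_1 H| \lesssim |\del H|_{p_1-1,k_1}$, while the complement obeys $\ord(Z_2)=p-p_1$ and $\rank(Z_2)=k-k_1$. If instead $Z_1$ is a pure product of boosts/rotations of rank $k_1\geq 1$, write $Z_1=\LOmega\cdot Z_1'$ with $\ord(Z_1')=\rank(Z_1')=k_1-1$, giving $|Z_1 H|\lesssim |\LOmega H|_{k_1-1}$ together with $\ord(Z_2)=p-k_1$ and $\rank(Z_2)=k-k_1$. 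Summing the two cases with $v=u$ yields~\eqref{eq8-14-03-2021}.

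For the second estimate I would split
\begin{equation*}
[Z,\,H\,\del_\alpha\del_\beta]\,u \;=\; [Z,H]\,\del_\alpha\del_\beta u \;+\; H\,[Z,\del_\alpha\del_\beta]\,u.
\end{equation*}
The first piece is handled by applying \eqref{eq8-14-03-2021} with $v=\del_\alpha\del_\beta u$; replacing $|u|_{p_2,k_2}$ by $|\del\del u|_{p_2,k_2}$ produces the second and third sums in~\eqref{eq7-14-03-2021}. The second piece is controlled directly by~\eqref{eq1-31-01-2020}, which gives $|H\,[Z,\del_\alpha\del_\beta]u| \lesssim |H|\,|\del\del u|_{p-1,k-1}$; this contributes precisely the isolated first term in~\eqref{eq7-14-03-2021}.

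The content is essentially combinatorial rather than analytic, so there is no deep obstacle; the main point requiring care is the bookkeeping for the hierarchy structure. After using that admissible fields quasi-commute, one must verify that every error generated by reordering can indeed be reabsorbed into one of the two structural sums of \eqref{eq8-14-03-2021}, so that in every term of the final bound either (i) a genuine derivative $\del$ has fallen on $H$, or (ii) the number of boosts/rotations acting on $u$ has been strictly reduced compared to $\rank(Z)=k$. This rank-strict-decrease is what makes the estimate effectively usable in the later energy arguments and, in particular, is what justifies calling it a \emph{hierarchy} property.
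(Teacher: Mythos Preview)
The paper does not actually supply a proof of this proposition: it is stated in Appendix~\ref{appendix-B} as an imported result, with the surrounding text pointing to \cite[Part~1]{PLF-YM-main} for the arguments. Your approach---Leibniz expansion of the ordered operator $Z=\del^I L^J\Omega^K$, followed by the dichotomy ``$Z_1$ contains a $\del$'' versus ``$Z_1$ is a pure boost/rotation word'', and then the splitting $[Z,H\del_\alpha\del_\beta]=[Z,H]\del_\alpha\del_\beta+H[Z,\del_\alpha\del_\beta]$ together with Lemma~\ref{prop-newpropo}---is exactly the standard route and is correct.

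One small clarification: for an \emph{ordered} $Z=\del^I L^J\Omega^K$ the Leibniz expansion already produces ordered factors on both $H$ and $v$, so no quasi-commutation is needed at that stage. Where commutation genuinely enters is when you pull a single $\del$ (resp.\ $\LOmega$) off $Z_1$ and want the remainder, applied to $H$, to sit inside the norm $|\del H|_{p_1-1,k_1}$ (resp.\ $|\LOmega H|_{k_1-1}$); here the identities $[\del,L]\sim\del$, $[\del,\Omega]\sim\del$, $[L,\Omega]\sim L,\Omega$ ensure the reordering errors are strictly lower order and are absorbed in the same sums. Your final paragraph identifies this correctly.
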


\begin{proposition}[Hierarchy property for quasi-linear commutators. Hyperboloidal domain] 
\label{prop1-12-02-2020-interior} 
For any function $u$ defined in $\MH_{[s_0,s_1]}$ and for any admissible operator $Z$ with $\ord(Z) = p$ and $\rank(Z) = k$ one has 
\begin{subequations}
\bel{equa-hsgd57} 
\aligned
\, 
& |[Z,H^{\alpha\beta} \del_\alpha \del_\beta ]u|
\lesssim  T^\textbf{hier}  + T^\easy +  T^{\textbf{super}}, 
\endaligned
\ee
\be
\aligned
T^\textbf{hier} & 
|\Hu^{00} | \, | \del\del u|_{p-1,k-1} +\sum_{k_1+p_2=p\atop k_1+k_2=k} |L \Hu^{00} |_{k_1-1,k_1-1} |\del\del u|_{p_2,k_2}, 
\\
T^\easy
& :=
\sum_{p_1+p_2=p\atop k_1+k_2=k} |\del \Hu^{00} |_{p_1-1,k_1} |\del\del u|_{p_2,k_2}
+ t^{-1} |H| \, | \del u|_{p}, 
\\
T^{\textbf{super}} &
:= \sum_{p_1+p_2=p} |\delsH H|_{p_1-1} |\del u|_{p_2+1}
+ t^{-1} \!\!\!\!\sum_{p_1+p_2=p} \!\!\!\! |\del H|_{p_1-1} |\del u|_{p_2+1}.
\endaligned
\ee
\ese
\end{proposition}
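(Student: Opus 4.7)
\textbf{Proof proposal for Proposition~\ref{prop1-12-02-2020-interior}.}

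The plan is to mimic the Euclidean-merging decomposition of Lemma~\ref{lem1-14-03-2021}, but adapted to the semi-hyperboloidal frame so that the ``bad-bad'' direction on hyperboloids (namely $\delu_0 = \del_t$) is the one isolated. First I would rewrite $H^{\alpha\beta}\del_\alpha\del_\beta u$ in the mixed form
$$
H^{\alpha\beta}\del_\alpha\del_\beta u \;=\; \Hu^{00}\,\del_t\del_t u \;+\; 2\Hu^{0a}\del_t\delsH_a u \;+\; \Hu^{ab}\delsH_a\delsH_b u \;+\; \Ebb[H]\cdot \del u,
$$
where $\Ebb[H]\cdot \del u$ collects the zeroth-order corrections coming from the change of frame $\del_a = \delsH_a - (x^a/t)\del_t$; these corrections carry coefficients $(x^a/t)$ or $\del_\alpha(x^b/t)$, hence contribute $t^{-1}|H|\,|\del u|$ type terms (feeding $T^\easy$) and no double derivatives. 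The point of this rewriting is that every remaining second-order derivative other than $\Hu^{00}\del_t\del_t u$ contains at least one good derivative $\delsH_a$, so after differentiation by $Z$ the resulting terms can be absorbed in $T^\easy$ or $T^{\textbf{super}}$.

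Next I would apply Leibniz to $Z = \del^I L^J \Omega^K$ and separate the output $Z\bigl(H^{\alpha\beta}\del_\alpha\del_\beta u\bigr) - H^{\alpha\beta}\del_\alpha\del_\beta(Zu)$ as a sum over $Z_1 \odot Z_2 = Z$ with $|Z_1| \geq 1$ hitting $H^{\alpha\beta}$ (or the good-derivative factor) and $Z_2$ hitting the remaining second derivatives of $u$. The commutator identities of Lemma~\ref{prop-newpropo} are then used to push $Z_2$ past $\del_\alpha \del_\beta$ at the price of reducing both order and rank by one. The term where $Z_1$ is a pure translation $\del$ acting on $\Hu^{00}$ contributes to $T^\easy$, while the term where $Z_1$ contains at least one boost or rotation (collectively $\LOmega$) acting on $\Hu^{00}$ contributes to $T^{\textbf{hier}}$, because $\LOmega \Hu^{00}$ carries rank strictly less than $Z$ and precisely matches the expression $|L\Hu^{00}|_{k_1-1}$ appearing in $T^{\textbf{hier}}$ (where one may replace rotations by boosts up to lower-order terms via $\Omega_{ab} = (x^a/t)L_b - (x^b/t)L_a + (\text{bounded})\cdot \del$). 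The purely algebraic piece $\Hu^{00}|\del\del u|_{p-1,k-1}$ in $T^{\textbf{hier}}$ comes from the trivial decomposition $Z_1 = \text{id}$, $Z_2 = Z$, after commuting $Z$ past the two $\del_t$'s using Lemma~\ref{prop-newpropo}.

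The contribution $T^{\textbf{super}}$ originates from the mixed and ``good-good'' pieces $\Hu^{0a}\del_t\delsH_a u$ and $\Hu^{ab}\delsH_a\delsH_b u$: when $Z$ is distributed, either one $\delsH$ remains on $u$, yielding a $|\del u|_{p_2+1}$ factor paired with $|\delsH H|_{p_1-1}$, or the $(x^a/t)$ coefficients introduced by converting $\delsH_a$ back to frame derivatives produce the $t^{-1}|\del H|_{p_1-1}|\del u|_{p_2+1}$ factor; both are exactly the two summands defining $T^{\textbf{super}}$. The only genuine obstacle is the bookkeeping in step two: one must verify that every Leibniz configuration $Z = Z_1 \odot Z_2$ with $|Z_1| \geq 1$ hitting $\Hu^{00}$ falls into one of the three buckets without any leftover ``bad'' contribution that isn't yet controlled by induction on rank. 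This is precisely where the hierarchy structure is visible, since the worst residue is forced to have rank at most $k-1$. Once the classification is done, collecting estimates yields \eqref{equa-hsgd57} and no quantitative integration is needed — only the frame identities $\delu_0 = \del_t$, $\delu_a = \delsH_a$, the commutator bounds \eqref{eq2-31-01-2020}--\eqref{eq1-31-01-2020}, and the trivial estimates $|\del(x^a/t)| \lesssim t^{-1}$.
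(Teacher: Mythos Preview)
The paper does not prove this proposition in the text; it is stated in Appendix~\ref{appendix-B} as imported from the companion work (cf.\ the remark preceding Lemma~\ref{prop-newpropo} and the reference to \cite[Proposition~19.3]{PLF-YM-main} in Section~\ref{section-hyperb}). Your outline reproduces the standard argument used there: pass to the semi-hyperboloidal frame so that the only genuinely bad second-order piece is $\Hu^{00}\del_t\del_t u$, distribute $Z$ by Leibniz, and sort the output by whether the factor hitting $\Hu^{00}$ carries at least one translation (feeding $T^\easy$ via $|\del\Hu^{00}|_{p_1-1,k_1}$), is a pure boost/rotation string (feeding $T^{\textbf{hier}}$ via $|L\Hu^{00}|_{k_1-1}$), or is trivial (feeding the first term of $T^{\textbf{hier}}$ through $[Z,\del_t\del_t]$ and \eqref{eq1-31-01-2020}). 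The remaining pieces $\Hu^{0a}\del_t\delsH_a u$, $\Hu^{ab}\delsH_a\delsH_b u$ and the frame-change corrections are handled exactly as you say, using $L_a = t\,\delsH_a$ to trade a boost on $H$ for $t\,|\delsH H|$, which after the ambient $t^{-1}$ lands in $T^{\textbf{super}}$.

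Two small corrections that do not affect the argument. First, the identity $\Omega_{ab} = (x^a/t)L_b - (x^b/t)L_a$ is exact in the hyperboloidal domain, with no residual $\del$-term; the lower-order contributions you allude to only appear when further admissible fields act on the coefficients $x^a/t$, producing $t^{-1}$ factors that are then absorbed in $T^{\textbf{super}}$. Second, when $Z_1$ is mixed (contains both translations and boosts/rotations), you should route it through $T^\easy$ rather than $T^{\textbf{hier}}$: since ordered operators place $\del^{I_1}$ on the left, one peels off a single $\del$ and commutes it inward via \eqref{eq2-31-01-2020}, landing in $|\del\Hu^{00}|_{p_1-1,k_1}$; the indices $(k_1-1,k_1-1)$ in $T^{\textbf{hier}}$ are reserved for the pure-rank case $|I_1|=0$. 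With these two adjustments your classification is complete and the proof goes through.
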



\paragraph{Sobolev inequalities.} 

The following statements are taken from \cite[Part 1]{PLF-YM-main}. 

\begin{proposition}[Sup-norm Sobolev inequality. Hyperboloidal domain]
\label{prop:glol-Soin}
For any function defined on a hypersurface $\MH_s$, the following estimate holds (in which $t^2 = s^2+ |x|^2$): 
$$
\sup_{\MH_s} t^{3/2} \, |u(t,x)|
\lesssim 
\sum_{|J| \leq 2} \| L^J u\|_{L^2(\MH_s)}
\simeq 
\sum_{m=0,1,2} \| t^m (\slashed \del^\H)^m  u\|_{L^2(\MH_s)}.   
$$
\end{proposition}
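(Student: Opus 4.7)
}

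The plan is the classical Klainerman argument adapted to hyperboloids, proceeding in three stages: (i)~establish the equivalence of the two right-hand norms, (ii)~exploit Lorentz invariance of $\MH_s$ to reduce a pointwise bound at an arbitrary point $(t_0,x_0)$ to a pointwise bound at the apex $(s,0)$, and (iii)~on a unit ball around the apex apply the standard Euclidean Sobolev embedding $H^2(B_1) \hookrightarrow L^\infty(B_{1/2})$.

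For step (i), recall that on the hyperboloid parametrized by $x\in \RR^3$ with $t=(s^2+|x|^2)^{1/2}$, the boost $L_a = t\del_a + x^a\del_t$ coincides (as a differential operator acting on functions on $\MH_s$) with $t\,\delsH_a$, where $\delsH_a$ is tangent to the hyperboloid. Iterating and commuting $L_a$ with itself produces at most spatial rotations plus additional $t\delsH$ factors, so by straightforward induction one obtains $\sum_{|J|\leq 2}\|L^J u\|_{L^2(\MH_s)} \simeq \sum_{m=0}^{2}\|t^m (\delsH)^m u\|_{L^2(\MH_s)}$; the implicit volume form $dx$ used on $\MH_s$ is compatible with both sides since $L_a$ preserves it up to lower-order terms that are already accounted for.

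For step (ii), fix $(t_0,x_0)\in \MH_s$ and let $\Lambda$ be the Lorentz boost mapping $(s,0)$ to $(t_0,x_0)$. Since $\Lambda$ preserves the Minkowski metric it permutes points of $\MH_s$ and preserves the induced volume form, hence $\|L^J(u\circ\Lambda)\|_{L^2(\MH_s)} = \|L^J u\|_{L^2(\MH_s)}$ for all admissible orders, because the $L_a$'s generate the Lorentz group and transform covariantly under $\Lambda$. Consequently it suffices to prove the pointwise bound at the apex, namely $s^{3/2}|u(s,0)| \lesssim \sum_{|J|\leq 2}\|L^J u\|_{L^2(\MH_s)}$, and then rescale by $t_0/s$ to recover the weight $t_0^{3/2}$ from the factors of $t$ produced by the boost acting on the $L^J$ operators.

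For step (iii), on the Euclidean ball $B_1 \subset \{|x|\leq 1\}\subset \MH_s$ centered at the apex, we have $t\simeq s$ and $\delsH_a \simeq \del_a$ in the $x$-chart, so $L_a \simeq s\del_a$. The standard $H^2$-Sobolev inequality applied to $u$ on $B_1$ gives $|u(s,0)|\lesssim \sum_{m\leq 2}\|\del_x^m u\|_{L^2(B_1)}$, and rewriting the right-hand side with powers of $s$ yields $s^{3/2}|u(s,0)| \lesssim \sum_{m\leq 2}\|s^m\del_x^m u\|_{L^2(B_1)} \lesssim \sum_{|J|\leq 2}\|L^J u\|_{L^2(\MH_s)}$, where the extra $s^{3/2}$ comes from the non-normalized Lebesgue measure on the unit ball. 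The main obstacle is the careful bookkeeping of powers of $t_0/s$ in step~(ii) — in particular, verifying that the Lorentz boost introduces exactly the factor $t_0^{3/2}$ needed to promote the apex estimate to the uniform bound at $(t_0,x_0)$, which requires tracking both the action of $\Lambda$ on the volume form and its interaction with the rescaling of the $L_a$ vector fields under conjugation by $\Lambda$.
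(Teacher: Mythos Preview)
The paper does not prove this proposition here but cites \cite[Part~1]{PLF-YM-main}; the argument there (originating with Klainerman and H\"ormander) is a \emph{local dilation}, not a Lorentz boost. At a point $(t_0,x_0)\in\MH_s$ one works in the $x$-chart on the ball $\{|x-x_0|<t_0/3\}$, verifies that $t\simeq t_0$ there, rescales via $y=(x-x_0)/t_0$, applies the Euclidean embedding $H^2\hookrightarrow L^\infty$ on $|y|<1/3$, and the Jacobian $dy=t_0^{-3}\,dx$ is exactly what produces the weight $t_0^{3/2}$ after taking square roots. No Lorentz transformation enters.

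Your step~(iii) contains a concrete error. From the unit-ball Sobolev inequality $|u(s,0)|\lesssim\sum_{m\le 2}\|\del_x^m u\|_{L^2(B_1)}$ you cannot extract the factor $s^{3/2}$: the unit ball has unit volume, and multiplying both sides by $s^{3/2}$ yields $s^{3/2}|u(s,0)|\lesssim s^{3/2}\sum_m\|\del_x^m u\|_{L^2(B_1)}$, not $\sum_m \|s^m\del_x^m u\|_{L^2(B_1)}$ --- note that the $m=0$ term of the latter is simply $\|u\|_{L^2(B_1)}$ with no $s$ at all. The decay weight must come from the $t_0^{-3}$ Jacobian of a dilation by $t_0$; a ball of radius $O(1)$ cannot supply it.

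Step~(ii) is likewise flawed, as you partly acknowledge. The measure $dx$ used in $\|L^J u\|_{L^2(\MH_s)}$ is \emph{not} Lorentz-invariant (the invariant measure on the hyperboloid is $dx/t$), so the asserted equality $\|L^J(u\circ\Lambda)\|_{L^2(\MH_s,dx)}=\|L^J u\|_{L^2(\MH_s,dx)}$ fails; the boost introduces a Jacobian $t'/t$. Moreover a Lorentz transformation is an isometry --- there is no additional ``rescaling by $t_0/s$'' built into it that would manufacture the missing weight. One can repair a boost-based argument by tracking this Jacobian and the geometry of the image of a unit ball under $\Lambda$, but once that is done one has essentially reproduced the direct dilation proof above, which is shorter and sidesteps the bookkeeping you flag as the main obstacle.
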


\begin{proposition}[Sobolev decay for wave fields in the Euclidean-merging domain] 
\label{Sobolev-ext}
For all $\eta \in [0,1)$ and all functions $u$, one has (with $k \leq p$)
\begin{subequations}
\begin{equation} \label{eq 1 lem 2 d-e-I}
\big\| r  \, \crochet^\eta \, |\del u|_{p,k} \big\|_{L^\infty(\MME_s)}  
+ \big\|  r^{1+ \eta} \, | \delsN  u |_{p,k} \big\|_{L^\infty(\MME_s)} 
\lesssim (1-\eta)^{-1} \, \Fenergy_\eta^{\ME,p+3, k+3}(s,u)
\end{equation} 
and, for $1/2 < \eta = 1/2+\delta < 1$,
\begin{equation} \label{eq 1 lem 2 d-e-I-facile}
\| r \, \crochet^{-1+\eta} |u|_{N-2}\|_{L^\infty(\MME_s)} 
\lesssim \delta^{-1} \, \Fenergy_\eta^{\ME,N}(s,u) + \Fenergy_{\eta}^{0}(s,u). 
\end{equation}
\end{subequations}
\end{proposition}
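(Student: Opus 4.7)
}

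The strategy is a weighted Klainerman-type Sobolev embedding in the Euclidean-merging region, built from a dyadic decomposition together with the standard three-dimensional Sobolev inequality on unit balls, augmented by the commutation properties of the admissible vector fields $\del_\alpha, L_a, \Omega_{ab}$. The first step is to fix a point $(t,x_0) \in \MME_s$ with $r_0 = |x_0|$ and to work on a localization $A_{r_0} \subset \Mcal_s$ corresponding to the anisotropic neighborhood where $r_0/2 \leq |x| \leq 2 r_0$ and the angular distance from $x_0$ is bounded by a fixed small constant. Here, $\zeta \simeq 1$ and $\crochet(\cdot) \simeq \crochet(r_0)$, so these weights can be frozen at $r_0$ up to a harmless constant.

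After an anisotropic rescaling $y = x/r_0$, the ordinary Sobolev embedding $H^2 \hookrightarrow L^\infty$ in $\RR^3$ yields
\begin{equation*}
|\del u(t, x_0)|^2 \lesssim r_0^{-3} \sum_{|I|\leq 2} \bigl\|(r_0 \del)^I \del u\bigr\|_{L^2(A_{r_0})}^2.
\end{equation*}
Inside $A_{r_0}$, each factor $r_0 \del$ can be replaced, modulo lower-order commutators, by admissible fields: $r_0 \del_t \sim L_a$ or $r_0 \del_a \sim \Omega_{ab}$ in the angular directions, while the radial derivative $r_0 \delsEH_r$ is itself an element of the rotated frame. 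After summing over a dyadic cover and absorbing the weight $\crochet^\eta$ (which is constant on each piece), one obtains
\begin{equation*}
r_0 \crochet^\eta |\del u|_{p,k}(t,x_0) \;\lesssim\; (1-\eta)^{-1/2}\, \Fenergy_\eta^{\ME,p+3,k+3}(s,u),
\end{equation*}
where the factor $(1-\eta)^{-1/2}$ is introduced only in the transition region $r \lesssim \crochet$, where passing from a squared norm to a pointwise bound via a harmonic-type sum $\sum_{j} 2^{-2 j\eta}$ requires the geometric series to converge. The improvement for $\delsN u$, yielding the extra power $r^{1/2}$, comes from the identity $r \delsN_a u = \Omega_{ab} u + $ bounded terms, which trades one spatial derivative for an angular admissible field at no cost in the energy.

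For the second estimate, I would recover $u$ itself by integrating inward from spatial infinity along each radial ray:
\begin{equation*}
|u(t,x)| \leq \int_r^\infty |\delsN_r u(t,\rho \, \omega)|\,d\rho,
\qquad \omega = x/|x|.
\end{equation*}
Applying the pointwise bound \eqref{eq 1 lem 2 d-e-I} with $p = N-2$, $k= N-2$, to $\delsN u$ gives $|\delsN u|_{N-2} \lesssim r^{-1-\eta} \Fenergy_\eta^{\ME,N+1}(s,u)$, whose radial integration produces $\int_r^\infty \rho^{-1-\eta}\,d\rho \simeq \eta^{-1} r^{-\eta}$. However, the correct form of the estimate with weight $\crochet^{-1+\eta}$ (which captures the $\delta^{-1}$ blow-up as $\eta \searrow 1/2$) is obtained by instead combining the bound on $|\del u|_{N-1}$ with the Cauchy-Schwarz inequality inside the radial integral, weighted by $\crochet^{-2\eta}$; the resulting radial $L^1_\rho$ norm $\int_r^\infty \crochet(\rho)^{-2\eta}\,d\rho \simeq (2\eta-1)^{-1} \crochet(r)^{-2\eta+1} = \delta^{-1} \crochet(r)^{-2\delta}$ gives precisely the claimed $\delta^{-1}$ factor. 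The zero-order energy term $\Fenergy_\eta^0(s,u)$ accommodates the boundary contribution at large but finite radius needed to close the integration.

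The main obstacle is not conceptual but bookkeeping: the weight $\crochet$ behaves very differently for $r \ll t$ (where $\crochet \simeq 1$), for $r \sim t$ (the transition region, where $\crochet \sim r-t+1$), and for $r \gg t$ (where $\crochet \simeq r$), so the dyadic sum must be split accordingly and the factor $(1-\eta)^{-1}$ (resp.\ $\delta^{-1}$) must be tracked only in the regimes where it actually arises from a borderline geometric series. Carrying out the decomposition cleanly and checking that the commutator terms $[(r\del)^I, \del]$ and $[\del^I\del, r\del]$ stay within the admissible-field energy is the delicate part, and follows the analogous argument in~\cite[Part 1]{PLF-YM-main}, to which we refer.
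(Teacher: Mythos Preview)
The paper does not actually prove this proposition in the text: it states ``The following statements are taken from \cite[Part 1]{PLF-YM-main}'' and records the result without argument. Your sketch follows the same weighted Klainerman--Sobolev philosophy as that reference and ends by citing it, so at the level of strategy you are aligned with what the paper invokes.

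That said, two points in your sketch are imprecise and would need correction. First, the claimed identity ``$r\,\delsN_a u = \Omega_{ab} u + \text{bounded terms}$'' is wrong: $\delsN_a = (x^a/r)\del_t + \del_a$ is not a rotation, and the gain for $\delsN u$ is not a fixed $r^{1/2}$ but rather the replacement of $\crochet^\eta$ by $r^\eta$, reflecting that $\crochet^\eta\,\delsN u$ (unlike $\crochet^\eta\,\del u$) sits in the energy without the degenerate factor $\zeta$; near the cone this yields an improvement of $r^\eta/\crochet^\eta$, not $r^{1/2}$. Second, your derivation of \eqref{eq 1 lem 2 d-e-I-facile} by radial integration from spatial infinity presupposes decay of $u$ at infinity that is not part of the hypotheses, and your explanation of the $\Fenergy_\eta^0$ term as a ``boundary contribution at large but finite radius'' is inconsistent with integrating from $\infty$. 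The cleaner route---and the one the paper uses implicitly, see \eqref{eq1-12-05-2020}---is to first establish the weighted $L^2$ bound
\[
\|\crochet^{-1+\eta}|u|_{p,k}\|_{L^2(\MME_s)} \lesssim \delta^{-1}\Fenergy_\eta^{\ME,p,k}(s,u) + \Fenergy_\eta^{0}(s,u)
\]
via the Hardy--Poincar\'e inequality of Proposition~\ref{propo-Poincare-ext}, and then apply the same Klainerman--Sobolev step as in the first part to pass from $L^2$ to $L^\infty$. This produces both the $\delta^{-1}$ factor and the zero-order energy term transparently.
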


\begin{proposition}[Sobolev decay for Klein-Gordon fields in the Euclidean-merging domain] 
\label{lem 2 d-e-II}
Fix some $\eta \in [0,1)$. 
For any function $v$ one has  (with $k \leq p$) 
\begin{equation} \label{eq decay-v-repeat000-two}
c \, \| r \, \crochet^{\eta} |v|_{p,k} \|_{L^\infty (\MME_s)}
\lesssim
\Fenergy^{\ME,p+2,k+2}_{\eta,c}(s,v).
\end{equation}
\end{proposition}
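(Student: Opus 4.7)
The essential new feature compared to the wave-field Sobolev estimate in Proposition~\ref{Sobolev-ext} is that the Klein-Gordon energy $\Fenergy^{\ME,p,k}_{\eta,c}$ includes the mass term $c^2|v|^2\crochet^{2\eta}$, so it directly controls the weighted $L^2$ norm $\|c\crochet^\eta Zv\|_{L^2(\MME_s)}$ for all admissible $Z$ with $\ord(Z)\le p$, $\rank(Z)\le k$. This additional information removes the need for any ray-integration argument and reduces the proof to a localized 3D Sobolev embedding combined with a conversion of translation derivatives into rotations and boosts. Consequently the derivative loss is only $2$ (one for each order of Sobolev embedding on $\RR^3$), giving $p+2$ and $k+2$ on the right-hand side, as opposed to the $p+3$, $k+3$ needed in the wave-field version where one must first reconstruct $L^2$ control of $v$ itself by radial integration from infinity.

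\textbf{Main steps.} First, fix an admissible operator $Z$ with $\ord(Z)\le p$, $\rank(Z)\le k$, and set $w:=Zv$. Pick $x_0\in\MME_s$ with $r_0:=|x_0|$, and consider the Euclidean ball $B:=B_{r_0/8}(x_0)$, which lies inside $\MME_s$ since $r_0\ge\rhoH(s)$. On $B$ the weight $\crochet=1+\aleph(r-t)$ is comparable to its value at $x_0$ (monotonicity of $\aleph$ and boundedness of $\aleph'$ make $\crochet$ fluctuate by a universal factor on $B$), and $r$ is comparable to $r_0$. The scaled 3D Sobolev embedding on $B$ gives
\begin{equation*}
\|w\|_{L^\infty(B)}\lesssim r_0^{-3/2}\|w\|_{L^2(B)}+r_0^{-1/2}\|\del w\|_{L^2(B)}+r_0^{1/2}\|\del^2 w\|_{L^2(B)}.
\end{equation*}
Multiplying by $c\,r_0\crochet^\eta$ and absorbing $r_0\crochet^\eta$ into the $L^2$ norms yields a bound on $cr_0\crochet^\eta|w(x_0)|$ by
\begin{equation*}
r_0^{-1/2}\|c\crochet^\eta w\|_{L^2(B)}+r_0^{1/2}\|c\crochet^\eta\del w\|_{L^2(B)}+r_0^{3/2}\|c\crochet^\eta\del^2 w\|_{L^2(B)}.
\end{equation*}
The first term is already controlled by the mass part of the Klein-Gordon energy restricted to $B$. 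For the two derivative terms, we use the identity $r\del_a=\omega^a(r\del_r)+\omega^b\Omega_{ba}$ valid for $r>0$, which expresses each spatial translation derivative as $r^{-1}$ times a rotation plus a radial component; one can further exchange the radial component for the admissible collection by $r\del_r=S-t\del_t$ (with $S=x^\alpha\del_\alpha$ a dilation, or directly via $L_a$ on $\MME_s$ where $t\lesssim r$). Hence each application of $r\del$ to $w$ can be rewritten (up to lower-order commutator terms) as an admissible operator of order $1$ and rank $1$ applied to $w$. Iterating for the second derivative, $r_0^{j/2}\|c\crochet^\eta\del^j w\|_{L^2(B)}$ is bounded by $\sum\|c\crochet^\eta\Gamma w\|_{L^2(B)}$ with $\ord(\Gamma)\le j$, $\rank(\Gamma)\le j$, for $j=1,2$. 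Writing $\Gamma w=\Gamma Z v$ and noting $\ord(\Gamma Z)\le p+2$, $\rank(\Gamma Z)\le k+2$, summing over a locally finite cover of $\MME_s$ by such balls (with bounded overlap), and taking the supremum over $x_0$ and $Z$ delivers the claimed inequality.

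\textbf{Main obstacle.} The delicate step is the conversion of the two translation derivatives into admissible operators while respecting the weight $\crochet^\eta$. The identity $r\del=\omega\,r\del_r+\omega\Omega$ is exact, but when it is iterated to two derivatives one generates commutator terms involving $\del(r^{-1})$ and $\del\omega^a$, which behave like $r^{-1}$; the factor $r_0^{j/2}$ absorbs these, but only because $r\simeq r_0$ uniformly on $B$ (which in turn required $r_0\ge\rhoH(s)>0$). A related subtlety is that $\crochet^\eta$ is not literally constant on $B$, so one must invoke the Lipschitz-type bound $|\crochet(y)-\crochet(x_0)|\lesssim|y-x_0|\le r_0/8$ together with $\crochet\gtrsim 1$ to conclude $\crochet(y)^\eta\simeq\crochet(x_0)^\eta$ on $B$. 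Once these geometric points are settled, the bound follows from elementary Sobolev embedding and does not require the factor $(1-\eta)^{-1}$ that is needed in the wave-field case, since no ray-integration is performed.
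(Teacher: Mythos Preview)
The paper does not include its own proof of this proposition (it is quoted from \cite{PLF-YM-main}), so no direct comparison of methods is possible; however, your argument has a genuine gap that must be repaired.

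The failure lies in the claim that $\crochet$ is comparable to $\crochet(x_0)$ on the ball $B_{r_0/8}(x_0)$. Your justification, ``$|\crochet(y)-\crochet(x_0)|\lesssim|y-x_0|\le r_0/8$ together with $\crochet\gtrsim 1$,'' yields only an \emph{additive} bound, not a multiplicative one. When $x_0$ sits at an intermediate distance from the cone, say $\crochet(x_0)\simeq r_0/100$, the ball $B_{r_0/8}(x_0)$ reaches all the way to $\{r=t-1\}$ where $\crochet\simeq 2$, so that $\crochet(x_0)/\crochet(y)\simeq r_0/200$ is unbounded. This destroys the step in which you absorb $\crochet(x_0)^\eta$ into the $L^2$ norms over $B$.

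A second, related problem is the conversion of the radial derivative. You propose $r\del_r=S-t\del_t$, but $S$ is a \emph{conformal} operator in this paper's terminology, not an admissible one (the family defining $|\cdot|_{p,k}$ is $\del^I L^J\Omega^K$ only). The alternative ``via $L_a$ on $\MME_s$ where $t\lesssim r$'' runs in the wrong direction: expressing $r\del_r$ through boosts introduces coefficients of size $r/t$, and on $\Mext_s$ the time $t=T^\E(s)$ is fixed while $r$ is unbounded, so $r/t\to\infty$. Thus $r\del$ is \emph{not} a bounded combination of admissible rank-one operators in the far region, and the claimed reduction of $r_0^{j/2}\|\del^j w\|_{L^2(B)}$ to admissible norms fails. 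The remedy used in the cited reference is to treat the angular and radial directions separately: rotations $\Omega_{ab}$ annihilate $\crochet$ exactly (it is a radial function on the slice), so the spherical Sobolev step commutes cleanly with the weight; the remaining one-dimensional radial estimate is then carried out by integrating $\del_r\bigl(r^2\crochet^{2\eta}\int_{S^2}|\cdot|^2\,d\omega\bigr)$ rather than by localizing on an isotropic Euclidean ball.
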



\paragraph{Linear estimates on Klein-Gordon fields}

\begin{proposition}[Pointwise decay of Klein-Gordon fields]
\label{lem 1 d-KG-e}
Given any exponent $\expeta \in (0,1)$, any  solution $v$ to $ 3\coeffi \Box \phi -  \Ufrak'(0) \phi = f$ defined in $\MME_{[s_0,s_1]}$ satisfies  
$$
\Ufrak'(0)|\phi|_{p,k} \lesssim 
\begin{cases}
r^{-2} \crochet^{1-\mu} \FCenergy_{\mu,\coeffi}^{\ME,p+4,k+4}(s,\phi) + |f|_{p,k}
&  \text{ in }\Mnear_{[s_0,s_1]},
\\
r^{-1-\mu}  \FCenergy_{\mu,\coeffi}^{\ME,p+2,k+2}(s,\phi)\quad 
& \text{ in }\Mfar_{[s_0,s_1]}.
\end{cases}
$$
In the same manner, any solution $\psi$ to $\Box \psi = V'(0)\psi = f$ defined in $M^{\EM}_{[s_0,s_1]}$ satisfies
$$
V'(0)|\psi|_{p,k}\lesssim 
\begin{cases}
r^{-2} \crochet^{1-\mu} \FSenergy_{\mu}^{\ME,p+4,k+4}(s,\phi) + |f|_{p,k}
&  \text{ in }\Mnear_{[s_0,s_1]},
\\
r^{-1-\mu}  \FSenergy_{\mu}^{\ME,p+2,k+2}(s,\phi)\quad 
& \text{ in }\Mfar_{[s_0,s_1]}.
\end{cases}
$$
\end{proposition}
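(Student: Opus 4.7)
The proof should split naturally along the decomposition $\MME_{[s_0,s_1]} = \Mnear_{[s_0,s_1]} \cup \Mfar_{[s_0,s_1]}$ used throughout the paper, corresponding to the near-light-cone region (where the weight $\crochet$ stays bounded while $r$ is large) and the far exterior region (where $\crochet \sim r$). I would treat the two regions with essentially different arguments: direct Sobolev in the far region, and use of the Klein-Gordon equation itself in the near region.

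\textbf{Far region.} Here I would invoke Proposition~\ref{lem 2 d-e-II} applied to $\phi$ with the mass parameter $c = \sqrt{\Ufrak'(0)/(3\coeffi)}$ naturally associated with the linearised Klein-Gordon operator $3\coeffi\Box_g - \Ufrak'(0)$. The energy density underlying $\FCenergy_{\mu,\coeffi}^{\ME}$ is, up to the multiplicative factor $3\coeffi^2$, precisely the standard Klein-Gordon density with this mass, so that inequality yields $c\|r\crochet^\mu|\phi|_{p,k}\|_{L^\infty(\MME_s)} \lesssim (\sqrt{3}\coeffi)^{-1}\FCenergy_{\mu,\coeffi}^{\ME,p+2,k+2}(s,\phi)$. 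Multiplying by $\Ufrak'(0)$ and using $\crochet \sim r$ in $\Mfar_{[s_0,s_1]}$ produces the claimed $r^{-1-\mu}$ bound, uniformly in $\coeffi$.

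\textbf{Near region.} In $\Mnear_{[s_0,s_1]}$ the naive application of the previous step only gives $r^{-1}\crochet^{-\mu}$, which is strictly weaker than the claimed $r^{-2}\crochet^{1-\mu}$ since $\crochet \leq r$ here. I would therefore bypass the Klein-Gordon Sobolev inequality and instead use the equation itself: applying an admissible operator $Z$ of order $p$ and rank $k$ and invoking
\begin{equation*}
\Ufrak'(0) \, Z\phi = 3\coeffi \, \Box_g Z\phi + 3\coeffi \, [Z,\Box_g]\phi - Zf,
\end{equation*}
the commutator term is controlled by Proposition~\ref{prop1-12-02-2020-interior} as a lower-order contribution, and the Sobolev inequality for wave fields (Proposition~\ref{Sobolev-ext}) applied to the auxiliary field $\coeffi\del\phi$ gives the principal bound. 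This is legitimate because $\FCenergy_{\mu,\coeffi}^{\ME}$ majorises the wave-type energy $\FMenergy_\mu^{\ME}$ of $\coeffi\del\phi$ at one higher order of differentiation, accounting for the jump from $p+2,k+2$ to $p+4,k+4$ and for the source term $|f|_{p,k}$ appearing in the claim.

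\textbf{Sharpening and main obstacle.} The hardest step---and the real heart of the proof---is to gain the extra factor $\crochet/r$ that separates the sharp bound $r^{-2}\crochet^{1-\mu}$ from the naive one. This sharpening must come from the null-frame decomposition of $\Box_g$ near the light cone, schematically
\begin{equation*}
\Box_g\phi = -\lbf \, \delsN_r\phi + r^{-1} \, (\text{good terms}) + r^{-2} \Delta_{\mathbb{S}^2}\phi + O(|h|\,|\del\del\phi|),
\end{equation*}
combined with the enhanced Sobolev bound $\|r^{1+\mu}|\delsN u|_{p,k}\|_{L^\infty(\MME_s)} \lesssim \FMenergy_\mu^{\ME,p+3,k+3}(s,u)$ (the second half of Proposition~\ref{Sobolev-ext}), which supplies one extra power of $r$ for good derivatives of $\coeffi\phi$ relative to arbitrary ones. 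Carefully apportioning this gain between the terms of the null decomposition and tracking the $\coeffi$-weights yields the claimed weight $r^{-2}\crochet^{1-\mu}$, uniformly in $\coeffi$. The corresponding bound for $\psi$ then follows verbatim by setting the mass coefficient to $V'(0)$ and specialising the $\coeffi$-dependent constants, since the two Klein-Gordon equations share the same structural form.
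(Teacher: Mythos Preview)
Your far-region argument is correct and matches the paper: a direct application of Proposition~\ref{lem 2 d-e-II} (with $\crochet\sim r$ in $\Mfar$) gives the $r^{-1-\mu}$ bound.

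In the near region, however, your proposed route does not close. The null-frame decomposition you write produces the principal term $\lbf\,\delsN_r\phi$, which is a \emph{bad} derivative acting on a good one. The good-derivative Sobolev inequality you invoke controls $\coeffi|\delsN\phi|_{p,k}$ by $r^{-1-\mu}\FCenergy$, but applying the transversal $\lbf$ to this only costs an order (not a weight), so at best you obtain $\coeffi|\lbf\delsN_r\phi|_{p,k}\lesssim r^{-1-\mu}\FCenergy^{p+4}$. Near the light cone, where $\crochet$ is bounded, the target is $r^{-2}\crochet^{1-\mu}\sim r^{-2}$, and since $\mu<1$ the bound $r^{-1-\mu}$ is strictly weaker. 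The ``apportioning'' you allude to cannot manufacture the missing factor $\crochet/r$ out of the null decomposition, because that decomposition carries no small coefficient in front of the mixed second derivative.

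The paper instead uses a different identity, the \emph{semi-hyperboloidal} decomposition of the flat d'Alembertian:
\[
\Box\phi = (r^2/t^2-1)\,\del_t\del_t\phi - t^{-1}\Big((2x^a/t)\del_tL_a\phi - \textstyle\sum_a\delsH_aL_a\phi - (x^a/t)\delsH_a\phi + (3+(r/t)^2)\del_t\phi\Big).
\]
The crucial structural point is that the only second-order term, $\del_t\del_t\phi$, carries the coefficient $r^2/t^2-1$, and in $\Mnear$ one has $|r^2/t^2-1|\simeq|r-t|/t\simeq\crochet/r$. Combined with the wave-type Sobolev bound $\coeffi|\del\phi|_{p+1,k}\lesssim r^{-1}\crochet^{-\mu}\FCenergy^{p+4}$, this term contributes exactly $r^{-2}\crochet^{1-\mu}\FCenergy$. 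The remaining terms carry an explicit $t^{-1}$ and involve at most one boost, giving $r^{-2}\crochet^{-\mu}\FCenergy^{p+4,k+4}$, which is dominated by the first. This is where the factor $\crochet/r$ genuinely comes from; it is a feature of the semi-hyperboloidal frame, not of the null frame. (Minor: you cite Proposition~\ref{prop1-12-02-2020-interior}, which is the hyperboloidal-domain commutator; the relevant commutator here is with the \emph{flat} $\Box$ and is handled by Lemma~\ref{prop-newpropo}.)
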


\begin{proof} 
In $\MMEnear_{[s_0,s_1]}$ we consider a solution $\phi$ to  
$3\coeffi \Box \phi - \Ufrak'(0) \, \phi = f$ and start from the decomposition 
\begin{equation} \label{eq 1 d-KG-e}
\aligned
\Ufrak'(0) \phi 
& =  3\coeffi \big(r^2/t^2-1\big)\del_t \, \del_t \phi
\\
& \quad
- 3\coeffi t^{-1} \Big((2x^a/t)\del_tL_a - \sum_a\delsH_aL_a - (x^a/t)\delsH_a + \big(3+(r/t)^2 \big)\del_t\Big)\phi
+  f. 
\endaligned
\end{equation}
We write $\Ufrak'(0) \, |\phi| \lesssim \coeffi t^{-1} |r-t| \, | \del\del \phi| + \coeffi t^{-1} |\del \phi|_{1,1} + |f|$, and using this observation with $\phi$ replaced by $Z\phi$  
and recalling the ordering properties in \cite[Proposition~5.2]{PLF-YM-main}, we arrive at  
$$ 
\Ufrak'(0) \, |\phi|_{p,k} \lesssim \coeffi t^{-1} |r-t| \, | \del \phi|_{p+1,k} + \coeffi t^{-1} |\del \phi|_{p+1,k+1} + |f|_{p,k}. 
$$ 
together with the consequence \eqref{eq decay-v-repeat000-two} and 
substituting these bounds in the above inequality, we obtain 
$$
\Ufrak'(0) \, |\phi|_{p,k} \lesssim 
t^{-2} \crochet^{1-\mu} \, \FCenergy_{\mu,\coeffi}^{\ME,p+4,k+3}(s,\phi) + t^{-2} \crochet^{-\mu} \, \FCenergy_{\mu,\coeffi}^{\ME,p+4,k+4}(s,v) + |f|_{p,k}.
$$ 
This concludes the bound in $\Mnear_{[s_0,s_1]}$.
Finally, we again recall \eqref{eq decay-v-repeat000-two} which, in the far region $\MMEfar_s$, gives us the desired estimate.  
\end{proof}

 
\paragraph{Poincar\'e inequalities.} In our analysis the following functional inequalities are also useful, in which we recall that $\zeta$ was introduced in \eqref{equa-zeta}. 
 
\begin{proposition}[Poincar\'e-type inequalities in the Euclidean-merging domain] 
\label{propo-Poincare-ext}
\begin{subequations}
Fix an exponent {$\expeta =1/2 +\delta$} with $\delta > 0$. 
For any function $u$ defined in $\MME_s = \{(t,x)\in\Mcal_s\, / \, |x|\geq \rhoH(s)\}$, one has  
\be \label{Poincare-trex--zeta}
\aligned
\|  \crochet^{-1 + \expeta}  u\|_{L^2(\MME_s)}
& \lesssim
\big( 1+ \delta^{-1} \big) \|  \crochet^{\expeta} \delsME u\|_{L^2(\MME_s)}
+ 
\|  r^{-1} \crochet^{\expeta} u\|_{L^2(\MME_s)}, 
\\
\|\crochet^{-1 + \expeta}\zeta u\|_{L^2(\MME_s)}
& \lesssim
(1+\delta^{-1})\|\crochet^{\expeta}\zeta \delsME u\|_{L^2(\MME_s)}
+ 
\| r^{-1}\crochet^{\expeta}\zeta u\|_{L^2(\MME_s)}. 
\endaligned
\ee
\ese
\end{proposition}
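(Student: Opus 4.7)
The estimate is a weighted Hardy-type inequality on the spherical shell $\MME_s = \{r \ge \rhoH(s)\} \subset \mathbb{R}^3$. My plan is to apply the standard one-dimensional weighted Hardy inequality on each radial ray after decomposing $dx = r^2\,dr\,d\omega$. The key observation is that throughout $\MME_s$ we have $r - t \ge -1$, with equality exactly on the inner boundary (where the foliation prescribes $t = r+1$), hence $\aleph'(r-t)\equiv 1$ in $\MME_s$ and consequently
\begin{equation*}
\del_r \crochet(s,r)^{2\expeta-1} \;=\; 2\delta\,(1-\del_r \fTime(s,r))\,\crochet(s,r)^{2\expeta-2}.
\end{equation*}
This identity supplies the antiderivative of the left-hand side weight and is the engine of the estimate; the $\delta^{-1}$ factor in the conclusion ultimately stems from the $(2\expeta-1)^{-1} = (2\delta)^{-1}$ produced when inverting this relation.

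The concrete radial Hardy ingredient is as follows: for any $\phi$ with $\phi(\rhoH(s))=0$ and $\phi'(r) = r^2\,\crochet^{2\expeta-2}$, and for any $f$ decaying at infinity, integration by parts in $r$ and Cauchy-Schwarz yield
\begin{equation*}
\int_{\rhoH(s)}^\infty f^2\,r^2\,\crochet^{2\expeta-2}\,dr \;\le\; 4\int_{\rhoH(s)}^\infty (f')^2\,\frac{\phi(r)^2}{r^2\,\crochet^{2\expeta-2}}\,dr,
\end{equation*}
the vanishing of $\phi$ at the inner boundary eliminating the delicate contribution from $\{r=\rhoH(s)\}$ (the interface with the hyperboloidal region where $1-\del_r \fTime$ degenerates). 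Taking $\phi(r) = \int_{\rhoH(s)}^r \rho^2\,\crochet^{2\expeta-2}\,d\rho$, the primitive identity above together with the thinness of the merging layer (radial width $\rhoE-\rhoH = 1$) and direct asymptotic evaluation in the Euclidean zone yield $\phi(r)\lesssim \delta^{-1}\,r^2\,\crochet^{2\expeta-1}$ uniformly in $\MME_s$, so that $\phi^2/(r^2\crochet^{2\expeta-2})\lesssim \delta^{-2}\,r^2\,\crochet^{2\expeta}$. Integrating over $\omega\in S^2$ and using $|\del_r u|\le|\delsME u|$ produces the bulk estimate $\|\crochet^{-1+\expeta} u\|_{L^2(\MME_s)}\lesssim \delta^{-1}\|\crochet^\expeta\delsME u\|_{L^2(\MME_s)}$ for $u$ compactly supported in $r$.

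To remove the compact support assumption, I approximate by $\chi_R u$ with a smooth cutoff $\chi_R$ equal to $1$ for $r\le R$ and supported in $r\le 2R$. Since $|\chi_R'(r)|\lesssim 1/r$ on its support, the commutator contribution $\|\crochet^\expeta u\,\delsME\chi_R\|_{L^2}$ is bounded by $\|r^{-1}\crochet^\expeta u\|_{L^2(\MME_s)}$ uniformly in $R$; passing to $R\to\infty$ by monotone convergence produces the full first inequality, including the second right-hand side term. The $\zeta$-weighted version is handled by running the identical 1D Hardy argument with the weighted auxiliary function $\phi_\zeta(r) := \int_{\rhoH(s)}^r\zeta(s,\rho)^2\,\rho^2\,\crochet^{2\expeta-2}\,d\rho$; crucially, the $\zeta^2$ factor vanishes at the same order as $1-\del_r \fTime$ in the merging region (since $\zeta^2 = 1-(\del_r \fTime)^2$), so the ratio $\phi_\zeta(r)^2/(\zeta^2 r^2\crochet^{2\expeta-2}\cdot\zeta^2 r^2\crochet^{2\expeta})$ governing the 1D Hardy constant remains of order $\delta^{-2}$ throughout.

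\textbf{Main obstacle.} The principal technical hurdle is verifying $\phi(r)\lesssim\delta^{-1}r^2\,\crochet^{2\expeta-1}$ uniformly across the narrow merging region, where the natural integration-by-parts representation of $\phi$ involves the factor $1/(1-\del_r \fTime)$ that degenerates to $O(s^{-2})$ at the inner boundary. This is overcome by exploiting (i) the vanishing $\phi(\rhoH(s))=0$ by construction, and (ii) the unit radial width of the merging shell, which allows the direct pointwise estimate $\phi(r)\le \int_{\rhoH(s)}^r\rho^2\,\crochet^{2\expeta-2}\,d\rho \lesssim s^2$ to absorb the potentially singular factor there; the finer $\delta^{-1}$ bound then only needs to be produced in the pure Euclidean zone, where $\del_r \fTime \equiv 0$ and no such degeneration occurs.
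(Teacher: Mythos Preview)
The paper does not supply a proof of this proposition; it is stated at the very end of Appendix~B without argument, the surrounding text indicating that the functional inequalities are imported from the authors' companion work~\cite{PLF-YM-main}. There is therefore nothing in the present paper to compare your argument against directly.

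Your approach is the natural one and is essentially correct. The core mechanism---a one-dimensional weighted Hardy inequality along radial rays, driven by the primitive identity $\del_r\crochet^{2\expeta-1}=2\delta(1-\del_r\fTime)\crochet^{2\expeta-2}$ on $\MME_s$ (where indeed $\aleph'\equiv 1$ since $r-t\geq -1$), the vanishing of the antiderivative at the inner boundary, and a cutoff to handle the tail---is exactly what one expects for such an estimate and produces the stated $\delta^{-1}$ constant.

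Two small corrections. First, in your ``main obstacle'' paragraph the crude merging-region bound should read $\phi(r)\lesssim s^4$ rather than $s^2$ (the shell has width $1$ but $\rho^2\sim s^4$ there); this is harmless since $r^2\crochet^{2\expeta-1}\sim s^4$ in that region as well, so the conclusion stands. Second, for the $\zeta$-weighted variant the cleanest way to secure $\phi_\zeta(r)\lesssim\delta^{-1}\zeta(r)^2 r^2\crochet^{2\expeta-1}$ in the merging shell is to invoke the monotonicity of $\zeta$ in $r$ (it increases from $O(s^{-1})$ at $\rhoH(s)$ to $1$ at $\rhoE(s)$, since $\del_r\fTime$ decreases), which gives $\phi_\zeta(r)\leq\zeta(r)^2\phi(r)$ directly; your remark that ``$\zeta^2$ vanishes at the same order as $1-\del_r\fTime$'' points in the right direction but monotonicity is the sharper statement.
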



\begin{thebibliography}{99}

\bibitem{Bhattacharyya-2022}
{\sc S. Bhattacharyya}, 
On the equivalence between f(R) theories and Einstein gravity, 
Preprint arXiv:2212.04225.

\bibitem{BransDicke} 
{\sc C. Brans and R.H. Dicke,}
Mach principle and a relativistic theory of gravitation,
Phys. Rev. 124 (1961), 925--935. 

\bibitem{BENO} 
{\sc F. Briscese, E. Elizalde, S. Nojiri, and S.D. Odintsov,}
Phantom scalar dark energy as modified gravity: Understanding the origin of the Big Rip singularity,
Phys. Lett. B 646 (2007), 105--111.

\bibitem{Buchdahl} 
{\sc H.A. Buchdahl,}
Non-linear Lagrangians and cosmological theory,
Monthly Notices Royal Astr. Soc. 150 (1970), 1--8.

\bibitem{Bieri} 
{\sc L. Bieri,}
An extension of the stability theorem of the Minkowski space in general relativity,
J. Differential Geom. 86 (2010), 17--70.

\bibitem{BieriZipser} 
{\sc L. Bieri and N. Zipser,}
{\sl Extensions of the stability theorem of the Minkowski space in general relativity,}
AMS/IP Studies Adv. Math. 45. Amer. Math. Soc., International Press, Cambridge,  2009.

\bibitem{Bigorgne} 
{\sc L. Bigorgne,}
Propri\'et\'es asymptotiques des solutions \`a donn\'ees petites du syst\`eme de Vlasov-Maxwell (in French), 
Ph.D. thesis, June 2019, Universit\'e Paris-Saclay. 

\bibitem{Bigorgne2} 
{\sc L. Bigorgne, D. Fajman, J. Joudioux, J. Smulevici, and M. Thaller,}
Asymptotic stability of Minkowski spacetime with non-compactly supported massless Vlasov matter, 
Arch. Ration. Mech. Anal. 242 (2021), 1--147. 

\bibitem{CNO}
{\sc S. Capozziello, S. Nojiri, and S.D. Odintsov,}
Dark energy: the equation of state description versus scalar-tensor or modified gravity,
Phys. Lett. B 634 (2006), 93--100.

\bibitem{CNOT}
{\sc S. Capozziello, S. Nojiri, S.D. Odintsov, and A. Troisi,}
Cosmological viability of f(R)-gravity as an ideal fluid and its compatibility with a matter dominated phase,
Phys. Lett. B 639 (2006), 135--143.

\bibitem{CV} 
{\sc S. Capozziello and S. Vignolo,} 
The Cauchy problem for metric-affine f(R)-gravity in the presence of perfect-fluid matter, 
Class. Quantum Grav. 26, 175013.

\bibitem{YCB-book}
{\sc Y. Choquet-Bruhat},
{\sl General relativity and the Einstein equations}, Oxford Math. Monograph,
Oxford Univ. Press, 2009.

\bibitem{CK}
{\sc D. Christodoulou and S. Klainerman,}
{\sl The global nonlinear stability of the Minkowski space,}
Princeton Math. Ser. 41, 1993.

\bibitem{DIPP}
{\sc S. Dong, P.G. LeFloch, and Z. Wyatt,}
Global evolution of the U(1) Higgs Boson: nonlinear stability and uniform energy bounds, 
Annals Henri Poincar\'e 22 (2021), 677--713. 

\bibitem{FJS} 
{\sc D. Fajman, J. Joudioux, and J. Smulevici,}
A vector field method for relativistic transport equations with applications, 
Analysis \& PDE 10 (2017) 1539--1612.

\bibitem{FJS2} 
{\sc D. Fajman, J. Joudioux, and J. Smulevici,}
Sharp asymptotics for small data solutions of the Vlasov-Nordstr\"om system in three dimensions,
Preprint ArXiv:1704.05353. 

\bibitem{FJS3} 
{\sc D. Fajman, J. Joudioux, and J. Smulevici,}
The stability of the Minkowski space for the Einstein-Vlasov system,
Anal. PDE 14 (2021), 425--531. 

\bibitem{Foures} 
{\sc Y. Four\'es-Bruhat,}
Th\'eor\`emes d'existence pour certains syst\`emes d'\'equations aux d\'eriv\'ees partielles non-lin\'eaires,
Acta Math. 88 (1952), 42--225. 

\bibitem{FLP} 
{\sc R. Friedberg, T.D. Lee, and Y. Pang,}
Scalar soliton stars and black holes,
Phys. Rev. D 35 (1987), 3658.  

\bibitem{Friedrich81} 
{\sc H. Friedrich,}
On the regular and the asymptotic characteristic initial value problem for Einstein's vacuum field equations,
Proc. R. Soc. London Ser. A 375 (1981), 169--184.

\bibitem{Friedrich83} 
{\sc H. Friedrich,}
Cauchy problems for the conformal vacuum field equations in general relativity,
Commun. Math. Phys. 91 (1983), 445--472. 

\bibitem{Garcia} 
{\sc R. Garc\'{\i}a-Salcedo, R. Gonzalez, C. Moreno, Y. Napoles, Y. Leyva, and I. Quiros,}
Asymptotic properties of a supposedly regular (Dirac-Born-Infeld) modification of general relativity,
J. Cosm. Astr. Physics 2 (2010), 027.

\bibitem{Gasperin}
{\sc E. Gasper\'in, S. Gautam, D. Hilditch, and A. Va\~n\'o-Vi\~nuales}, 
The hyperboloidal numerical evolution of a good-bad-ugly wave equation, 
Class. Quantum Grav. 37 (2020), 035006. 

\bibitem{Gautam}
{\sc S. Gautam, A. Vañó-Viñuales, D. Hilditch, and S. Bose,}
Summation by parts and truncation error matching on hyperboloidal slices, 
Phys. Rev. D 103, 084045. 

\bibitem{HintzVasy1}
{\sc P. Hintz and A. Vasy,} 
The global non-linear stability of the Kerr-de Sitter family of black holes,
Acta Math. 220 (2018), 1--206. 

\bibitem{HintzVasy2}
{\sc P. Hintz and A. Vasy,}  
Stability of Minkowski space and polyhomogeneity of the metric,
Ann. PDE 6 (2020), no. 1, Paper No. 2, 146 pp. 

\bibitem{Hormander} 
{\sc L. H\"ormander,}
{\sl Lectures on nonlinear hyperbolic differential equations,}
Springer Verlag, Berlin, 1997.

\bibitem{HuneauStingo}
{\sc C. Huneau and A. Stingo,}
Global well-posedness for a system of quasilinear wave equations on a product space, 
Preprint ArXiv:2110.13982. 

\bibitem{IfrimStingo}
{\sc M. Ifrim and A. Stingo,} 
Almost global well-posedness for quasilinear strongly coupled wave-Klein-Gordon systems in two space
dimensions, Preprint ArXiv:1910.12673.

\bibitem{IP} 
{\sc A.D. Ionescu and B. Pausader,} 
Global solutions of quasi-linear systems of Klein-Gordon equations in 3D, 
J. Eur. Math. Soc. 16 (2015), 2355--2431.

\bibitem{IP-two} 
{\sc A.D. Ionescu and B. Pausader,} 
On the global regularity for a wave-Klein-Gordon coupled system, 
Acta Math. Sin. 35 (2019), 933--986. 

\bibitem{IP3} 
{\sc A.D. Ionescu and B. Pausader,} 
The Einstein-Klein-Gordon coupled system: global stability of the Minkowski solution,  
Princeton University Press, Princeton, NJ, to appear. See also ArXiv:1911.10652.  

\bibitem{Katayama12a}
{\sc S. Katayama,}
Global existence for coupled systems of nonlinear wave and Klein-Gordon equations in three space dimensions,
Math. Z. 270 (2012), 487--513.

\bibitem{Katayama12b}
{\sc S. Katayama,}
Asymptotic pointwise behavior for systems of semi-linear wave equations in three space dimensions,
J. Hyperbolic Differ. Equ. 9 (2012), 263--323.

\bibitem{KauffmanLindblad} 
{\sc C. Kauffman and H. Lindblad,}
Global stability of Minkowski space for the Einstein-Maxwell-Klein-Gordon system in generalized wave
coordinates, Preprint ArXiv:2109.03270.

\bibitem{Klainerman85}
{\sc S. Klainerman,}
Global existence of small amplitude solutions to nonlinear Klein-Gordon equations in four spacetime dimensions,
Comm. Pure Appl. Math. 38 (1985), 631--641.

\bibitem{Klainerman87}
{\sc S. Klainerman,}
Remarks on the global Sobolev inequalities in the Minkowski space $\RR^{n+1}$,
Comm. Pure Appl. Math. 40 (1987), 111--117.

\bibitem{KlainermanWangYang}
{\sc S. Klainerman, Q. Wang, and S. Yang,}
Global solution for massive Maxwell- Klein-Gordon equations, 
Comm. Pure Appl. Math. 73 (2020).  

\bibitem{LeFloch-Aachen} 
{\sc P.G. LeFloch,}
The global nonlinear stability of Minkowski spacetime for self-gravitating massive fields,  
Springer Nature 2018, C. Klingenberg and M. Westdickenberg (eds.), Theory, Numerics
and Applications of Hyperbolic Problems II, Springer Proc. Math. \& Statistics, Vol.~237.

\bibitem{PLF-YM-book}
{\sc P.G. LeFloch and Y. Ma,}
{\sl The hyperboloidal foliation method,}  World Scientific Press, 2014.

\bibitem{PLF-YM-CRAS}
{\sc P.G. LeFloch and Y. Ma},
The global nonlinear stability of Minkowski spacetime for the Einstein equations in presence of massive fields, 
Note C.R. Acad. Sc. Paris 354 (2016), 948--953.

\bibitem{PLF-YM-one}
{\sc P.G. LeFloch and Y. Ma},
The global nonlinear stability of Minkowski space for self-gravitating massive fields. The wave-Klein-Gordon model,
Comm. Math. Phys. 346 (2016), 603--665.

\bibitem{PLF-YM-memoir}
{\sc P.G. LeFloch and Y. Ma,}
{\sl The mathematical validity of the f(R)-theory of modified gravity,}
M\'em. French. Math. Soc., Vol.~150, 2017.

\bibitem{PLF-YM-two}{\sc P.G. LeFloch and Y. Ma},
{\sl The global nonlinear stability of Minkowski space for self-gravitating massive fields,} 
World Scientific Press, 2018. 

\bibitem{PLF-YM-lambda1}
{\sc P.G. LeFloch and Y. Ma}, 
Einstein-Klein-Gordon spacetimes in the harmonic near-Minkowski regime,
{Port. Math.} 79 (2022), 343--393.  

\bibitem{PLF-YM-main}
{\sc P.G. LeFloch and Y. Ma},  
Nonlinear stability of self-gravitating massive fields, {Annals of PDE} (2024), to appear.
(This paper was first distributed in 2017 as ArXiv:1712.10045.)   

\bibitem{PLF-YM-model} 
{\sc P.G. LeFloch and Y. Ma},  
Nonlinear stability of self-gravitating massive fields. 
A wave-Klein-Gordon model, 
{Class. Quantum Grav.} 40 (2023), 154001. 

\bibitem{LOT}
{\sc P.G. LeFloch, J. Oliver, and Y. Tsutsumi,} 
Boundedness of the conformal hyperboloidal energy for a wave-Klein-Gordon model,  
J. Evol. Equ. 23 (2023), 75. 

\bibitem{Lindblad-3}
{\sc H. Lindblad,} 
On the asymptotic behavior of solutions to the Einstein vacuum equations in wave coordinates,
Comm. Math. Phys. 353 (2017), 135--184.

\bibitem{LR1} 
{\sc H. Lindblad and I. Rodnianski,}
Global existence for the Einstein vacuum equations in wave coordinates,
Comm. Math. Phys. 256 (2005), 43--110.

\bibitem{LR2} 
{\sc H. Lindblad and I. Rodnianski,}
The global stability of Minkowski spacetime in harmonic gauge,
Ann. of Math. 171 (2010), 1401--1477.

\bibitem{LTay} 
{\sc H. Lindblad and M. Taylor,}  
Global stability of Minkowski space for the Einstein--Vlasov system in the harmonic gauge, 
Preprint ArXiv:1707.06079.

\bibitem{Ma} 
{\sc Y. Ma,} 
Global solutions of nonlinear wave-Klein-Gordon system in one
space dimension,  Nonlinear Anal. Theor. 191 (2020), 111--641.

\bibitem{YM-twoD} 
{\sc Y. Ma,} 
Global solutions of nonlinear wave-Klein-Gordon system in two spatial dimensions: a prototype of strong coupling case, 
J. Differ. Equations 287 (2021) 236--294.

\bibitem{MoncriefRinne} 
{\sc V. Moncrief and O. Rinne,}
Regularity of the Einstein equations at future null infinity,
Class. Quant. Grav. 26 (2009), 125010.

\bibitem{MagnanoSoko} 
{\sc G. Magnano and L.M. Sokolowski,}
On physical equivalence between modified gravity theories and a general relativistic self-gravitating scalar field,
Phys. Rev. D 50 (1994), 5039--5059.

\bibitem{MoncriefRinne} 
{\sc V. Moncrief and O. Rinne,}
Regularity of the Einstein equations at future null infinity,
Class. Quant. Grav. 26 (2009), 125010.

\bibitem{Nojiri}
{\sc S. Nojiri, S.D. Odintsov, and V.K. Oikonomou,}
Modified gravity theories on a nutshell: inflation, bounce and late-time evolution, 
Phys. Rept. 692 (2017), 1--104. 


\bibitem{Zenginoglu-5} 
{\sc R. Panosso-Macedo, B. Leather, N. Warburton, B. Wardell, and A. Zengino{\u{g}}lu,}
Hyperboloidal method for frequency-domain self-force calculations
 {Physical Review D} 105 (2022). 
 
\bibitem{RinneMoncrief} 
{\sc O. Rinne and V. Moncrief,}
Hyperboloidal Einstein-matter evolution and tails for scalar and Yang-Mills fields,
Class. Quantum Grav. 30 (2013), 095009.

\bibitem{SalgadoMartinez} 
{\sc M. Salgado and D. Martinez-del Rio,}
The initial value problem of scalar-tensor theories of gravity,
in: VII Mexican School on Gravitation and Mathematical Physics,
J. Phys. Conf. Ser. 91 (2007), 012004.

\bibitem{Shen}
{\sc D. Shen,}
Global stability of Minkowski spacetime with minimal decay,
ArXiv:2310.07483. 

\bibitem{Sotiriou-2010}
{\sc T.P. Sotiriou and C. Faraoni,} 
f(R) theories of gravity, 
Rev. Modern Phys. 82 (2010), 451--497.

\bibitem{Starobinsky}
{\sc A.A. Starobinsky,}
Spectrum of relict gravitational radiation and the early state of the Universe,
J. Exper. Theor. Phys. Letters 30 (1979)), 682. 

\bibitem{Starobinsky-1980} 
{\sc A.A. Starobinsky,}
A new type of isotropic cosmological models without singularity,
Phys. Lett. B 91 (1980), 99.

\bibitem{Tataru96} 
{\sc D. Tataru,} 
Strichartz estimates in the hyperbolic space and global existence for the semilinear wave equation,
Trans. Amer. Math. Soc. 353 (2001), 795--807.  

\bibitem{Vano1}
{\sc A. Va\~n\'o-Vi\~nuales, S. Husa, and D. Hilditch,}
Spherical symmetry as a test case for unconstrained hyperboloidal evolution, 
Class. Quantum Grav. 32 (2015) 175010. 

\bibitem{Vano2}
{\sc A. Va\~n\'o-Vi\~nuales and S. Husa,} 
Spherical symmetry as a test case for unconstrained hyperboloidal evolution II: gauge conditions, 
 Class. Quantum Grav. 35 (2018), 045014. 
 
\bibitem{Vano3}
{\sc A. Va\~n\'o-Vi\~nuales,}
Spherically symmetric black hole spacetimes on hyperboloidal slices, 
Front. Appl. Math. Stat. 10(2023), 
Sec. Statistical and Computational Physics. 

\bibitem{Wang} 
{\sc Q. Wang}, 
An intrinsic hyperboloid approach for Einstein Klein-Gordon equations,  
J. Differential Geom. 115 (2020), 27--109.  

\bibitem{Woodard-2007}
{\sc R. P. Woodard}, 
Avoiding dark energy with 1/R modifications of gravity,
Phys. Lett. B 652 (2007), 213.  

\bibitem{Zenginoglu} 
{\sc A. Zenginoglu,}
Hyperboloidal evolution with the Einstein equations,
Class. Quantum Grav. 25 (2008), 195025.

\end{thebibliography}
\end{document}